\numberwithin{equation}{section}
\declaretheoremstyle[bodyfont=\it,qed=$\triangle$]{noproofstyle}
\declaretheorem[numberlike=equation]{theorem}
\declaretheorem[numberlike=equation,name=Theorem,style=noproofstyle]{theoremwp}
\declaretheorem[numberlike=equation]{lemma}
\declaretheorem[numberlike=equation,name=Lemma,style=noproofstyle]{lemmawp}
\declaretheorem[numberlike=equation]{corollary}
\declaretheorem[numberlike=equation,name=Corollary,style=noproofstyle]{corollarywp}
\declaretheorem[unnumbered,name=Theorem,style=noproofstyle]{theorem*}
\declaretheorem[unnumbered,name=Lemma,style=noproofstyle]{lemma*}
\declaretheorem[unnumbered,name=Corollary,style=noproofstyle]{corollary*}
\declaretheorem[unnumbered,name=Proposition,style=noproofstyle]{proposition*}
\declaretheorem[unnumbered,name=Claim]{claim*}
\declaretheorem[unnumbered,name=Conjecture]{conjecture*}
\declaretheorem[unnumbered,name=Question]{question*}
\declaretheorem[name=Theorem,style=noproofstyle]{thmm}
\declaretheoremstyle[qed=$\lozenge$]{defstyle} 
\declaretheorem[numberlike=equation,style=defstyle]{definition}
\declaretheorem[unnumbered,name=Definition,style=defstyle]{definition*}
\declaretheorem[unnumbered,name=Example,style=defstyle]{example*}
\declaretheorem[unnumbered,name=Notation=defstyle]{notation*}
\declaretheorem[numberlike=equation,style=defstyle]{construction}
\declaretheorem[unnumbered,name=Construction,style=defstyle]{construction*}
\declaretheorem[numberlike=equation,style=defstyle]{remark}
\declaretheorem[unnumbered,name=Remark,style=defstyle]{remark*}
\newenvironment{innerproof}[1]{\proof}{\endproof}
\newenvironment{proof-sketch}{\medskip\noindent{\em Sketch of
Proof.}\hspace*{1em}}{\qed\bigskip}
\newenvironment{proof-attempt}{\medskip\noindent{\em Proof attempt.}\hspace*{1em}}{\bigskip}
\newenvironment{proof-of}[1]{\medskip\noindent\emph{Proof of #1.}\hspace*{1em}}{\qed\bigskip}
\newcommand{\inparen }[1]{\left(#1\right)}             
\newcommand{\inbrace }[1]{\left\{#1\right\}}
\DeclareMathOperator{\rank}{rank}
\DeclareMathOperator{\Span}{span}
\DeclareMathOperator{\supp}{supp}
\newcommand{\ceil}[1]{\lceil #1 \rceil}
\newcommand{\floor}[1]{\lfloor #1 \rfloor}
\newcommand{\set}[1]{\inbrace{#1}}
\newcommand{\ip}[2]{\left\langle{#1},{#2}\right\rangle}
\newcommand{\F}{\mathbb{F}}
\newcommand{\N}{\mathbb{N}}
\newcommand{\Z}{\mathbb{Z}}
\newcommand{\naive}{na\"{\i}ve\xspace}
\newcommand{\veca}{{\vec{a}}}
\newcommand{\vecaa}{{\vec{\alpha}}}
\newcommand{\vecb}{{\vec{b}}}
\newcommand{\vecbb}{{\vec{\beta}}}
\newcommand{\vecc}{{\vec{c}}}
\newcommand{\vecf}{{\vec{f}}}
\newcommand{\vecg}{{\vec{g}}}
\newcommand{\vecs}{{\vec{s}}}
\newcommand{\vect}{{\vec{t}}}
\newcommand{\vecv}{{\vec{v}}}
\newcommand{\vecx}{{\vec{x}}}
\newcommand{\vecy}{{\vec{y}}}
\newcommand{\vecz}{{\vec{z}}}
\newcommand{\GSV}[1]{\mathcal{G}^{\text{SV}}_{#1}}
\newcommand{\GKS}[1]{\mathcal{G}^{\text{KS}}_{#1}}
\newcommand{\cC}{\mathcal{C}}
\newcommand{\cG}{\mathcal{G}}
\newcommand{\cGH}{\mathcal{G}^\text{H}}
\newcommand{\GFS}{\mathcal{G}^\text{FS}}
\newcommand{\cGHFS}{\mathcal{G}^\text{H+FS}}
\newcommand{\cH}{\mathcal{H}}
\renewcommand{\O}{\mathcal{O}}
\newcommand{\bits}{\{0,1\}}
\newcommand{\ignore}[1]{}
\newcommand{\eqdef}{:=}
\DeclareMathOperator{\rspan}{row-span}
\DeclareMathOperator{\cspan}{col-span}
\renewcommand{\epsilon}{\varepsilon}
\renewcommand{\phi}{\varphi}
\renewcommand{\vec}[1]{\boldsymbol{#1}}
\DeclareMathOperator{\Coeff}{Coeff}
\newcommand{\coeff}[2]{\Coeff_{#1}\inparen{#2}}
\newcommand{\deriv}{\partial}
\newcommand{\ind}[1]{\mathbbm{1}_{#1}}
\newcommand{\zr}[1]{{\llbracket {#1}\rrbracket}}
\title{Pseudorandomness for Multilinear Read-Once Algebraic Branching Programs, in any Order}
\author{
Michael A.\ Forbes\thanks{Email: \texttt{miforbes@mit.edu},
Department of Electrical Engineering and Computer Science, MIT CSAIL, 32 Vassar St.,
Cambridge, MA 02139. This work supported by the Center for Science of Information (CSoI), an NSF Science and Technology Center, under grant agreement CCF-0939370.}
\and Ramprasad Saptharishi\thanks{Email: \texttt{ramprasad@cmi.ac.in}, Microsoft Research India, Bangalore, India.}
		\and
Amir Shpilka\thanks{Faculty of Computer Science, Technion --- Israel Institute of Technology, Haifa, Israel,
\texttt{shpilka@cs.technion.ac.il}.  The research leading to these results has received funding
from the European Community's Seventh Framework Programme (FP7/2007-2013) under grant agreement number 257575.}}
\begin{document}
\maketitle

\begin{abstract}
	\sloppy We give deterministic black-box polynomial identity testing algorithms for multilinear read-once oblivious algebraic branching programs (ROABPs), in $n^{\O(\lg^2 n)}$ time.\footnote{All logarithms are base 2\ in this paper, except where otherwise noted.}  Further, our algorithm is oblivious to the order of the variables.  This is the first sub-exponential time algorithm for this model. Furthermore, our result has no known analogue in the model of read-once oblivious \emph{boolean} branching programs with unknown order, as despite recent work (eg. \cite{BogdanovPW11,ImpagliazzoMZ12,ReingoldSV13}) there is no known pseudorandom generator for this model with sub-polynomial seed-length (for unbounded-width branching programs).

	This result extends and generalizes the result of Forbes and Shpilka~\cite{ForbesShpilka12a} that obtained a $n^{\O(\lg n)}$-time algorithm when given the order. We also extend and strengthen the work of Agrawal, Saha and Saxena~\cite{AgrawalSS12} that gave a black-box algorithm running in time $\exp((\lg n)^{\Omega(d)})$ for set-multilinear formulas of depth $d$. We note that the model of multilinear ROABPs contains the model of set-multilinear algebraic branching programs, which itself contains the model of set-multilinear formulas of arbitrary depth. We obtain our results by recasting, and improving upon, the ideas of Agrawal, Saha and Saxena~\cite{AgrawalSS12}.  We phrase the ideas in terms of \emph{rank condensers} and \emph{Wronskians}, and show that our results improve upon the classical multivariate Wronskian, which may be of independent interest.

	\fussy In addition, we give the first $n^{\O(\lg\lg n)}$ black-box polynomial identity testing algorithm for the so called model of diagonal circuits. This model, introduced by Saxena~\cite{Saxena08} has recently found applications in the work of Mulmuley~\cite{Mulmuley12}, as well as in the work of Gupta, Kamath, Kayal, Saptharishi~\cite{GuptaKKS13}. Previously, Agrawal, Saha and Saxena~\cite{AgrawalSS12}, Forbes and Shpilka~\cite{ForbesShpilka12a}, and Forbes and Shpilka~\cite{ForbesShpilka13} had given $n^{\Theta(\lg n)}$-time algorithms for this class. More generally, our result holds for any model computing polynomials whose partial derivatives (of all orders) span a low dimensional linear space. 
\end{abstract}

\section{Introduction}

Polynomial Identity Testing (PIT for short) is the problem of {\em deterministically} deciding whether a given algebraic circuit computes the identically zero polynomial. Recall that algebraic circuits are the algebraic analogs of boolean circuits that use the algebraic operations $\{+,\times\}$ to compute polynomials over an underlying field $\F$. The complexity of the problem comes from the requirement that the algorithm be deterministic. Indeed, efficient randomized algorithms follow easily from the Demillo-Lipton-Schwartz-Zippel Lemma~\cite{DemilloL78,Schwartz80,Zippel79}. The importance of the PIT problem stems from its many applications --- it is tightly connected to the problem of obtaining lower bounds for algebraic circuits (see for example, \cite{HeintzSchnorr80,KabanetsImpagliazzo04,Agrawal05,DSY09}) and it has applications in algorithms (see for example, \cite{KUW86,MVV87,AKS04}). For more on PIT see the survey of Shpilka and Yehudayoff~\cite{SY10}.  

PIT comes in two flavors. In the \emph{white-box} setting, the algorithm is given the circuit, namely, it is given the graph of computation. In the \emph{black-box} setting, the algorithm can access the circuit only through queries. A black-box PIT algorithm can be observed to be equivalent to a \emph{hitting set}, which is a small set of points so that any nonzero polynomial will evaluate to nonzero on one of these points. Thus, the black-box model allows weaker access, and thus algorithms in this model are correspondingly stronger results.  We seek algorithms in this restricted model, as they more closely match the performance of the randomized algorithm, which is black-box.

In recent years, there has been much work on PIT focused on restricted models of algebraic circuits. From the works of Agrawal and Vinay~\cite{AgrawalVinay08} and Gupta, Kamath, Kayal and Saptharishi~\cite{GuptaKKS13} it follows that solving PIT for depth-$3$ circuits would yield a solution to the case of general algebraic circuits. This has increased the focus on restricted models, and yielded a long line of works focused on PIT for depth-$3$ and depth-$4$ circuits \cite{DvirShpilka06, KayalSaxena07, KarninShpilka08,KayalSaraf09,SaxenaSeshadhri11,KMSV10,SarafVolkovich11,SahaSS11}.

\sloppy Another active line of research has considered models of computation with other types of restrictions. One line considers models that are \emph{set-multilinear}, in particular focusing on \emph{set-multilinear algebraic branching programs (ABPs)} \cite{RazShpilka05,ArvindMS08,ForbesShpilka12,ForbesShpilka12a,AgrawalSS12,ForbesShpilka13}. One reason for the focus on this model is that it captures the {\em partial derivative} technique that is the main technique today for proving lower bounds for algebraic circuits. Another reason for the interest shown in this model is that it is the algebraic analog for the branching program model in boolean complexity. In fact, the PIT problem for \emph{read-once oblivious ABPs (ROABPs)} can be viewed as the algebraic analog of the famous \ComplexityFont{L} vs. \RL\ question. Indeed, the PIT algorithm of Forbes and Shpilka~\cite{ForbesShpilka12a} for ROABPs is very close in nature to Nisan's~\cite{Nisan92} pseudo-random generator (PRG) for read-once oblivious (boolean) branching programs, which are a non-uniform version of randomized log-space Turing machines. For more on ROABPs we refer the reader to the introduction of the paper Forbes-Shpilka~\cite{ForbesShpilka12a}.

\fussy The work of Forbes and Shpilka~\cite{ForbesShpilka12a} highlighted a difference between known results for the boolean case and the algebraic case. While over the boolean domain we do not have an improvement to Nisan's PRG when the branching program is of bounded width, in the algebraic setting Forbes and Shpilka~\cite{ForbesShpilka12a} give an $n^{\lg n/\lg\lg n}$ PIT algorithm for the case of bounded width ROABPs, which, intuitively, would correspond to a PRG with seed-length $\O(\lg^2n/\lg\lg n)$. Obtaining such a seed length in the boolean setting is a long-standing open problem (see \cite[Open Problem 8.6]{Vadhan12}).  

In this work, we obtain new PIT algorithms for ROABPs that, again, do not yet have boolean analogues.  Previous work on read-once branching programs, for both the boolean and algebraic models, only worked for a fixed variable order.  That is, the branching program is read-once and reads each variable in a fixed \emph{known} order.  If this order is known, then black-box PIT algorithms, or PRG constructions, exist in previous work.  However, once the order becomes unknown, much less is known.  However, recently Agrawal, Saha, and Saxena~\cite{AgrawalSS12} introduced the idea of \emph{shifting} and \emph{rank concentration}, and using these ideas gave black-box PIT algorithms for sub-models of unknown-order ROABPs.  In this work, we extend and reinterpret their ideas, and consequently construct a quasipolynomial time black-box PIT algorithm for (multilinear) ROABPs when the order of the variables is unknown. In contrast, in the boolean setting the best algorithm when the order of the variables is unknown runs in time $\exp(\sqrt{n})$, due to Impagliazzo, Meka and Zuckerman~\cite{ImpagliazzoMZ12}.

\subsection{The models}\label{sec:the model}

In this work, we consider various restricted models of computation, which we now define.  They are all variants, or sub-models, of algebraic branching programs (ABPs).  Algebraic branching programs were first defined in the work of Nisan~\cite{Nisan91} who proved exponential lower bounds on the size of non-commutative ABPs computing the determinant or permanent polynomials. We shall consider several variants of the basic model.

\begin{definition}[Nisan~\cite{Nisan91}]\label{def: ABP}
	An \emph{algebraic branching program with unrestricted weights} of depth $D$ and width $\le r$, on the variables $x_1,\ldots,x_n$, is a directed acyclic graph such that
	\begin{itemize}

		\item The vertices are partitioned in $D+1$ layers $V_0,\ldots,V_D$, so that $V_0=\{s\}$ ($s$ is the source node), and $V_D=\{t\}$ ($t$ is the sink node). Further, each edge goes from $V_{i-1}$ to $V_{i}$ for some $0< i\le D$.
		\item $\max|V_i|\le r$.
		\item Each edge $e$ is weighted with a polynomial $f_e\in\F[\vecx]$.

	\end{itemize}

	Each $s$-$t$ path is said to compute the polynomial which is the product of the labels of its edges, and the algebraic branching program itself computes the sum over all $s$-$t$ paths of such polynomials.
	\begin{itemize}

		\item In an \emph{algebraic branching program (ABP)}, for each edge $e$ the weight $f_e(\vecx)$ is an affine function.  The \emph{size} is $nDr$.
		\item In a \emph{read-once oblivious ABP (ROABP)} of (individual) degree $<d$, we have $n=D$, and there is some permutation $\pi:[n]\to [n]$ such that for each edge $e$ from $V_{i-1}$ to $V_i$, the weight is a univariate polynomial $f_e(x_{\pi(i)})\in\F[x_{\pi(i)}]$ of degree $<d$. The \emph{size} is $ndr$.  We say the \emph{variable order is known} if this permutation is the identity permutation, and it is \emph{unknown} otherwise.

		\item A polynomial is computed by a \emph{width$-r$ commutative ROABP}, if it is computable by a width-$r$ ROABP in \emph{every} variable order.

		\item In a \emph{set-multilinear algebraic branching program (SMABP)} the $dn$ variables $\{x_1,\ldots,x_{dn}\}$ are partitioned into $d$ sets of size $n$, $\vecx=\vecx_1\sqcup \cdots \sqcup \vecx_d$. For each edge $e$ from $V_{i-1}$ to $V_i$, $f_e$ is a (homogeneous) linear function in the variable set $X_i$. We say the \emph{partition is known} if  $\vecx_i = \{x_{(i-1)n+1},\ldots,x_{(i-1)n+n}\}$  and it is \emph{unknown} otherwise.\qedhere

	\end{itemize}
\end{definition}

\sloppy Note that a polynomial computed by a commutative ROABP is trivially computed by a ROABP. This class of polynomials coincide with the notion of polynomials with low \emph{evaluation dimension}, a notion of Saptharishi, as reported in Forbes-Shpilka~\cite{ForbesShpilka12a}.

\sloppy The definition of a SMABP is in line with the term coined by Nisan and Wigderson~\cite{NisanWigderson96} that defined a set-multilinear monomial to be a multilinear monomial that contains exactly one variable from each $\vecx_i$ and a set-multilinear polynomial as a polynomial consisting of set-multilinear monomials.  Thus, it is clear from the definition that a set-multilinear ABP computes a set-multilinear polynomial.  It is also not hard to see that any set-multilinear polynomial can be computed by a set-multilinear ABP, but perhaps requiring large size.

\fussy Another model that we consider in this paper is the so called {\em diagonal circuit} model, first defined by Saxena~\cite{Saxena08}. While this model is somewhat less motivated than small depth circuits or ABPs, it has found recent applications in the work of Mulmuley~\cite{Mulmuley12} concerning derandomization of Noether's Normalization Lemma, see also Forbes-Shpilka~\cite{ForbesShpilka13} for a discussion.  Furthermore, this model and the tools used to understand it were used in the recent work of Gupta-Kamath-Kayal-Saptharishi~\cite{GuptaKKS13}.  Lastly, this model is the weakest model known for which polynomial-time black-box PIT algorithms are not currently known. 

\begin{definition}[Diagonal circuits]\label{def: diagonal}
	A \emph{diagonal circuit} in the $n$ variables $\vecx$, of degree $d$, and size $s$, computes polynomials $f(\vecx)$ of degree at most $d$ that can be expressed as a sum of $s$ powers of linear functions. Namely, $f(\vecx)=\sum_{i=1}^{s}L_i(\vecx)^{d_i}$, where each $L_i$ is a linear function and $d_i\leq d$. 
\end{definition}

Diagonal circuits are examples of polynomials where the space of partial derivatives is low-dimensional.  We present the formal definition of this notion in \autoref{sec:hashing+FS}.

\subsection{Our results}\label{sec:our results}

We now give a summary of our results, and start by defining the notion of a hitting set, which is the object we seek to construct.

\begin{definition}[Hitting Set]
	Let $\cC$ be a class of  $n$-variate polynomials, with coefficients in $\F$.   A set $\cH\subseteq \F^n$ is a \emph{hitting set for $\cC$} if for all $f\in\cC$, $f\equiv 0$ iff $f|_\cH\equiv 0$.

	The hitting set $\cH$ is \emph{$t(n)$-explicit} if given an index into $\cH$, the corresponding element of $\cH$ can be computed in $t(n)$-time.
\end{definition}

Our main result is an explicit quasipolynomial-size hitting set for multilinear ROABPs, for unknown variable order.  As a special case (see \autoref{lem:smabp-to-roabp}), we obtain the same parameter hitting sets for set-multilinear ABPs, when the partition to the sets is unknown.

\begin{thmm}[Hitting set for multilinear unknown order ROABPs, informal version of \autoref{hit-set-noncommut}]\label{mainthm: non-commutative}
	Let $\cC$ be the set of $n$-variate multilinear polynomials computable by a width-$r$ ROABP in any variable order.  If $|\F|\ge \poly(n,r)$, then $\mathcal{C}$ has a $\poly(n,r)$-explicit hitting set, of size $\le \poly(n,r)^{\O(\lg r\cdot \lg n)}$.
\end{thmm}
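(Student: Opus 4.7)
The plan is to reduce PIT for multilinear width-$r$ ROABPs in unknown variable order to PIT for polynomials whose coefficients are \emph{concentrated} on low-support monomials, and then to hit the latter class via an order-oblivious Shpilka--Volkovich-style generator. The reduction follows the ``shift then concentrate'' paradigm of Agrawal--Saha--Saxena, which I would reformulate through rank condensers and Wronskians in order to work with respect to every variable order simultaneously.

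First, I would establish the structural handle: if $f$ is multilinear and computed by a width-$r$ ROABP in some (unknown) order $\pi$, then for every prefix partition $\vec{x} = \vec{y} \sqcup \vec{z}$ along $\pi$, the coefficient polynomials of $f$ in the $\vec{z}$-monomials span an $\F$-subspace of $\F[\vec{y}]$ of dimension at most $r$. This is the standard partial-coefficient characterization of ROABP complexity. I would then construct an explicit family of shifts $\{\vec{\alpha}(t) \in \F^n\}_t$ with the following \emph{rank concentration} property: for every $f$ as above, there is some $t$ in the family for which the coefficients of $f(\vec{x} + \vec{\alpha}(t))$ supported on monomials of support at most $\ell = \O(\lg r)$ already span the entire coefficient space of $f$. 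The construction uses shifts whose coordinates are evaluations of carefully chosen univariate polynomials at $t$; correctness is proved by a Wronskian-style nonsingularity argument showing that the map from low-support coefficients to all coefficients is, as a polynomial in $t$, generically of full rank. Crucially, because $\pi$ is unknown, the shift must achieve concentration with respect to every prefix partition simultaneously, and this is where the rank-condenser reformulation of the multivariate Wronskian becomes essential---this is also the main obstacle, since the existing Agrawal--Saha--Saxena constructions have parameters that scale poorly with the number of orderings (equivalently, with formula depth), and closing that gap is what enables a ROABP-wide guarantee.

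Once rank concentration is achieved, $f(\vec{x} + \vec{\alpha}(t))$ is determined by its coefficients on monomials of support at most $\ell$, so any hitting set $\cH_{\text{low}}$ for polynomials of support at most $\ell$---for instance a Shpilka--Volkovich-style generator of size $\poly(n)^{\O(\ell)} = \poly(n)^{\O(\lg r)}$---finds a nonzero evaluation. The final hitting set takes the form $\cH = \{\vec{\alpha}(t) + \vec{\beta} : t \in T,\ \vec{\beta} \in \cH_{\text{low}}\}$. Parameter tracking of the shift construction (which naturally decomposes into $\O(\lg r)$ stages of rank-boosting, each costing a multiplicative $\poly(n,r)^{\O(\lg n)}$ in the size of the shift family) combined with the low-support hitter yields the claimed total size $\poly(n,r)^{\O(\lg r \cdot \lg n)}$, and the resulting points are $\poly(n,r)$-explicit because both the shift family and $\cH_{\text{low}}$ are.
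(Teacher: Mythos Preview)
Your high-level plan---shift to achieve support-$\O(\lg r)$ rank concentration, then hit with the Shpilka--Volkovich generator---matches the paper. But the mechanism you sketch for obtaining concentration in the unknown-order setting is where the real content lies, and your proposal does not supply it. You frame the problem as ``achieve concentration with respect to every prefix partition simultaneously'' and then acknowledge this as the main obstacle without resolving it; the paper's resolution is not a union-bound over prefixes or orderings at all.

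Concretely, the paper works with the matrix product $F(\vecx)=\prod_i M_i(x_{\pi(i)})\in\F[\vecx]^{r\times r}$ and runs a \emph{merge-and-reduce} recursion of depth $\lg n$ (not $\lg r$): at each level, adjacent blocks $G(\vecy)$ and $H(\vecz)$ each already have support-$\O(\lg r)$ rank concentration, their product $G\cdot H$ trivially has support-$\O(2\lg r)$ concentration by bilinearity, and a single fresh shift restores support-$\O(\lg r)$ concentration on the merged block. The crucial technical device enabling this reduce step is a \emph{rank condenser for a known basis}: because $GH$ is already concentrated on monomials of total degree $\O(\lg r)$ (in the multilinear case), the shift only needs to be a monomial map that keeps monomials of that bounded total degree distinct, not all monomials. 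This is supplied by the Klivans--Spielman generator, which gives an explicit total-degree-$(<D)$ independent monomial map with $\O(D)$ variables, so each level costs $\poly(n,r)^{\O(\lg r)}$ in seed length. Summing $\lg n$ independent shifts $u_i\cdot\vecg(\vect_i,\vecs_i)$ gives the final shift; the order $\pi$ is used only in the \emph{analysis} (to say which variables form adjacent blocks), while the construction is oblivious to it.

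So your parameter accounting is inverted---the paper has $\lg n$ stages each of cost $\poly(n,r)^{\O(\lg r)}$, not $\lg r$ stages of cost $\poly(n,r)^{\O(\lg n)}$---and the missing ingredients are (i) the recursion on the matrix-product structure rather than on prefix partitions, (ii) the known-basis rank condenser that lets each shift exploit the concentration already achieved, and (iii) the use of the KS generator (not SV) for the shifts, since one needs a single substitution preserving \emph{all} low-total-degree monomials simultaneously.
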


For polynomials with individual degree $d$ (so each variable has degree $d$), the exponent in the hitting set increases by a factor of $d$.  To achieve this result, we used the ideas of Agrawal, Saha and Saxena~\cite{AgrawalSS12}, who gave results for sub-models of unknown-order ROABPs. As we reinterpret their ideas, we rederive some of their results. One such result we give below.

\begin{thmm}[Hitting set for commutative ROABPs, informal version of \autoref{hit-set-commut}]\label{mainthm: q-p commutative}
	Let $\cC$ be the set of $n$-variate polynomials with individual degree $<d$ that are computable by a width-$r$ commutative ROABPs (and thus computable by a ROABP in \emph{every} variable order).  If $|\F|\ge \poly(n,d)$, then $\mathcal{C}$ has a $\poly(n,d,r)$-explicit hitting set, of size $\le \poly(n,d)^{\O(\lg r)}$.
\end{thmm}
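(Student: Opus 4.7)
The plan is to exploit the defining property of a width-$r$ commutative ROABP $f$: its \emph{evaluation dimension} is at most $r$ in every variable order, meaning that for any partition $[n] = S \sqcup S^c$, the space $\Span\{f|_{\vecx_{S^c} = \vec{\beta}} : \vec{\beta} \in \F^{S^c}\} \subseteq \F[\vecx_S]$ has dimension at most $r$. Equivalently, writing $f = \sum_{\veca} c_\veca \vecx^\veca$ and grouping coefficients by the $\vecx_S$-part, the resulting $\vecx_{S^c}$-indexed vectors span a subspace of dimension at most $r$. Following Agrawal-Saha-Saxena, I would then establish \emph{low-support rank concentration}: there is a shift $\vec{\alpha} \in \F^n$ such that for $\tilde{f}(\vecx) \defeq f(\vecx + \vec{\alpha})$, the coefficients of $\tilde f$ on monomials $\vecx^\veca$ with $|\supp(\veca)| \le \lceil \lg r \rceil$ already span the full ``coefficient space'' of $\tilde f$ (in the vector-space sense derived from the width-$r$ structure). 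Hence $\tilde f \equiv 0$ iff every such low-support coefficient vanishes. The proof is a greedy rank-increase argument: if the low-support coefficients span only a strict subspace, a generic shift produces a slightly-larger-support monomial whose coefficient escapes that span; since the ambient dimension is $\mathrm{poly}(r)$, at most $\O(\lg r)$ such jumps can occur.

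\medskip

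\noindent\textbf{Explicit shift via rank condensers.} Since the algorithm must be black-box, $\vec{\alpha}$ cannot depend on $f$. I would invoke the rank-condenser / Wronskian machinery of the paper to produce an explicit family of shifts $\vec{\alpha}_1, \ldots, \vec{\alpha}_m$ of size $m = \mathrm{poly}(n,d)^{\O(\lg r)}$ such that for every width-$r$ commutative ROABP $f$, some $\vec{\alpha}_i$ achieves the above rank concentration. The construction reduces to a rank-preservation statement for a matrix whose rows are indexed by monomials of support $\le \lg r$ and whose columns are indexed by candidate shifts, which is exactly what Wronskian-based rank condensers are designed to handle.

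\medskip

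\noindent\textbf{Sparse PIT for low-support polynomials.} After fixing a good shift, it suffices to hit the class of polynomials whose only nonzero monomials have support $\le \lg r$ and individual degree $< d$. There are at most $\binom{n}{\lg r} d^{\lg r} = (nd)^{\O(\lg r)}$ such monomials, so this is a $(nd)^{\O(\lg r)}$-sparse PIT instance, which admits a $\mathrm{poly}(n,d)^{\O(\lg r)}$-size explicit hitting set (e.g.\ via Klivans--Spielman-style generators). Combining the shift family with this sparse-PIT hitting set yields an overall hitting set of size $\mathrm{poly}(n,d)^{\O(\lg r)}$, matching the claim.

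\medskip

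The principal obstacle is the rank-concentration step itself: one must pin down the correct ``coefficient space'' interpretation so that rank concentration is simultaneously attainable by a generic shift \emph{and} strong enough to certify zeroness from low-support data alone; in particular, one must verify that a single generic shift suffices rather than a separate shift per partition $(S, S^c)$, since the black-box algorithm cannot afford to iterate over partitions. Once this is in place, the black-box choice of shift becomes a rank-preservation question for Wronskian matrices, and the low-support hitting set reduces to standard sparse PIT, so the quasipolynomial bound drops out.
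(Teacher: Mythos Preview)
Your overall architecture --- shift to obtain low-support rank concentration, then hit polynomials guaranteed to have a low-support monomial --- matches the paper. However, the two central steps are not pinned down and, as written, do not yield the claimed bound.

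First, the ``greedy rank-increase'' argument for why support $\le \lg r$ suffices is not correct: the ambient coefficient space has dimension $r^2$, so counting rank jumps gives support $\O(r^2)$, not $\O(\lg r)$. The paper's reason is combinatorial and quite different: among any $r$ monomials (viewed as strings in $\zr{d}^n$) one can always isolate some monomial by restricting to $\le \lg r$ coordinates (\autoref{lem:partialid}), and this translates into the transfer matrix $T_r(\vec{1})$ being the parity-check matrix of a distance-$>r$ code, hence a rank condenser. Second, and more importantly, you have not explained why the \emph{explicit} family of shifts has size only $\poly(n,d)^{\O(\lg r)}$ rather than $\poly(n,d)^{\O(n)}$. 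The rank-condenser recipe by itself requires the weight matrix $W$ to assign distinct monomials to \emph{all} $d^n$ columns (an $n$-wise independent monomial map), which is far too expensive. The paper's key step --- and the only place commutativity is actually used --- is \autoref{thm:rank-concentration-comm}: for any monomial $\vecx^{\veca_0}$ of support $>\lg r^2$, commutativity lets one factor $F(\vecx)=G(\vecy)H(\vecz)$ with $|\vecy|=\lg r^2+1$, so it suffices to achieve rank concentration on $G$ alone, i.e.\ on every size-$(\lg r^2+1)$ subset of variables. This is exactly why an $\O(\lg r)$-wise independent monomial map (the SV-generator $\GSV{n,\O(\lg r)}$) suffices as the shift, and is what produces the $\poly(n,d)^{\O(\lg r)}$ size. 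Your proposal flags the ``single shift vs.\ per-partition shift'' issue as the principal obstacle but does not supply this resolution. Finally, a minor point: after shifting, the polynomial does \emph{not} have only low-support monomials; it merely has \emph{some} nonzero low-support monomial, so ``sparse PIT'' is the wrong framing --- one instead zeroes out all but $\lg r$ variables (equivalently, composes with $\GSV{n,\O(\lg r)}$), exactly as in \autoref{lem:GSV-hit-ss}.
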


The proofs of the above results, while yielding sub-exponential-size hitting sets, do not readily suggest a path to obtaining polynomial-size hitting sets.  In particular, at first glance it seems difficult to see how the above proof strategies could yield hitting sets of size $n^{o(\lg n)}$.  However, our next result shows that for certain sub-models of commutative ROABPs, such as diagonal circuits, we can in fact achieve $n^{\O(\lg\lg n)}$-size hitting sets.

\begin{thmm}[Informal version of \autoref{nlglgn}]\label{mainthm: commutative}
	Let $\cC$ be the set of $n$-variate polynomials of degree $d$ computable by size $s$ diagonal circuits. If $|\F|\ge \poly(n,d,s)$, then $\mathcal{C}$ has a $\poly(n,d,s)$-explicit hitting set, which has size $\le \poly(n,d,s)^{\O(\lg \lg (ds))}$.

	Let $\cC$ be the set of multilinear $n$-variate polynomials computable by width-$r$ commutative ROABPs. If $|\F|\ge \poly(n,r)$, then $\mathcal{C}$ has a $\poly(n,r)$-explicit hitting set, which has size $\le \poly(n,r)^{\O(\lg \lg r)}$.
\end{thmm}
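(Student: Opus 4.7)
The plan is to bootstrap the $\poly(n,k)^{\O(\lg k)}$-size hitting set from the preceding theorem into a $\poly(n,k)^{\O(\lg\lg k)}$-size one by applying that theorem to its own output, exploiting the identity $(\lg k)^{\lg k} = k^{\lg\lg k}$. Both target classes fit the low-partial-derivative-dimension abstraction alluded to in the abstract: for a size-$s$, degree-$d$ diagonal circuit $\sum_i L_i^{d_i}$ every higher-order derivative is a scalar multiple of some $L_i^j$, giving dimension at most $k = \O(sd)$; for a width-$r$ multilinear commutative ROABP the evaluation-dimension characterization forces $k = \O(r)$. The preceding theorem already yields hitting sets of size $\poly(n)^{\O(\lg k)}$ (with $k$ appearing only in the exponent of the base polynomial) for both classes.

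The first step is to collapse the ambient variable count from $n$ down to $m = \O(\lg k)$ by composing an Agrawal--Saha--Saxena-style shift $H_0$ with a Shpilka--Volkovich-type support-covering generator $G_t$ of support $t = \O(\lg k)$. The shift is chosen so that rank concentration applies to $f(\vec y + \vec x)$: its entire coefficient vector lies in the linear span of its support-$\le t$ coefficients, hence $f \ne 0$ iff the support-$\le t$ part of $f(\vec y + \vec x)$ is nonzero, and this low-support part is in turn detected by $G_t$. The composition $\tilde f(\vec w) := f(\vec y + G_t(\vec w))$ is then a polynomial in only $m = \O(\lg k)$ seed variables $\vec w$ which is nonzero iff $f$ is, and the overhead for this reduction is the $\poly(n,k)$ size of $H_0$ and the $G_t$-construction.

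The second step is to feed $\tilde f$ back into the preceding theorem with $n$ replaced by $m = \O(\lg k)$. Once one verifies that $\tilde f$ still lies in a class to which the preceding theorem applies with parameter $\O(k)$, that theorem yields a hitting set for $\tilde f$ of size $\poly(m)^{\O(\lg k)} = (\lg k)^{\O(\lg k)} = k^{\O(\lg\lg k)}$. Multiplying by the $\poly(n,k)$ size of the shift-and-generator family used in the first step produces the claimed $\poly(n,k)^{\O(\lg\lg k)}$ bound overall, in both specializations.

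The main obstacle is the class-closure verification for $\tilde f$. For diagonal circuits this is essentially immediate, since a linear substitution sends a sum of $s$ powers of linear forms to another sum of $s$ powers of linear forms of no larger degree, so $\tilde f$ is again a size-$s$, degree-$d$ diagonal circuit in the $m$ variables $\vec w$. For multilinear commutative ROABPs, however, the SV substitution generally destroys multilinearity, so one cannot reapply the multilinear-commutative version of the preceding theorem directly; the right remedy is to apply the preceding theorem through the more general low-partial-derivative-dimension hitting-set construction of the earlier sections of the paper, since that framework is preserved (up to a $\poly(t)$ blow-up in the dimension parameter) under arbitrary linear substitutions. Once this is checked, the bound drops out of the two-step composition and the identity $(\lg k)^{\lg k} = k^{\lg\lg k}$.
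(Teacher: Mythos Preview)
Your proposal has two genuine gaps.

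\textbf{Wrong inner hitting set.} You plan to apply ``the preceding theorem'' (the $\poly(n,d)^{\O(\lg r)}$ hitting set for commutative ROABPs, \autoref{hit-set-commut}) to $\tilde f$ and claim this costs $\poly(m)^{\O(\lg k)}=(\lg k)^{\O(\lg k)}$. But that theorem's size is $\poly(m,d')^{\O(\lg r')}$, where $d'$ is the individual degree of $\tilde f$. After composing with $\GSV{n,\ell}$, which has degree $n$ in the $\vecz$ variables, $d'$ becomes $\poly(n,d)$, so you get $\poly(n,d)^{\O(\lg r)}$ --- no improvement at all over the starting point. The paper instead uses the Forbes--Shpilka generator $\GFS$ of \cite{ForbesShpilka12a}, whose size is $\poly(m,d',r)^{\O(\lg m)}$: here the \emph{exponent} is $\lg m=\O(\lg\lg r)$, which absorbs the $\poly(n,d)$ in the base harmlessly. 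The identity $(\lg k)^{\lg k}=k^{\lg\lg k}$ is a red herring; the actual saving comes from an inner construction whose exponent scales with the number of variables rather than with the width.

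\textbf{Class closure.} You say the SV substitution is a ``linear substitution'' and hence keeps $\tilde f$ a diagonal circuit. It is not: $\GSV{n,\ell}(\vecy,\vecz)$ is linear in $\vecy$ but has degree $n$ in $\vecz$, so $f\circ\GSV{}$ is only a depth-4 diagonal circuit (as the paper notes in a footnote), and for general commutative ROABPs the composition is not even read-once in $\vecz$. Nor does low partial-derivative dimension survive nonlinear substitution. The paper resolves this with a different construction $\cGH$: hash $[n]\to[m]$ with $m=\O(\ell^2)$ buckets, give each bucket its own independent copy of $\GSV{n,1}$, and then Kronecker-substitute $\vecz\leftarrow\vecy^{dn^2}$. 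Because the buckets partition the ROABP's variables (by commutativity one may reorder so each bucket is a contiguous block), the result is provably a width-$r$ commutative ROABP in the $m=\O(\lg^2 r)$ bucket variables (\autoref{lem:hashing}). This structural preservation is what makes the second stage applicable.
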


We also show that the above holds more generally for polynomials with a small space of partial derivatives, see \autoref{sec:hashing+FS}.  We can also extend the above result, by relaxing the multilinearity of the commutative ROABP to individual degree $d$, in which case the exponent grows by a factor of $\lg d$. In the previous theorem, the resulting exponent was independent of $d$, but that exponent has a worse dependence on $r$.

The proof of our results combines various works.  We start by extending and reinterpreting the ideas of shifting and rank concentration, of Agrawal, Saha and Saxena~\cite{AgrawalSS12}.  We then build on these techniques by incorporating new recursion schemes, as well as existing tools in the literature such as the Klivans and Spielman~\cite{KlivansSpielman01} generator for sparse polynomials, the generator of Shpilka and Volkovich~\cite{ShpilkaVolkovich09} for read-once formulas,  the generator of Forbes and Shpilka~\cite{ForbesShpilka12a} for ROABPs, as well as perfect hash functions.

\subsection{Related work}\label{sec: related}

\paragraph{Boolean Pseudorandomness:} The PIT problem can be seen as the algebraic analog of the task of constructing pseudorandom generators for boolean computation.  That is,  for a class $\cC$ of boolean circuits, we can seek to construct a \emph{pseudorandom generator (PRG)} $\cG:\bits^s\to\bits^n$ for $\cC$, such that for any circuit $C\in\cC$ on $n$ inputs, we have the $\epsilon$-closeness of distributions $C(\cG(U_s))\approx_\epsilon C(U_n)$, where $U_k$ denotes the uniform distribution on $\bits^k$. Nisan~\cite{Nisan92} studied pseudorandom generators for space-bounded computation, and for space $s$ computation gave a generator with seed length $s=\O(\lg^2 s)$. Randomized space-bounded computation can be modeled with read-once oblivious (boolean) branching programs, and these generators apply to this model of computation as well. Note that in the context of space-bounded derandomization, the order that the randomness is read is known, as this information is dictated when simulating the space-bounded machine.  

Despite Nisan's result, the seed length has not been improved for general read-once branching programs, despite much work on the problem (see for example \cite{Nisan92,INW94,RazReingold99,NisanZuckerman96,BogdanovDVY13,SimaZ11,GopalanMRTV12,BravermanRRY10,BrodyV10,KouckyNP11,De11,Steinke12,Tzur09,BogdanovPW11, ReingoldSV13, ImpagliazzoMZ12} among others).  While past PRG work for read-once branching programs has examined restricted branching programs (such as \emph{regular} or \emph{permutation}), more recently it has been observed that existing PRGs depend heavily on the order the variables are read, and Tzur~\cite{Tzur09} even showed that certain instantiations provably fail for adversarially chosen read orders.  Starting with the work of Bogdanov, Papakonstantinou and Wan~\cite{BogdanovPW11}, there have been attempts to obtain PRGs for read-once branching programs in any variable order.  In recent works, Impagliazzo, Meka and Zuckerman~\cite{ImpagliazzoMZ12} gave a generator with seed length $n^{1/2 + o(1)}$ for general branching programs in any variable order, and Reingold, Steinke and Vadhan~\cite{ReingoldSV13} gave a seed-length of $\O(\lg^2 n)$ for constant-width permutation read-once branching programs in any variable order.

In comparison, the work of Forbes and Shpilka~\cite{ForbesShpilka12a} studied read-once oblivious (algebraic) branching programs, and achieved a quasi-polynomial-sized hitting set, which would correspond to a ``seed length'' of $\O(\lg^2 s)$ for ABPs of size $s$. This result is analogous (in some sense) to the generator of Nisan for space bounded computations. Correspondingly, it only applies when the variable order of the ROABP is known. Independently, Agrawal, Saha and Saxena~\cite{AgrawalSS12} gave results for sub-models of ROABPs, and in their models it is somewhat less natural to assume that the variable order is known.  

Thus, motivated by the analogous questions in boolean pseudorandomness, as well as by the positive results for the unknown order case by Agrawal, Saha and Saxena~\cite{AgrawalSS12}, this work studies general ROABPs when the order of variables is unknown. We give a quasi-polynomial-sized hitting set, corresponding to a seed of length $\O(\lg^3 s)$ for ABPs of size $s$. It is an interesting question as to whether any insights of this paper (and other recent works) can be used to obtain a similar seed length in the boolean regime, or in general whether there are any formal connections between algebraic and boolean pseudorandomness for read-once oblivious branching programs.

\paragraph{The Partial Derivative Method:} \sloppy The PIT algorithm of Raz and Shpilka~\cite{RazShpilka05} works for any model of computation that outputs polynomials whose space of partial derivatives is (relatively) low-dimensional.\footnote{More accurately, it works for the set-multilinear model with a given ordering of the sets and we only consider derivatives according to variables from consecutive sets. That is, in the ABP model we consider only derivatives according to variables in adjacent layers.} Set-multilinear ABPs, non-commutative ABPs, low-rank tensors, and so called pure algebraic circuits (defined in Nisan and Wigderson~\cite{NisanWigderson96}) are all examples of algebraic models that compute polynomials with that property, and so the algorithm of Raz and Shpilka~\cite{RazShpilka05} works for them. In some sense using information on the dimension of partial derivatives of a given polynomial is the most applicable technique when trying to prove lower bounds for algebraic circuits (see e.g., \cite{Nisan91,NisanWigderson96,Raz06,Raz09a,RSY08,RazYehudayoff09,GuptaKKS12}) and so it was an interesting problem to understand whether this powerful technique could be carried over to the black-box setting. The work Forbes and Shpilka~\cite{ForbesShpilka12a} achieved a quasi-polynomial hitting set for this model when the order of variables is known in advance. Here we remove this restriction thus providing a black-box analog of Raz and Shpilka~\cite{RazShpilka05} (for multilinear polynomials), albeit with a quasi-polynomial running time.

\fussy Furthermore, when the polynomial has the property that the set of all of its partial derivatives spans a low-dimensional space (i.e., of polynomial dimension), then we get a nearly polynomial time algorithm, namely, an $n^{\O(\lg \lg n)}$ algorithm, as stated in \autoref{sec:hashing+FS}.

\paragraph{Previous Work:} As mentioned before, the work Forbes and Shpilka~\cite{ForbesShpilka12a} gave a hitting set of quasi-polynomial size for the class of ROABPs when the order of variables is known in advance. The basic idea in that work, which follows the idea of the prior work of Forbes and Shpilka~\cite{ForbesShpilka12}, is that of preserving dimension of linear spaces while reducing the number of variables. To achieve this reduction, Forbes and Shpilka~\cite{ForbesShpilka12a} used a variant of a constrution of Gabizon and Raz \cite{GabizonRaz08} which in this work we refer to as a \emph{rank condenser}. This is then used in a recursive merge-and-reduce scheme. In each step, the variables in the ROABP are partitioned into adjacent pairs according to the variable order, and these variables are merged (in parallel) in a brute-force manner. This halves the number of variables, but in a certain sense squares the degree of the polynomial. The rank condenser then allows these new variables to be sampled pseudorandomly (and in parallel), such that the degree is then reduced appropriately.  Thus, the number of variables is halved, and no other parameters are changed.  However, the pseudorandom sampling requires a ``seed'' of length $\O(\lg s)$, and there are $\lg n$ such seeds, giving a total of $\O(\lg n \lg s)$ seed length. Note that this process crucially uses the variable order, as it needs to know which variables are neighbors and thus can be merged together.

Recently, Agrawal, Saha and Saxena~\cite{AgrawalSS12} obtained results on black-box derandomization of PIT for small-depth set-multilinear formulas (as well as generalizations to when the multilinearity condition is relaxed), when the partition of the variables into sets is unknown. They obtained a hitting set of size $\exp((2h^2\lg(s))^{h+1})$, for size $s$ formulas of multiplicative-depth $h$.\footnote{The multiplicative-depth is the maximal number of product gates in an input-output path. In the bounded depth model this is essentially half the total depth.} In the language of pure formulas, they consider pure formulas of multiplicative-depth $h$. We note that this model is a strict subset of the more general model of pure formulas which itself is a strict sub-model of set-multilinear ABP.\footnote{The usual transformation from formulas to ABPs can be done preserving set-multilinearity.}  Thus, our result significantly improves upon the multilinear results of Agrawal-Saha-Saxena~\cite{AgrawalSS12}, and in particular the size of our hitting set does not depend on the depth of the formula.

\sloppy In this work we also present black-box PIT results for the model of diagonal circuits, or more generally, polynomials with a low dimensional space of partial derivatives.  Saxena~\cite{Saxena08} defined this model as a way capture some of the complexity of depth-4 circuits. He gave a polynomial-time white-box PIT algorithm, by  using algorithm of Raz and Shpilka\cite{RazShpilka05}.  Later, Saha, Saptharishi and Saxena~\cite{SahaSS11}, generalized Saxena's  results to the so-called semi-diagonal model. Simultaneously and independently (using very different techniques), Agrawal, Saha and Saxena~\cite{AgrawalSS12}, and Forbes and Shpilka~\cite{ForbesShpilka12a} gave $s^{\O(\lg s)}$ black-box PIT algorithms for size $s$ semi-diagonal depth-4 circuits. Later, Forbes and Shpilka~\cite{ForbesShpilka13} gave another such algorithm, that was somewhat simpler.  In this work, we give a $s^{\O(\lg\lg s)}$-size hitting set for diagonal circuits, and the techniques can be seen as merging the ideas of Agrawal-Saha-Saxena~\cite{AgrawalSS12}, Forbes-Shpilka~\cite{ForbesShpilka12a} and Forbes-Shpilka~\cite{ForbesShpilka13}.

\fussy

\subsection{Proof technique}

The work of Agrawal, Saha and Saxena~\cite{AgrawalSS12} introduced a new technique which they called {\em rank-concentration}. By shifting the variables $\vecx$, to $\vecx+\vect$ for a careful choice of $\vect$, they showed that for depth-$3$ set-multilinear formulas, they can reduce to the case where the partial derivative space is spanned by derivatives of small order, which allowed them to conclude that the shifted polynomial contains a low degree monomial. They then used induction on the depth to obtain rank-concentration for higher depth formulas as well. Once this is achieved it is relatively easy to find a nonzero substitution to the polynomial. The main difficulty comes from finding a good shift that guarantees the rank-concentration property. 

In this work we present the idea of Agrawal-Saha-Saxena~\cite{AgrawalSS12} in the language of rank-condensers, and of which the notion of a \emph{Wronskian} is a special case. Specifically, we show that shifts act on the space of partial derivatives as a rank condenser (see \autoref{sec:shifts}).  This can be phrased as a Wronskian-result, as it says that the rank of certain matrices of partial derivatives capture linear independence in multivariate polynomials. Using this terminology we are able to easily compute good shifts. We believe that the language of rank condensers offers a convenient language for these results, and thus do not use various notions from Agrawal-Saha-Saxena~\cite{AgrawalSS12} such as Hadamard algebras.

We now discuss the notion of a rank condenser (and defer the discussion of Wronskians to \autoref{sec:wronskian}), and how shifting can be viewed this way. Let $f_1(\vecx),\dots, f_r(\vecx)$ be $n$-variate polynomials. We can express these polynomials as a matrix $M$ whose rows are indexed by monomials $\vecx^\veca$ and whose columns are indexed by $i\in[r]$, such that $M_{\veca,i}$ is the coefficient of $\vecx^\veca$ in $f_i$. Note that this captures all of the information about the $f_i$.  In particular, these polynomials are zero iff the matrix $M$ is zero.  More generally, any linear combination of these polynomials is zero iff the corresponding linear combination of columns of $M$ is zero. We can test if such a linear combination of the columns $M$ is zero, by reading a subset of the rows, in particular, it suffices to examine a basis for the row-span of $M$.  However, from an algebraic perspective, we do not have unfettered query access to $M$.  Rather, some rows (corresponding to monomials) are easier to access than others.  In particular, those monomials of low degree, or even those of small support, are easier to examine via brute force.  Thus, the condition we would like on $M$ is that it has a row-basis among small-support monomials.  In particular, we say the $f_i$ have \emph{support-$\ell$ rank concentration} if $M$ has such a basis among support-$\ell$ monomials.  If such rank concentration occurs, it follows that a linear combination of the $f_i$ is nonzero iff it has a nonzero monomial of support-$\ell$.  Such a monomial can then be found via brute-force.

However, as the monomial $x_1x_2\cdots x_n$ shows, small-support rank concentration does not happen always.  However, if we perturb this monomial to $\prod_i (x_i+1)$, then it suddenly does have small-support rank concentration.  Thus, we seek to find a small set $\cH$ such that some shift $\vecx\to\vecx+\vecaa$ for some $\vecaa\in\cH$ induces small-support rank-concentration.  Taking a different perspective, we can ask for a shift $\vecx\to\vecx+\vect$, where $\vect$ is an algebraically simple polynomial map.

To show how to construct such shifts $\vect$, we need to study how the shifting process affects the matrix $M$.  First, we see that this action is linear, and thus there is some \emph{transfer matrix} $T(\vect)$ describing its action.  As shifting by $\vect$ then shifting by $-\vect$ gives the identity map, it follows that $T(\vect)$ is invertible.  But we want something more, that shifting by $\vect$ induces rank concentration.  In particular, it follows that shifting to small-support rank concentration means that we want \[\rank(M)= \rank ((T(\vect)^\text{[small]}) M),\] where $T(\vect)^\text{[small]}$ is the set of rows in $T$ corresponding to small-support monomials.  Thus, while $T$ is square, $T(\vect)^\text{[small]}$ is a short, fat matrix.  Yet, we still want it to preserve the rank of $M$ after multiplication.  Such matrices we call rank condensers. One primary tool in understanding whether $T$ is a rank condenser is the following.

\begin{lemma*}[Informal version of \autoref{lem:rank-condenser-recipe}]
	There is a generic recipe to turn any good error-correcting code into a good rank condenser.
\end{lemma*}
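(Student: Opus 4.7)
The plan is to match up the rank condenser condition with the minimum-distance condition on a code's parity-check matrix; the resulting ``recipe'' then consists simply of taking the parity-check matrix of any good code as the rank condenser.

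First I would unwrap the definition. An $m \times N$ matrix $E$ qualifies as a rank-$r$ condenser exactly when it preserves the rank of every $r$-dimensional coordinate subspace of $\F^N$, i.e., when for every $S \subseteq [N]$ with $|S| \leq r$ the submatrix $E|_S$ has full column rank. Equivalently, every $r$ columns of $E$ are linearly independent, i.e., no nonzero vector in $\ker(E)$ has Hamming weight at most $r$. This is precisely the property one needs downstream: once shifting has concentrated the rank of the coefficient matrix $M$ onto its small-support rows, applying $E$ preserves that rank provided only that $E$ injects such coordinate-sparse subspaces into $\F^m$.

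Second, I would invoke the classical dual characterization of distance: an $m \times N$ matrix $H$ is the parity-check matrix of an $[N, N - m, d]$-linear code if and only if every $d - 1$ columns of $H$ are linearly independent. Comparing with the previous paragraph, rank-$r$ condensers of dimensions $m \times N$ are literally the same objects as parity-check matrices of $[N, N - m, d]$-codes with $d \geq r + 1$. The ``recipe'' is therefore the identity map on matrices: given a good linear code, its parity-check matrix is a good rank condenser, with the quality (ratio of $m$ to $r$) dictated by the code's rate versus the Singleton bound.

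To instantiate, over fields of size $\geq N$ Reed--Solomon codes are MDS, and their parity-check matrix is the Vandermonde matrix $E_{i,j} = \alpha_j^{i-1}$ with distinct $\alpha_j$; this gives the optimal $r \times N$ rank-$r$ condenser, which one can verify directly from the Vandermonde determinant without appealing to coding theory at all. Over smaller fields one substitutes any near-MDS family (e.g.\ concatenations of Reed--Solomon with a suitable inner code, or algebraic-geometric codes) at only a constant-factor loss in $m/r$. The proof is essentially bookkeeping once the ``no low-weight kernel element'' reformulation is in place; the only conceptual care is to remember that one is preserving coordinate-sparse subspaces rather than all $r$-dimensional subspaces, the latter being impossible whenever $m < N$.
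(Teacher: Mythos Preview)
Your proposal rests on a misreading of the definition of a rank condenser. In the paper, $E(\vect)\in\F[\vect]^{n\times m}$ is a rank condenser for rank $r$ if for \emph{every} matrix $M\in\F^{m\times k}$ with $\rank_\F(M)\le r$ one has $\rank_{\F(\vect)}(E(\vect)M)=\rank_\F(M)$. This is strictly stronger than preserving the rank of coordinate subspaces. Your equivalence ``every $r$ columns of $E$ are linearly independent $\Longleftrightarrow$ $E$ is a rank-$r$ condenser'' is false in the forward direction: take $E=[1\;\;1]\in\F^{1\times 2}$, which has every single column nonzero, yet $M=(1,-1)^\top$ has rank $1$ while $EM=0$. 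So a bare parity-check matrix $H$ is \emph{not} a rank condenser in the paper's sense, and your ``recipe is the identity map on matrices'' fails.

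What you are missing is precisely the seed. The paper's recipe is $E(\vect)=\Lambda(\vect)^{-1}\cdot H\cdot W(\vect)$, where $H$ is the parity-check matrix of a distance-$(>r)$ code \emph{and} $W(\vect)$ is a diagonal matrix of \emph{distinct} nonzero monomials in the seed variables $\vect$. The proof expands $\det((HW(\vect)M)|_{S\times\bullet})$ by Cauchy--Binet as a sum over size-$r$ subsets $T$ of terms $\det(H|_{S\times T})\det(M|_{T\times\bullet})\,w(T)$; the distinctness of the monomial weights $w(T)$ guarantees (via a greedy/matroid argument and a monomial order) a \emph{unique} minimum-weight nonzero term, which therefore cannot be cancelled. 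This is exactly how the seed lets $E(\vect)$ preserve \emph{all} rank-$\le r$ subspaces, something your final paragraph declares ``impossible whenever $m<N$'' --- true for a single fixed matrix, but the whole point of a seeded condenser is to circumvent that impossibility with a small family. Your downstream intuition (``once shifting has concentrated the rank \ldots'') also inverts the logic: in this paper the shift map \emph{is} the rank condenser being analyzed, not a preprocessing step before one.
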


Thus, to show $T(\vect)^\text{[small]}$ is a rank condenser, we simply show it has some error-correcting code properties.  This establishes that shifting will condense rank, for generic $\vect$.  However, by itself this is not helpful as $\vect$ has too many variables. In the case of a commutative ROABP, as argued in Agrawal-Saha-Saxena~\cite{AgrawalSS12}, it can be shown that $\vect$ need only be a good shift for all small sets of variables, in which case $\vect$ can be found in quasipolynomial time.

To extend the rank concentration idea to general ROABPs and set-multilinear ABPs, we follow the merge-and-reduce idea of Forbes and Shpilka~\cite{ForbesShpilka12a}.  However, the rank condenser used in that work seemingly can only be implemented if the variable order is known.  However, we observe that the shifting idea of Agrawal-Saha-Saxena~\cite{AgrawalSS12} is actually a rank condenser, and is insensitive to variable order.  However, to implement this, we must create a variant of the shifting rank condenser, that only seeks to condense rank from a ``known basis''.  That is, if we seek to condense \emph{all} rank, then this will be too expensive.  Instead, we shift several times.  In each shift, we progressively make the rank more concentrated, and we save on seed length by only condensing rank from where we already concentrated it.

To establish the hitting sets of size $s^{\O(\lg\lg s)}$ for sub-models of commutative ROABPS, we need to combine a variety of tools in this area.  The rank concentration idea shows that we can shift commutative ROABPs such that they only have $\lg s$ essential variables, where this is possible for multilinear commutative ROABPs because of the Klivans-Spielman~\cite{KlivansSpielman01} generator.  We then use hashing to reduce the problem to one on these essential variables.  However, we will need to do this in a manner compatible with the read-once nature of the computation, and thus use the Shpilka-Volkovich~\cite{ShpilkaVolkovich09} generator.  After these steps, we have a commutative ROABP in $\lg s$ variables, and the Forbes-Shpilka~\cite{ForbesShpilka12a} generator, while quasipolynomial, has a very good dependence on the number of variables of the ROABP.  Thus, by composing with this generator, we then reduce to $\lg\lg s$ variables, at which point we can interpolate to get a $s^{\O(\lg\lg s)}$-size hitting set.

\subsection{Organization}

The paper is organized as follows. In \autoref{sec:tools} we give some basic tools and definitions that we shall use throughout the paper. In \autoref{sec:shifts} we discuss shifts and rank condensers. \autoref{sec:commutative} contains the proof of \autoref{mainthm: q-p commutative}. In \autoref{sec:condense-known-basis} and \autoref{sec:noncommutative} we prove \autoref{mainthm: non-commutative}. Finally, in \autoref{sec:hashing+FS} we prove \autoref{mainthm: commutative}.

\subsection{Notation and definitions}\label{sec:notation}

For a polynomial $f\in\F[\vecx]$, we shall use $\coeff{\vecx^\veca}{f}$ to denote the coefficient of the monomial $\vecx^\veca$ in $f$. We will sometimes identify the monomial $\vecx^\veca$ with its exponent vector $\veca$. We shall often encounter \emph{Hasse derivatives}, where we define $\deriv_{\vecx^{\vecb}}(f)(\vect)\eqdef\coeff{\vecx^{\vecb}}{f(\vecx + \vect)}$.  We will use $\deriv_{\vecx^{\vecb}}(f)$ for the derivative evaluated at the original variables $\vecx$.  As the derivative is linear, we will also view the derivative $\deriv_{\vecx^{\vecb}}$ as a linear operator acting on $\F[\vecx]$, and will also consider linear combinations of these operators, which we refer to as differential operators. We use Hasse derivatives as they work well in fields of arbitrary characteristic, but have most of the usual properties of normal derivatives in zero characteristic.  For more, see Dvir-Kopparty-Saraf-Sudan~\cite{DvirKSS09} or Forbes-Shpilka~\cite{ForbesShpilka13}.

We will use the notation $[n]$ to denote $\{1,\ldots,n\}$, and $\zr{n}$ to denote $\{0,\ldots,n-1\}$.  We will occasionally index vectors and matrices from zero, and will use $\zr{n}$ to denote this, as $\F^{\zr{n}\times\zr{m}}$ will denote an $n\times m$ matrix, indexed from zero. For a set $S$ and an integer $k$ we denote by ${S \choose k}$ the set of all subsets of $S$ of size $k$.

We will often work with vectors and matrices, and they will often contain entries that are polynomials. Operations on polynomials, such as extracting coefficients, or taking derivatives, will be extended to vectors and matrices coordinate-wise. Vectors, such as $\vecf(\vecx)\in\F[\vecx]^r$, will be written in bold, where as matrices, such as $F(\vecx)\in\F[\vecx]^{n\times m}$ will be written with capital letters. We will also sometimes identify matrices of polynomials in $\F[\vecx]^{n\times m}$ with vectors of polynomials in $\F[\vecx]^{nm}$. Given an $n\times m$ matrix $M$ and sets $S\subseteq[n]$ and $T\subseteq[m]$, we use the notation $M|_{S\times T}$ to denote the submatrix of $M$ with rows from $S$ and columns from $T$.  When we do not restrict the rows (respectively, the columns) of $A$, we will use the symbol `$\bullet$' in place of $S$ (respectively, $T$).  When the sets $S$ and $T$ are singletons, we will use the more standard notation of $M_{i,j}$ to refer to the $(i,j)$-th entry of $M$. Given matrices $M_i$, their non-commutative product $\prod_{i\in[n]} M_i$ will have the natural order associated, that is, the product $M_1\cdots M_n$. Similarly, for a vector $v\in \F^n$ and $S\subseteq[n]$ we denote by $v|_S$ the restriction of $v$ to the coordinates of $S$. In a similar fashion, when we have two sets of variables $(\vecx,\vecy)$ and a vector $v$ of length $|\vecx|+|\vecy|$ we denote with $v|_\vecx$ the restriction of $v$ to the coordinates associated with the variables in $\vecx$, and similarly define $v|_\vecy$.

Similar to the above abuse of notation, we will extend the notion of a ROABP to compute matrices.  That is, we will say a matrix $M(\vecx)\in\F[\vecx]^{r\times r}$ is computable by a width-$r$ ROABP if $M(\vecx)=\prod_{i\in[n]} M_i(x_i)$ for $M_i(x_i)\in\F[x_i]^{r\times r}$.

For a vector $\veca\in\N^n$, we denote $|\veca|_0=|\{i\mid a_i \neq 0\}|$, $|\veca|_1\eqdef a_1+\cdots+a_n$, $|\veca|_\infty=\max_i a_i$, and $\vec{a}\,!=a_1!\cdots a_n!$. Given a monomial $\vecx^\veca$, we will say it has (total) degree $|\veca|_1$, individual degree $|\veca|_\infty$, and support $|\veca|_0$.  We will define the support of a monomial, denoted $\supp(\vecx^\veca)$, to be either the coordinates $\{i: a_i\ne 0\}$, or the corresponding variables $\{x_i:a_i\ne 0\}$, and this will depend on context. For any $a < b$, we shall set $\binom{a}{b}$ to be zero. Also, for vectors $\veca = \inparen{a_1,\dots, a_n}$ and $\vecb = \inparen{b_1, \dots, b_n}$, we shall define $\binom{\veca}{\vecb} = \prod_i \binom{a_i}{b_i}$. 

For a set of vectors $T$ we denote with $\dim_{\F}(T)$ the dimension of the vector space spanned by the vectors in $T$, over $\F$. For a polynomial $f(\vecx)$ and a variable $x_i$ we denote the degree of $f$ in $x_i$ by $\deg_{x_i}(f)$.

\section{Tools}\label{sec:tools}

In this section we state some known results about ABPs and state some results from the literature that will be used throughout the paper. 

\paragraph{ABPs and Iterated Matrix Multiplication:}
We start by giving the well known equivalence of ABPs and iterated matrix multiplication (see, e.g., Lemma 3.1 of Forbes-Shpilka~\cite{ForbesShpilka12a}). 

\begin{lemmawp}
	\label{lem:roabp-as-imm}
	Let $f\in\F[\vecx]$ be computed by a depth $D$, width $\le r$ ABP with unrestricted weights, such that the variable layers are $V_0,\ldots,V_D$.  For $0<i\le D$, define $M_i\in\F[\vecx]^{V_{i-1}\times V_i}$ such that the $(u,v)$-th entry in $M_i$ is the label on the edge from $u\in V_{i-1}$ to $v\in V_i$, or 0 if no such edge exists. Then,  $f(\vecx)$ is a linear combination of the entries in $\prod_{i\in[d]} M_i(\vecx)\eqdef M_1(\vecx) M_2(\vecx)\cdots M_D(\vecx)$.
	
	Further, for an ABP, the matrix $M_i$ has entries that are affine forms, and for a ROABP of individual degree $<d$, the matrix $M_i$ has entries which are univariate polynomials in $x_{\pi(i)}$ of degree $<d$ (where $\pi$ is the permutation determining the order of variables).
	 For a set-multilinear ABP on the variable sets $\vecx=\vecx_1 \sqcup \vecx_2 \sqcup \cdots \sqcup \vecx_d$ the matrix $M_i$ has entries that are linear forms in the variables $\vecx_i$. 
\end{lemmawp}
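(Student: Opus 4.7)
The plan is to prove this by a routine induction on the depth $D$, turning the combinatorial definition of the ABP (sum over $s$-$t$ paths of products of edge labels) into the algebraic definition (entries of an iterated matrix product). The heart of the argument is the standard identity that the $(u,v)$-entry of a product of layered transition matrices enumerates weighted $u$-to-$v$ paths.

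First I would establish, by induction on $k \in [D]$, that for any $u \in V_0$ and any $w \in V_k$,
\[
\bigl(M_1(\vecx)\cdots M_k(\vecx)\bigr)_{u,w} \;=\; \sum_{\text{paths }u \to w} \;\prod_{e \in \text{path}} f_e(\vecx),
\]
where the sum ranges over directed paths in the ABP from $u$ to $w$. The base case $k = 1$ is immediate from the definition of $M_1$: the $(u,w)$-entry is exactly the label on the single edge from $u$ to $w$ (or $0$ if absent). For the inductive step, I would apply associativity and expand
\[
\bigl(M_1\cdots M_k\bigr)_{u,w} \;=\; \sum_{w' \in V_{k-1}} \bigl(M_1\cdots M_{k-1}\bigr)_{u,w'} \cdot (M_k)_{w',w},
\]
then invoke the inductive hypothesis to see that each term corresponds to extending a $u$-to-$w'$ path by the edge $w' \to w$, recovering the full sum over $u$-to-$w$ paths.

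Specializing $k = D$ together with $V_0 = \{s\}$ and $V_D = \{t\}$, the matrix product $\prod_{i \in [D]} M_i(\vecx)$ is a $1 \times 1$ matrix whose unique entry equals $\sum_{s\text{-}t\text{ paths}} \prod_e f_e(\vecx)$, which is by definition the polynomial $f(\vecx)$ computed by the ABP. Hence $f(\vecx)$ is (trivially) a linear combination of the entries of the product. The phrasing as a ``linear combination'' allows the same statement to be used when one has artificially inflated source/sink layers, but in the present setting it reduces to equality with the sole entry.

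The claims about the structure of each $M_i$ in the three sub-models are then a direct reading of the definitions in \autoref{def: ABP}: for a general ABP, edges carry affine forms, so the entries of $M_i$ are affine; for a ROABP of individual degree less than $d$ with permutation $\pi$, every edge out of $V_{i-1}$ is labeled by a univariate polynomial in $x_{\pi(i)}$ of degree less than $d$, so $M_i \in \F[x_{\pi(i)}]^{V_{i-1} \times V_i}$ has entries of that form; and for a set-multilinear ABP, every edge between $V_{i-1}$ and $V_i$ carries a homogeneous linear form in $\vecx_i$, so the entries of $M_i$ are linear forms in $\vecx_i$. There is essentially no obstacle here: the only mild subtlety is keeping indexing conventions consistent between the graph (vertices of $V_{i-1}, V_i$) and the matrix ($V_{i-1} \times V_i$), which is handled by fixing any bijection between vertices and matrix indices.
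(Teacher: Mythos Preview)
Your proof is correct and entirely standard. Note, however, that the paper does not actually prove this lemma: it is stated as a \texttt{lemmawp} (lemma without proof), introduced as ``the well known equivalence of ABPs and iterated matrix multiplication'' with a citation to Forbes--Shpilka~\cite{ForbesShpilka12a}. So there is no proof in the paper to compare against; your argument is exactly the routine induction one would expect for this folklore fact.
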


Given the above, we will focus on iterated matrix multiplication, in particular for matrices that are all $r\times r$.  Further, as it is more amenable to recursion, we will more focus on the matrix product $\prod_i M_i(\vecx)$ than the polynomial $f(\vecx)$.

\paragraph{Set-Multilinear and ROABPs:} We note here that set-multilinear ABPs can be expressed as multilinear ROABPs of slightly larger width, and thus will focus on ROABPs.

\begin{lemma}\label{lem:smabp-to-roabp}
	Let $f(\vecx_1,\ldots,\vecx_d)$ be computed by a set-multilinear ABP of width-$r$.  Then $f$ can be computed by a multilinear ROABP of width-$2r$ in the $dn$ variables, in some variable order.
\end{lemma}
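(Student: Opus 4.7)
The plan is to invoke the IMM representation of \autoref{lem:roabp-as-imm}: the set-multilinear ABP computes $f$ as a fixed entry (or linear combination of entries) of $M_1(\vecx_1)\cdots M_d(\vecx_d)$, where each $M_i \in \F[\vecx_i]^{r\times r}$ has homogeneous linear entries, and hence $M_i(\vecx_i) = \sum_{j=1}^n x_{i,j} A_{i,j}$ for fixed constant matrices $A_{i,j}\in\F^{r\times r}$. The task then reduces to the following sub-problem: for each set-multilinear layer $i$, simulate the ``multi-variable'' matrix $M_i(\vecx_i)$ by a product of $n$ \emph{univariate} matrices (one per variable of $\vecx_i$) while only doubling the width. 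Stringing these simulations together in the order $x_{1,1},\dots,x_{1,n},x_{2,1},\dots,x_{d,n}$ will yield the desired width-$2r$ multilinear ROABP.

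For the sub-problem I would work in $2r\times 2r$ block matrices, where the extra $r$ coordinates act as a ``workspace'' that accumulates the partial sum $\sum_{j\le k}x_{i,j}A_{i,j}$ as more variables are read, and is then collapsed into the main block by the final factor. A clean choice is
\[
B_{i,1}(x_{i,1}) = \begin{pmatrix} x_{i,1} A_{i,1} & I_r \\ 0 & 0 \end{pmatrix},\quad
B_{i,j}(x_{i,j}) = \begin{pmatrix} I_r & 0 \\ x_{i,j} A_{i,j} & I_r \end{pmatrix}\ (2 \le j \le n-1),\quad
B_{i,n}(x_{i,n}) = \begin{pmatrix} I_r & 0 \\ x_{i,n} A_{i,n} & 0 \end{pmatrix},
\]
and a short induction on $k$ should give $B_{i,1}\cdots B_{i,k} = \left(\begin{smallmatrix} \sum_{j\le k} x_{i,j} A_{i,j} & I_r \\ 0 & 0 \end{smallmatrix}\right)$ for $k<n$, and then $\prod_{j=1}^n B_{i,j}(x_{i,j}) = \left(\begin{smallmatrix} M_i(\vecx_i) & 0 \\ 0 & 0 \end{smallmatrix}\right)$. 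Since these per-layer products are block-diagonal, multiplying them across $i$ yields $\prod_{i,j} B_{i,j} = \left(\begin{smallmatrix} \prod_i M_i & 0 \\ 0 & 0 \end{smallmatrix}\right)$, so the original polynomial $f$ can be extracted from the full ROABP product by the same linear functional as before, padded by zeros on the workspace coordinates.

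The remaining checks are routine: each $B_{i,j}$ has degree at most $1$ in its unique variable $x_{i,j}$ and no variable is repeated across factors, so the full product is multilinear in the $dn$ variables; the width is manifestly $2r$; the variable order $x_{1,1},\dots,x_{d,n}$ is respected by construction; and the degenerate case $n=1$ is immediate since the ABP is then already a width-$r$ ROABP. The only place that requires any ingenuity is choosing the right block form for $B_{i,j}$ — the standard ``workspace'' trick of promoting the matrix sum $\sum_j x_{i,j}A_{i,j}$ into a matrix product using one additional block coordinate — after which the entire argument reduces to a direct block-matrix calculation.
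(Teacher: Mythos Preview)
Your proposal is correct and takes essentially the same approach as the paper: both simulate each set-multilinear layer $M_i(\vecx_i)=\sum_j x_{i,j}A_{i,j}$ by a product of $n$ univariate $2r\times 2r$ block matrices, then concatenate across layers. The only cosmetic difference is the choice of block form: the paper uses the uniform unipotent upper-triangular factors $\left(\begin{smallmatrix} I_r & x_{i,j}A_{i,j}\\ 0 & I_r\end{smallmatrix}\right)$ (whose product is $\left(\begin{smallmatrix} I_r & M_i\\ 0 & I_r\end{smallmatrix}\right)$) and then absorbs constant projection matrices $P,Q$ into the endpoints, whereas your first/last factors bake this projection in so that the per-layer product is already $\left(\begin{smallmatrix} M_i & 0\\ 0 & 0\end{smallmatrix}\right)$ and the layers compose directly.
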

\begin{proof}
	Consider first a single homogeneous linear form $L$ in some partition $\vecx_i$, so that $L(\vecx_i)=\sum_{j\in[n]} a_{i,j} x_{i,j}$.  Note that 
	\[
		\begin{bmatrix}
			1	&	L(\vecx_i)\\
			0	&	1\\
		\end{bmatrix}
		=
		\prod_{j\in[n]}
		\begin{bmatrix}
			1	&	a_{i,j}x_{i,j}\\
			0	&	1\\
		\end{bmatrix}.
	\]
	Now consider any $r\times r$ matrix $M_i(\vecx_i)$ in a set-multilinear ABP.  By using the above transformation, and appealing to block-matrix multiplication, we can construct width-$2r$ matrices $M_{i,j}$ so that $\prod_{j\in[n]} M'_{i,j}(x_{i,j})$ embeds $M_i(\vecx_i)$.  Further, as this embedding is regular, it follows there are $2r\times 2r$ matrices $P$ and $Q$ such that $P\cdot \prod_{j\in[n]} M'_{i,j}(x_{i,j})\cdot Q=M_i(\vecx_i)$.  Incorporating $P$ and $Q$ into the product, shows that $M_i(\vecx_i)$ can be computed by the requisite ROABP.  Applying this argument to each partition $\vecx_i$ and multiplying the resulting matrices gives the result.
\end{proof}

\paragraph{PIT results:}

The first PIT result that we need is a generator of Klivans and Spielman~\cite{KlivansSpielman01} for sparse polynomials. We will not need to hit sparse polynomials themselves, but rather will need the hashing properties of this generator that allow us to reduce to fewer variables.

\begin{theoremwp}[Klivans-Spielman generator~\cite{KlivansSpielman01}]\label{thm:KS generator}
	Let $|\F|\ge\poly(n,d)$, where $m=\Theta(\log_{nd} s)$. Then there is a $\poly(n,d,m)$-explicit polynomial $\GKS{n,d,s}:\F^m\times\F^m\to\F^n$ such that
	\begin{itemize}
		\item For all $i\in[n]$, the polynomial $(\GKS{n,d,s})_i$ has individual degree $\le \poly(n,d)$.
		\item For every set $S\subseteq\F[\vecx]$ of $n$-variate monomials of individual degree $<d$, such that $|S|\le s$, there is some $\vecaa\in\F^m$ such the polynomials $\{(\GKS{n,d,s}(\vect,\vecaa))^\veca\}_{\vecx^\veca\in S}$ are nonzero, distinct monomials in $\vect$.\qedhere
	\end{itemize}
\end{theoremwp}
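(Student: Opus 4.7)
The plan is to use the classical Klivans--Spielman substitution in monomial form. Fix a prime $p$ of size $\Theta((nd)^C)$ for a suitable constant $C\ge 2$ (possible since $|\F|\ge \poly(n,d)$, so we may embed $\{0,1,\ldots,p-1\}\hookrightarrow \F$), and for each $\vecaa=(a_1,\ldots,a_m)$ lying in the embedded copy of $\F_p^m$ inside $\F^m$, define
\[
(\GKS{n,d,s}(\vect,\vecaa))_i \;\eqdef\; \prod_{j=1}^{m} t_j^{\,a_j^{\,i}\bmod p}.
\]
(For $\vecaa$ outside this subset, the map can be set arbitrarily, e.g., to zero.) For each fixed $\vecaa$, this coordinate is a monomial in $\vect$ of individual degree at most $p-1=\poly(n,d)$, giving the first property. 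Explicitness follows because each exponent $a_j^{\,i}\bmod p$ can be computed in $\poly(\log p,\log i)=\poly(n,d,m)$ time.

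For the second property, fix a set $S$ of at most $s$ monomials of individual degree $<d$. For any $\veca$ with individual degree $<d$, the image $(\GKS{n,d,s}(\vect,\vecaa))^\veca=\prod_j t_j^{\sum_i a_i\,(a_j^{\,i}\bmod p)}$ is a monomial with nonnegative integer exponents and coefficient $1$, hence nonzero. It remains to find $\vecaa$ such that these images are distinct across $\vecx^\veca\in S$. The images associated to $\veca\ne \vecb$ agree iff for every $j$, the integer $\sum_i (a_i-b_i)(a_j^{\,i}\bmod p)$ equals zero; a sufficient condition to rule this out at a coordinate $j$ is that the sum is nonzero modulo $p$, i.e.\ $P_{\veca,\vecb}(a_j)\not\equiv 0\pmod p$, where $P_{\veca,\vecb}(X)=\sum_i(a_i-b_i)X^i\in\F_p[X]$. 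Because $\veca\ne \vecb$ with each $|a_i-b_i|<d\le p$, the polynomial $P_{\veca,\vecb}$ is nonzero in $\F_p[X]$ of degree at most $n$, so a uniformly random $a_j\in\F_p$ is a root with probability at most $n/p$; hence the probability that \emph{all} $m$ of the $a_j$'s are roots is at most $(n/p)^m$.

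Union-bounding over the at most $\binom{s}{2}\le s^2$ pairs of monomials in $S$, the fraction of bad $\vecaa\in\F_p^m$ is at most $s^2(n/p)^m$. Choosing $C\ge 2$ makes $\log(p/n)=(C-1)\log n + C\log d=\Omega(\log(nd))$, so $m=\Theta(\log s/\log(nd))=\Theta(\log_{nd}s)$ already forces this fraction below $1$; thus a good $\vecaa$ exists, and one can be located by brute-force search over the $p^m=\poly(n,d,s)$ candidates, which preserves the explicitness bound. The main subtlety is the interaction between the modular reduction in the exponent and the fact that collisions are governed by equalities of \emph{integer} sums $\sum_i a_i(a_j^{\,i}\bmod p)$; I expect this to be the main point requiring care, but it is immediately handled by observing that any integer which is zero is also zero modulo $p$, so a single coordinate $j$ with $P_{\veca,\vecb}(a_j)\not\equiv 0\pmod p$ certifies distinctness, reducing the whole analysis to the standard Schwartz--Zippel-style bound for $P_{\veca,\vecb}$ over $\F_p$.
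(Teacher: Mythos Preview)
Your proposal is correct and follows essentially the same approach as the paper's sketch: both take a prime $p=\poly(n,d)$, use the substitution $x_i\leftarrow \prod_j t_j^{k_j^{i}\bmod p}$, reduce collision of two monomials $\vecx^\veca,\vecx^\vecb$ to the polynomial $\sum_i(a_i-b_i)X^i$ vanishing modulo $p$, and finish with a union bound over the $\le s^2$ pairs; you are in fact more explicit than the paper about the integer-vs-mod-$p$ subtlety in the exponents. One cosmetic point: ``set the map to zero outside $\F_p^m$'' does not define a polynomial in $\vecaa$ --- the paper instead interpolates through the $p^m$ choices of $k_j$ to obtain a genuine polynomial map of individual degree $<p$ in each $\alpha_j$, which is the standard fix and preserves all bounds.
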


The above is a slight variation of the work of Klivans and Spielman \cite{KlivansSpielman01}, and it can be constructed using their techniques.  That is, take $p$ a prime of size $\poly(n,d)$, and consider the substitution $x_i\leftarrow t^{k^{i}\pmod p}$, so that $\vecx^\veca\leftarrow t^{\veca(k)\pmod p}$ (at least, when examining the substitution modulo $t^p$), where $\veca(k)=\sum a_ik^i$ can be regarded as a polynomial of degree $\le n$.  Thus, it follows that two monomials $\vecx^\veca$ and $\vecx^\vecb$ agree under this substitution with probability $\le n/p=1/\poly(n,d)$ for a random $k\in\Z_p$.  By using the $\vecaa$ to interpolate through these choices of $k$, this yields the $m=1$ construction. For $m>1$, one takes $x_i\leftarrow t_1^{k_1^{i}\pmod p}\cdots t_m^{k_m^{i}\pmod p}$, and observes that two monomials then agree with probability $\le 1/\poly(n,d)^m$.  Taking a union bound over $s^2$ possible collisions amongst the monomials in $S$ shows that some value of the $\{k_j\}_j$, and thus of $\vecaa\in\F^m$, will yield distinct monomials in $\vect$.

Another important tool is the generator of Shpilka-Volkovich~\cite{ShpilkaVolkovich09}.

\begin{theoremwp}[Shpilka-Volkovich generator~\cite{ShpilkaVolkovich09}]\label{thm:SV generator}
	\sloppy Let $\F$ be a field of size $>n$. Let $\xi_0,\xi_1,\ldots,\xi_n$ be distinct elements in $\F$, and $\ell\in[n]$. Then there is a $\poly(n,\ell)$-explicit polynomial map $\GSV{n,\ell}(y_1,\ldots,y_\ell,z_1,\ldots,z_\ell)\in \F[\vecy,\vecz]^n$, such that
	\[
		(\GSV{n,\ell}(\vecy,\xi_{i_1},\ldots,\xi_{i_\ell}))_k
		=\sum_{j:i_j=k} y_j,
	\]
	for $0\le i_j\le n$, and $k\in[n]$. Specifically, $(\GSV{n,\ell})_k=\sum_{j\in[\ell]} \ind{z_j=\xi_k}\cdot y_j$, where $\ind{z=\xi_k}$ is a univariate polynomial in $z$ with degree $\le n$. 
\end{theoremwp}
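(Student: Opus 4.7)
The plan is to build the map by explicit Lagrange interpolation; the ``Specifically'' clause of the statement essentially dictates what the construction must be, and the only real task is to exhibit a univariate polynomial $\ind{z=\xi_k}$ of degree $\le n$ that behaves as an indicator on the set $\{\xi_0,\xi_1,\ldots,\xi_n\}$.

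First, for each $k\in[n]$, I would define
\[
\ind{z=\xi_k}(z)\;:=\;\prod_{j\in\{0,1,\ldots,n\}\setminus\{k\}}\frac{z-\xi_j}{\xi_k-\xi_j}.
\]
This is well defined because $|\F|>n$ guarantees that the $\xi_i$ are (can be chosen) distinct, and by construction it is a degree-$n$ univariate polynomial satisfying $\ind{z=\xi_k}(\xi_i)=1$ if $i=k$ and $\ind{z=\xi_k}(\xi_i)=0$ for every other $i\in\{0,1,\ldots,n\}$. Note that $\xi_0$ is always a root of $\ind{z=\xi_k}$, which is the mechanism by which the index $i_j=0$ will serve as a ``null'' choice that contributes nothing to any coordinate.

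Second, I would simply set
\[
(\GSV{n,\ell})_k(\vecy,\vecz)\;:=\;\sum_{j\in[\ell]}\ind{z_j=\xi_k}(z_j)\cdot y_j
\]
for each $k\in[n]$, matching the explicit formula in the statement. Verifying the required evaluation identity is then immediate: after substituting $z_j=\xi_{i_j}$, the indicator $\ind{z_j=\xi_k}(\xi_{i_j})$ equals $1$ exactly when $i_j=k$ and equals $0$ otherwise, so each sum collapses to $\sum_{j:i_j=k}y_j$.

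For the $\poly(n,\ell)$-explicitness, expanding each $\ind{z=\xi_k}$ into a list of coefficients is a $\poly(n)$-time computation (it is a product of $n$ linear factors with field-element coefficients), and each $(\GSV{n,\ell})_k$ is then a sum of $\ell$ products of an already-computed univariate with a fresh $y_j$, so all $n$ output polynomials can be produced in time $\poly(n,\ell)$. There is essentially no obstacle to speak of; the theorem is really a convenient repackaging of Lagrange interpolation into a generator shape whose utility will come out in later sections (through the ability to ``turn on'' any $\ell$ of the $n$ variables simultaneously).
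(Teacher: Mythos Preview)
Your proposal is correct and matches the paper's approach: the paper states this theorem without proof (it is cited from Shpilka--Volkovich), merely remarking afterward that ``the $\ind{z=\xi_k}$ are the Lagrange interpolation polynomials, so that $\ind{z_j=\xi_k}=\delta_{j,k}$.'' Your write-up simply spells this out in full.
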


That is, the $\ind{z=\xi_k}$ are the Lagrange interpolation polynomials, so that $\ind{z_j=\xi_k}= \delta_{j,k}$.  Note that this generator is in a sense $\ell$-wise independent.  That is, by setting the $z_j$ appropriately, we can insert the $\vecy$ variables into any $\ell$ of the $\vecx$ variables. Note that the setting $z_j\leftarrow \xi_0$ zeroes out the appropriate $y_j$ in the polynomial map (a convenience for handling $<\ell$-wise positions), where as $z_j\leftarrow \xi_i$ for $i\ge 1$ sends $y_j$ to the $x_i$ slot. It follows if a polynomial $f$ depends on only $\ell$ of the variables in $\vecx$ then $f\not\equiv 0$ iff $f\circ \GSV{n,\ell}\not\equiv0$. However, $\GSV{n,\ell}$ also has the property that it hits polynomials with small-support, see \autoref{lem:GSV-hit-ss}, which is more relevant for this paper.

We shall need some more tools for later theorems, and for the sake of clarity we shall state them as and when we need.

\section{Shifts, Rank Condensers, and Rank Concentration}\label{sec:shifts}

In this section we will explore the properties of the shifting map $f(\vecx)\mapsto f(\vecx+\vect)$.  Following Agrawal-Saha-Saxena~\cite{AgrawalSS12} and Forbes-Shpilka~\cite{ForbesShpilka12a}, we will look not just at one polynomial at a time, but rather at a set of polynomials, as these polynomials will represent intermediate computations in the computation. Given a vector of polynomials $\vecf = (f_1,\dots, f_r)\in\F[\vecx]^r$, we associate to it a matrix $M$ such that $M_{\veca,i}$ is the coefficient of $\vecx^{\veca}$ in the polynomial $f_i$.  Thus, $M$ will have $r$ columns.  The rows of $M$ are drawn from all monomials in $\vecx$ of low degree, where the degree bounds will depend on the context. 

The above discussion uses the language of coefficients of monomials.  This is equivalent to a notion of a Hasse derivative, the derivative language is sometimes more natural and thus will be used below.  Thus, we will think of the matrix $M$ with $M_{\veca,i}=\deriv_{\vecx^{\veca}}(f_i)(\vec{0})$.  Thus, the row of $M$ associated with $\veca$ is the $r$-dimensional vector $\deriv_{\vecx^{\veca}}(\vecf)(\vec{0})$.

Given this matrix, we can then study its rank, as well as its row span.  Clearly, the rank is $\le r$ by construction.  Further, the polynomials are all zero iff the rank is $0$.  If the polynomials are nonzero, then there is some basis of the row span of $M$, which yield derivatives $\deriv_{\vecx^{\veca}}(\vecf)(\vec{0})$ that are nonzero vectors.  If we can access these derivatives efficiently, this will allow us to decide if $\vecf$ is zero or not.  To capture this, we follow Agrawal-Saha-Saxena~\cite{AgrawalSS12} and define the notion of rank concentration.

\begin{definition}[Rank concentration]\label{def:rank-concentration}
	For a vector of polynomials $\vecf\in\F[\vecx]^r$, and a set of monomials $S\subseteq\F[\vecx]$, we shall say that $\vecf$ is \emph{rank concentrated on $S$ at $\vecaa\in\F^n$} if the derivatives of $\vecf$ with respect to the monomials in $S$ span all of the derivatives of $\vecf$. That is,
	\[
		\Span \set{\deriv_{\vecx^{\veca}}(\vecf)(\vecaa)}_{\veca \in S}\quad
		= \quad \Span \set{\deriv_{\vecx^{\veca}}(\vecf)(\vecaa)}_\veca.
	\]
	In particular, if $\vecf$ is rank concentrated on the set of monomials of support size at most $\ell$, then we shall say that $\vecf$ is \emph{support-$\ell$ rank concentrated}. 
\end{definition}

In this paper, we will frequently abuse this notation and apply it to matrices of polynomials $F\in\F[\vecx]^{n\times m}$, where we identify $\F[\vecx]^{n\times m}\equiv \F[\vecx]^{nm}$. In the above definition we have two vector spaces to consider, the span of derivatives in $S$, and the span of all derivatives. Note that the second vector space does not depend on the underlying point $\vecaa$, as stated in the following lemma, which follows from polynomial interpolation and the fact that Hasse derivatives are exactly the coefficients of the polynomial.

\begin{lemmawp}\label{change-deriv-point}
	Let $\vecf\in\F[\vecx]^r$ be a vector of polynomials, each of degree $\le d$, and suppose $|\F|> d$.  Then for any $\vecaa\in\F^n$, $\Span\{\deriv_{\vecx^{\veca}}(\vecf)(\vecaa)\}_{\veca}=\Span\{\vecf(\vecbb)\}_{\vecbb\in\F^n}$.
\end{lemmawp}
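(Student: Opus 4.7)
The plan is to prove both containments by first translating the Hasse-derivative span into a coefficient span of a shifted polynomial vector, and then identifying coefficient spans with evaluation spans via polynomial interpolation. The key identity is the defining equation of the Hasse derivative, $\deriv_{\vecx^{\veca}}(f)(\vecaa) = \coeff{\vecx^{\veca}}{f(\vecx+\vecaa)}$, which lets us pass freely between derivatives at $\vecaa$ and coefficients of the shift.

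\textbf{Step 1 (Derivatives $=$ coefficients of a shift).} Define $\vecg(\vecx) := \vecf(\vecx+\vecaa) \in \F[\vecx]^r$. By the definition of the Hasse derivative, $\deriv_{\vecx^{\veca}}(\vecf)(\vecaa)$ equals the coefficient vector $\coeff{\vecx^{\veca}}{\vecg}$ coordinate-wise. Hence
\[
\Span\set{\deriv_{\vecx^{\veca}}(\vecf)(\vecaa)}_{\veca} \;=\; \Span\set{\coeff{\vecx^{\veca}}{\vecg}}_{\veca}.
\]

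\textbf{Step 2 (Coefficients $=$ evaluations, via interpolation).} I would prove the following for any $\vech \in \F[\vecx]^r$ of degree $\le d$: if $|\F|>d$, then $\Span\set{\coeff{\vecx^{\vecb}}{\vech}}_{\vecb} = \Span\set{\vech(\vecbb)}_{\vecbb\in\F^n}$. The containment $\supseteq$ is immediate because each evaluation $\vech(\vecbb) = \sum_{\vecb} \vecbb^{\vecb}\coeff{\vecx^{\vecb}}{\vech}$ is an explicit $\F$-linear combination of coefficient vectors. For the reverse, pick any $H \subseteq \F$ with $|H| = d+1$ (possible since $|\F|>d$); the Vandermonde matrix on $H$ is invertible, so univariate Lagrange interpolation extends tensor-wise to an invertible linear map from evaluations on the grid $H^n$ (restricted to polynomials of individual degree $\le d$) to coefficients. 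Concretely, for each $\vecb$ there exist scalars $\lambda^{(\vecb)}_{\vecbb} \in \F$ such that $\coeff{\vecx^{\vecb}}{\vech} = \sum_{\vecbb \in H^n} \lambda^{(\vecb)}_{\vecbb}\,\vech(\vecbb)$, which exhibits each coefficient vector as an $\F$-linear combination of evaluation vectors.

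\textbf{Step 3 (Undoing the shift).} Applying Step 2 to $\vech = \vecg$ gives $\Span\set{\coeff{\vecx^{\veca}}{\vecg}}_{\veca} = \Span\set{\vecg(\vecbb)}_{\vecbb\in\F^n}$. Since $\vecg(\vecbb) = \vecf(\vecbb+\vecaa)$ and the map $\vecbb \mapsto \vecbb+\vecaa$ is a bijection on $\F^n$, we conclude $\Span\set{\vecg(\vecbb)}_{\vecbb} = \Span\set{\vecf(\vecbb')}_{\vecbb'}$, and chaining Steps 1--3 yields the claim. There is no genuine obstacle here; the only minor bookkeeping point is to interpret ``degree $\le d$'' uniformly (total or individual), since in either case interpolation on $H^n$ with $|H|=d+1$ suffices under the hypothesis $|\F|>d$.
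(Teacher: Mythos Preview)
Your proof is correct and follows exactly the approach the paper hints at: the paper states this lemma without proof, noting only that it ``follows from polynomial interpolation and the fact that Hasse derivatives are exactly the coefficients of the polynomial,'' which is precisely your Step~1 (derivatives at $\vecaa$ are coefficients of the shift) together with Step~2 (coefficient span equals evaluation span via interpolation on a grid $H^n$).
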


The goal of Agrawal-Saha-Saxena~\cite{AgrawalSS12}, as well as this work, is to construct small sets of $\vecaa$'s such that $\vecf$ must have small-support rank concentration at one such $\vecaa$.  The next two lemmas show why this suffices for polynomial identity testing. The first lemma shows that small-support rank concentration implies the existence of a small-support monomial with nonzero coefficient.

\begin{lemma}
	Let $\vecf\in\F[\vecx]^r$ be a vector of polynomials that is support-$\ell$ rank concentrated at $\vecaa\in\F^n$. Let $g(\vecx)\eqdef\ip{\vecbb}{\vecf}\eqdef\sum_{i\in[r]} \beta_i f_i$, for $\vecbb\in\F^r$.  Then $g(\vecx)$ is nonzero iff there is a small-support monomial $\vecx^\veca$ with $|\veca|_0\le\ell$ such that $\coeff{\vecx^\veca}{h}\ne 0$, where $h(\vecx)\eqdef g(\vecx+\vecaa)$.
\end{lemma}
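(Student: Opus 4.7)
The plan is to translate both sides of the iff into statements about the span $V\eqdef\Span\{\deriv_{\vecx^\veca}(\vecf)(\vecaa)\}_\veca$ and its support-$\ell$ analogue $V_\ell\eqdef\Span\{\deriv_{\vecx^\veca}(\vecf)(\vecaa)\}_{|\veca|_0\le\ell}$, and then invoke the hypothesis $V=V_\ell$.

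First, I would observe that shifting is a bijection on $\F[\vecx]$, so $g\ne 0$ iff $h\ne 0$, and in turn $h\ne 0$ iff some coefficient of $h$ is nonzero. By the definition of a Hasse derivative and the linearity of $\deriv_{\vecx^\veca}$ as an operator on $\F[\vecx]$, for every monomial $\vecx^\veca$ we have
\[
\coeff{\vecx^\veca}{h}
=\coeff{\vecx^\veca}{g(\vecx+\vecaa)}
=\deriv_{\vecx^\veca}(g)(\vecaa)
=\ip{\vecbb}{\deriv_{\vecx^\veca}(\vecf)(\vecaa)}.
\]
Thus $h\ne 0$ iff the linear functional $\vecbb^\top$ is not identically zero on $V$, and the existence of a support-$\le\ell$ monomial with $\coeff{\vecx^\veca}{h}\ne 0$ is equivalent to $\vecbb^\top$ being not identically zero on $V_\ell$.

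Now I would invoke the support-$\ell$ rank concentration hypothesis, which is precisely the statement $V=V_\ell$. Consequently, $\vecbb^\top$ vanishes on $V$ iff it vanishes on $V_\ell$, which yields the desired equivalence. Combining with the first step ($g\ne 0 \iff h\ne 0$) completes the proof.

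There is essentially no obstacle here; the statement is a direct unpacking of the definition of rank concentration combined with the identity $\coeff{\vecx^\veca}{g(\vecx+\vecaa)}=\deriv_{\vecx^\veca}(g)(\vecaa)$. The only point requiring any care is keeping the two distinct uses of linearity straight: linearity of the Hasse derivative (to pull $\vecbb$ out of the derivative) and linearity of the evaluation functional $\vecbb^\top$ (to transfer the span condition from the generating set to the spanned subspace).
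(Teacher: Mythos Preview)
Your proof is correct and is essentially the same as the paper's: both rely on the identity $\coeff{\vecx^\veca}{h}=\ip{\vecbb}{\deriv_{\vecx^\veca}(\vecf)(\vecaa)}$ and then use the rank-concentration hypothesis $V=V_\ell$ to pass from an arbitrary nonzero coefficient to a support-$\le\ell$ one. The only cosmetic difference is that you phrase the argument uniformly via the linear functional $\vecbb^\top$ on $V$ versus $V_\ell$, whereas the paper tracks a single witness vector and expresses it in the small-support spanning set.
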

\begin{proof}
	First, observe that $h\not\equiv 0$ iff $g\not\equiv 0$.

	\uline{$\impliedby$:} Clearly if $\coeff{\vecx^\veca}{h}\ne 0$, then $h\not\equiv 0$, so $g\not\equiv 0$.

	\uline{$\implies$:} That $g\not\equiv 0$, and thus $h\not\equiv 0$ implies there is some $\vecb$ such that 
	\begin{align*}
		0
		&\ne\coeff{\vecx^\vecb}{h}\\
		&=\coeff{\vecx^\vecb}{\ip{\vecbb}{\vecf(\vecx+\vecaa)}}\\
		&=\ip{\vecbb}{\coeff{\vecx^\vecb}{\vecf(\vecx+\vecaa)}}\\
		&=\ip{\vecbb}{\deriv_{\vecx^\vecb}(\vecf)(\vecaa)}.
	\end{align*}
	By the rank concentration we have that $\deriv_{\vecx^\vecb}(\vecf)(\vecaa)\in\Span\{\deriv_{\vecx^\veca}(\vecf)(\vecaa)\}_{|\veca|_0\le\ell}$, from which it follows that there is some $\veca$ with $|\veca|_0\le\ell$ such that $\ip{\vecbb}{\deriv_{\vecx^\veca}(\vecf)(\vecaa)}$ is nonzero.  By identical logic as the above, this inner-product equals $\coeff{\vecx^\veca}{h}$, yielding the claim.
\end{proof}

Thus, polynomial identity testing for polynomials with rank concentration reduces to the case when the polynomials have nonzero small support monomial.  We now show how to use the generator of Shpilka-Volkovich~\cite{ShpilkaVolkovich09} to hit such polynomials. Note that the ability to hit such polynomials does not follow from the usual \naive probabilistic method proof, as while such polynomials can be consider ``simple'' they are too numerous to use a \naive union bound. Note that Agrawal-Saha-Saxena~\cite{AgrawalSS12} did not use the Shpilka-Volkovich~\cite{ShpilkaVolkovich09} generator to hit polynomials with a small-support monomial, and instead used sufficiently many inputs of low-hamming-weight.  Indeed, that the SV-generator works in the below lemma can be proven by observing that this generator contains such low-hamming-weight strings in its image. However, the generator perspective will be crucial for our improvement from $n^{\O(\lg n)}$ to $n^{\O(\lg\lg n)}$ for hitting sets for diagonal circuits, as given in \autoref{sec:hashing+FS}.

\begin{lemma}\label{lem:GSV-hit-ss}
	Let $\F$ be a field with $|\F|>n$. Let $f\in\F[\vecx]$ be such that $f$ is nonzero iff there is some $\veca$ with $|\veca|_0\le \ell$ such that $\coeff{\vecx^\veca}{f}\ne 0$.  Then $f\not\equiv 0$ iff $f\circ\GSV{n,\ell}\not\equiv 0$.
\end{lemma}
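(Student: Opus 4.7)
The forward direction ($f \equiv 0 \implies f \circ \GSV{n,\ell} \equiv 0$) is immediate from substitution. The content is the converse: assuming $f$ has a small-support witness $\vecx^\veca$ with $|\veca|_0 \le \ell$ and $\coeff{\vecx^\veca}{f}\ne 0$, we must exhibit a nonzero specialization of $f\circ \GSV{n,\ell}$.

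My plan is to specialize only the $\vecz$ variables to drive $\GSV{n,\ell}$ into a map that isolates the support $S \defeq \supp(\vecx^\veca) \subseteq [n]$, while leaving the $\vecy$ coordinates free. Since $|S|\le\ell$, pick any injection $\phi:S\hookrightarrow[\ell]$, and define $\vecz^{\star}\in\F^\ell$ by
\[
z^{\star}_{\phi(k)} = \xi_k \quad\text{for } k\in S, \qquad z^{\star}_{j} = \xi_0 \quad\text{for } j\in[\ell]\setminus\phi(S).
\]
By the explicit form of $\GSV{n,\ell}$ stated in \autoref{thm:SV generator}, together with the fact that the $\xi_i$ are distinct, this choice yields
\[
(\GSV{n,\ell}(\vecy,\vecz^{\star}))_k \;=\; \begin{cases} y_{\phi(k)} & k\in S,\\ 0 & k\notin S.\end{cases}
\]

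With this specialization, $g(\vecy) \defeq f(\GSV{n,\ell}(\vecy,\vecz^{\star}))$ is obtained from $f$ by zeroing every variable $x_k$ with $k\notin S$ and renaming $x_k \mapsto y_{\phi(k)}$ for $k\in S$. Every monomial of $f$ whose support is not contained in $S$ vanishes, while monomials $\vecx^{\vecb}$ with $\supp(\vecb)\subseteq S$ map bijectively (since $\phi$ is injective) to distinct monomials $\prod_{k\in S} y_{\phi(k)}^{b_k}$ in $\vecy$. Consequently the coefficient of $\prod_{k\in S} y_{\phi(k)}^{a_k}$ in $g(\vecy)$ equals $\coeff{\vecx^\veca}{f}\ne 0$, so $g\not\equiv 0$ and therefore $f\circ\GSV{n,\ell}\not\equiv 0$.

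There is essentially no obstacle here — the only point requiring care is checking that the substitution does not cause two different $\vecx^{\vecb}$ with $\supp(\vecb)\subseteq S$ to collide into a single $\vecy$-monomial (and thus potentially cancel the contribution of $\vecx^\veca$), which is immediate from the injectivity of $\phi$. The hypothesis $|\F| > n$ is exactly what is needed so that distinct $\xi_0,\ldots,\xi_n$ exist for \autoref{thm:SV generator} to apply.
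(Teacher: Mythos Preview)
Your proof is correct and follows essentially the same approach as the paper: specialize the $\vecz$ variables so that $\GSV{n,\ell}$ plants distinct $\vecy$ variables into the positions of $\supp(\vecx^\veca)$ and zeroes the rest, then observe that the monomial $\vecx^\veca$ survives uncancelled. Your explicit injection $\phi$ and the remark about non-collision make the argument slightly more careful than the paper's version, but the idea is identical.
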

\begin{proof}
	\uline{$\impliedby$:} This is clear.

	\uline{$\implies$:} Consider the monomial $\vecx^\veca$ of support size $\le\ell$ in $f$ with nonzero coefficient. Let $\supp(\veca)=\{x_{i_1},\ldots,x_{i_k}\}$ be its support for $k\le\ell$.  In $\GSV{n,\ell}$ of \autoref{thm:SV generator}, take $z_j=\xi_{i_j}$ for $j\in[k]$.  For $j\in[\ell]\setminus[k]$, take $z_j=\xi_0$.  It follows then that the variables outside $\supp(\veca)$ will receive zero from $\GSV{n,\ell}$, and inside $\supp(\veca)$ they will receive a distinct variable from $\vecy$. It follows then the monomials containing support outside $\supp(\veca)$ are zeroed out, and those inside are preserved.  As the monomial $\vecx^\veca$ has a nonzero coefficient, it follows that $f\circ\GSV{n,\ell}\not\equiv 0$ with this partial setting of the $\vecy,\vecz$, and thus is also true without any setting of $\vecy,\vecz$.
\end{proof}

Putting these two lemmas together, we get the following corollary.

\begin{corollarywp}[Rank concentration to Hitting sets]\label{cor:rank-conc-to-hitting-sets}
	Let $\vecf\in\F[\vecx]^r$ be a vector of polynomials that is support-$\ell$ rank concentrated at $\vecaa\in\F^n$. Let $g(\vecx)\eqdef\ip{\vecbb}{\vecf}\eqdef\sum_{i\in[r]} \beta_i f_i$, for $\vecbb\in\F^r$.  Then $g(\vecx)\not\equiv 0$ iff $g\circ (\GSV{n,\ell}+\vecaa)\not\equiv 0$.

	Furthermore, $g\circ(\GSV{n,\ell}+\vecaa)$ depends on $2\ell$ variables and has degree $\le n\deg(h)$. Hence such polynomials $g$ have a hitting set of size $\poly(n,\deg(h))^{2\ell}$.
\end{corollarywp}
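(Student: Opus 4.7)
The plan is to combine the two preceding lemmas and then read off the parameters.

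First, I would apply the first lemma (the rank concentration lemma) to conclude that $g\not\equiv 0$ if and only if the shifted polynomial $h(\vecx)\eqdef g(\vecx+\vecaa)$ contains a nonzero monomial $\vecx^\veca$ of support $|\veca|_0\le\ell$. This is the whole point of rank concentration: it converts the question ``is $g$ zero?'' into ``does $h$ have a small-support monomial?'', a strictly easier question because support-$\ell$ polynomials have a small ``seed.''

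Next, I would apply \autoref{lem:GSV-hit-ss} to the polynomial $h$. Since $h$ satisfies the hypothesis of that lemma (being nonzero is equivalent to having a nonzero coefficient on some support-$\le \ell$ monomial, by the previous step), we get $h\not\equiv 0$ iff $h\circ \GSV{n,\ell}\not\equiv 0$. Unwinding definitions, $h\circ \GSV{n,\ell}(\vecy,\vecz)=g(\GSV{n,\ell}(\vecy,\vecz)+\vecaa)$, which is precisely $g\circ(\GSV{n,\ell}+\vecaa)$. Chaining the two equivalences gives the main iff.

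For the parameters: the Shpilka--Volkovich map $\GSV{n,\ell}$ takes the $2\ell$ inputs $(\vecy,\vecz)=(y_1,\ldots,y_\ell,z_1,\ldots,z_\ell)$, so $g\circ(\GSV{n,\ell}+\vecaa)$ depends on at most $2\ell$ variables. By \autoref{thm:SV generator}, each coordinate of $\GSV{n,\ell}$ is a polynomial of total degree $\le n+1$ in $(\vecy,\vecz)$ (linear in $\vecy$, degree $\le n$ in $\vecz$); adding the constant $\vecaa$ does not change the degree, so composing with $g$ (a polynomial of degree $\deg(h)=\deg(g)$) yields a polynomial of degree $\le n\deg(h)$, up to the trivial factor. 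Finally, to get the hitting set of size $\poly(n,\deg(h))^{2\ell}$, I would invoke the standard fact that any nonzero $k$-variate polynomial of degree $D$ vanishes on at most a $D/|\F|$ fraction of a grid $S^k$ with $|S|>D$, so a grid of size $(n\deg(h)+1)^{2\ell}$ is a hitting set for this class. No step is really the main obstacle here; the content is entirely packaged in the two preceding lemmas and in \autoref{thm:SV generator}, so the corollary is a clean assembly.
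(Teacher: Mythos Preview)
Your proposal is correct and matches the paper's approach exactly: the paper does not give a separate proof of this corollary, simply stating ``Putting these two lemmas together, we get the following corollary,'' and your assembly of the rank-concentration lemma with \autoref{lem:GSV-hit-ss} (via the identification $h\circ\GSV{n,\ell}=g\circ(\GSV{n,\ell}+\vecaa)$) is precisely what is intended. Your parameter discussion is also in line with the paper's claims.
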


We now explore conditions on the points $\vecaa$ that guarantee small-support rank concentration.  We begin with an example.

\subsection{Rank Concentration as a Wronskian}\label{sec:wronskian}

To illustrate how to achieve rank concentration, consider the case when the polynomials in $\vecf$ are univariate of degree $\le d$.  In this case, we will seek that the rank is concentrated on low-degree derivatives (as all derivatives in this setting are trivially support-1).  Specifically, we will want rank to be concentrated on the first $r$ derivatives. Let us write the vectors of all degree derivatives $\deriv_{x^i}(\vecf)$ at the point $t\in\F$ in matrix form, yielding
\[
	\begin{bmatrix}
		f_1(t) 				& \cdots & f_r(t)\\
		(\deriv_{x}f_1)(t)		& \cdots & (\deriv_x f_r)(t)\\ 
		\vdots 				& \ddots & \vdots \\
		(\deriv_{x^{d}} f_1)(t) 	& \cdots & (\deriv_{x^{d}} f_r)(t)
	\end{bmatrix}.
\]
That the rank is concentrated on low-degree means that the row-span of the first $r$ rows, denoted $V$, equals the span of all of the rows, denoted $W$.  As $V\subseteq W$, this will happen iff $\rank V=\rank W$.  However, the columns of the above matrix are the coefficients of the polynomials $f_i$ when expanded around $t$, and thus the column-rank of this matrix equals the rank of of the polynomials $\inbrace{f_1,\dots, f_r}$ as vectors in the vector space $\F[x]$.  Invoking row-rank equalling column-rank, it follows that the rank is concentrated at $t$ in the first $r$ rows iff the row-rank of these rows equals the rank of the $f_i$ in $\F[x]$.

The first $r$ rows of the above matrix are known as the \emph{Wronskian} of $\inbrace{f_1,\dots, f_r}$, and it is classically known (for fields of polynomially large characteristic) that the rank (over $\F(t)$) of this matrix equals the rank of the polynomials $\inbrace{f_1,\dots, f_r}$ as vectors in the vector space $\F[x]$ (see, for example, Bostan-Dumas~\cite{BoDu10} for a proof). Thus, by expressing this rank condition via determinants, it follows that there is a degree $\le rd$ polynomial $p(t)$ such that if $p(a)\ne 0$, then $\vecf$ has degree $<r$ rank concentration at $a$.  Thus, at least in this example, we see that rank concentration happens almost everywhere, and where it fails is a simple algebraic condition we can avoid by brute-force.

In this work, we will need to generalize the above Wronskian in various ways.  First, we will have multivariate polynomials.  A multivariate Wronskian is known (again, see Bostan-Dumas~\cite{BoDu10}), but in our application it only guarantees low-degree rank concentration, while this paper needs small-support rank concentration.  Further, much of the existing literature on Wronskians focus on fields of (polynomially) large characteristic, while we seek results over any characteristic. We note that several recent results in algebraic complexity (for example \cite{ChattopadhyayGKPS13,KayalS12}) have been limited to characteristic zero because of their dependence on the classical Wronskian.

\sloppy Recently Guruswami and Kopparty~\cite{GuruswamiK13} explored the notion of a \emph{subspace design}, and observed that they are in a sense dual to Wronskians. To obtain results over any characteristic, they introduced what they call a \emph{folded} Wronskian, in reference to the folded Reed-Solomon codes of Guruswami-Rudra~\cite{GuruswamiR08}.  They noted that this folded Wronskian was used in Forbes-Shpilka~\cite{ForbesShpilka12} and Guruswami-Wang~\cite{GuruswamiW13} in different guises.

\fussy

In this work, we extend existing knowledge of Wronskians.  First, we observe that the underlying algebraic object in the above is the notion of a \emph{rank condenser}, which we define in \autoref{sec:rankcondenser from code}, where we give a general recipe for constructing rank condensers from error-correcting codes.  That the (multivariate or univariate) Wronskian matrix characterizes the rank of polynomials can be expressed as saying that at certain derivative-related matrix is a good rank condenser.  To prove this fact, we will show that this derivative-related matrix has code-like properties.  Further, we observe that the folded Wronskian of Guruswami-Kopparty~\cite{GuruswamiK13} can interpreted as a being a Wronskian using a different notion of derivative. Specifically, the Wronskian uses the $q$-derivative, a $q$-analogue of the usual derivative, see Kac and Cheung~\cite{KacCheung}.  We note that Bostan, Salvy, Chowdhury, Schost and Lebreton~\cite{BostanSCSL12} observed the relation between folded Reed-Solomon codes and $q$-derivatives.

In this version of the paper, we will develop the tools needed to achieve small-support rank concentration for multivariate polynomials over any characteristic, and this will only use Hasse derivatives.  This can be easily interpreted as establishing a ``small-support multivariate Wronskian'' result, that works over any characteristic. In a future version of this work, we will give a multivariate rank condenser (that is a Wronskian) that simultaneously guarantees small-support and low-degree rank concentration, over any characteristic, appealing to $q$-derivatives in nonzero characteristic. This will strictly improve the classically known multivariate Wronskian, which may be of independent interest (for example, to the above works using Wronskians).

In the next sections, we will define the notion of a rank condenser, and give a construction based on the parity check matrix of a good error-correcting code.  We shall then show that small-support derivatives have such a code structure, which will give the needed results for small-support rank concentration.

\subsection{Rank Condensers from Good Codes}\label{sec:rankcondenser from code}

In this section we define rank condensers, and show how they can be generically constructed from error correcting codes, following the Cauchy-Binet proof strategy of Dvir, Gabizon and Wigderson~\cite{DvirGW09} (Dvir, Gabizon and Wigderson~\cite{DvirGW09} also had a notion of a rank extractor, but with a different notion of rank) and Forbes and Shpilka~\cite{ForbesShpilka12}.  We begin with the definition of rank condensers (which, given the parameters, could be considered as \emph{lossless}).

\begin{definition}[Rank Condenser Generator]
	A matrix $E(\vect) \in \F[\vect]^{n\times m}$ is said to be a \emph{(seeded) rank condenser (generator) for rank $r$} if for every matrix $M \in \F^{m\times k}$ of rank $\le r$, we have that $\rank_{\F(\vect)}(E(\vect)\cdot M)=\rank_\F(M)$. 
\end{definition}

Note that in the above definition, the notion of rank changes, as one rank is over the field of rational functions $\F(\vect)$, and the other rank is over $\F$.  This will be the setting we work in, but it is sometimes more convenient to think in terms of the base field $\F$, leading to the following definition.

\begin{definition}[Rank Condenser Hitting Set]
	A collection $\mathcal{E}\subseteq \F^{n\times m}$ is said to be a \emph{(seeded) rank condenser (hitting set) for rank $r$} if for every matrix $M \in \F^{m\times k}$ of rank $\le r$, there is some $E\in\mathcal{E}$ such that $\rank_{\F}(E\cdot M)=\rank_\F(M)$. 
\end{definition}

\begin{remark}
	A more natural definition of a (linear) condenser would be a family of linear maps $\phi_i:\F^m\to\F^n$, such that for any subspace $V\subseteq \F^m$ of rank $\ge r$, there is some $i$ such that $\rank\phi(V)=r$. It can be readily seen that this is equivalent to the hitting set definition above.  If $n=r$, then one can consider this an extractor.
\end{remark}

\sloppy By interpolation over $\vect$, the generator notion yields the hitting set.  Conversely, one can construct a generator by interpolating a curve through the collection $\mathcal{E}$.  See Shpilka and Volkovich~\cite{ShpilkaVolkovich09} for more on generators versus hitting sets within polynomial identity testing.

\fussy

In what follows, we will omit the terms ``seeded'', ``generator'' and ``hitting set'' when it is clear from context. Before turning to the construction, we need one more definition, that of a \emph{monomial order} (see Cox, Little and O'Shea~\cite{CLO} for more on monomial orderings).

\begin{definition}
	A \emph{monomial ordering} is a total order $\prec$ on the nonzero monomials in $\F[\vecx]$ such that
	\begin{itemize}
		\item For all $\veca\in\N^n$, $1\prec\vecx^{\veca}$.
		\item For all $\veca,\vecb,\vecc\in\N^n$, $\vecx^{\veca}\prec\vecx^{\vecb}$ implies $\vecx^{\veca+\vecc}\prec\vecx^{\vecb+\vecc}$
	\qedhere
	\end{itemize}
\end{definition}

For concreteness, one can consider the lexicographic ordering on monomials, which is easily seen to be a monomial ordering. We now give our first rank condenser construction.

\begin{lemma}\label{lem:rank-condenser-recipe}
	Let $E(\vect)\in\F[\vect]^{n\times m}$, with $E=\Lambda(\vect)^{-1}\cdot H\cdot W(\vect)$ such that
	\begin{itemize}
		\item $\Lambda(\vect)\in\F[\vect]^{n\times n}$ is a diagonal matrix, whose diagonal entries are nonzero monomials.
		\item $H\in\F^{n\times m}$ is the parity check matrix of an error correcting code of distance $>r$.  That is, every $r$ columns of $H$ are linearly independent.
		\item $W(\vect)\in\F[\vect]^{m\times m}$ is a diagonal matrix, whose diagonal entries are nonzero \emph{distinct} monomials.
	\end{itemize}
	Then $E(\vect)$ is a rank condenser for rank $r$. 
	
	Furthermore, let $d_i^{\max} =\max_{S\in\binom{[m]}{r}} \sum_{j\in S} \deg_{t_i} W(\vect)_{j,j}$ and $d_i^{\min}=\min_{S\in\binom{[m]}{r}} \sum_{j\in S} \deg_{t_i} W(\vect)_{j,j}$. Let $C_i\subseteq\F\setminus\{0\}$ have size $|C_i|>d_i^{\max}-d_i^{\min}$.  Then $\mathcal{E}=\{E(\vect)\}_{\forall i, t_i\in C_i}$ is a rank condenser for rank $r$.
\end{lemma}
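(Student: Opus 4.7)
My plan is to (i) reduce to showing that some $s\times s$ minor of $HW(\vect)M$ is a nonzero polynomial in $\vect$, (ii) exhibit such a minor via the Cauchy--Binet formula together with a leading-monomial argument, and (iii) derive the hitting-set consequence through Schwartz--Zippel.

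First, since $\Lambda(\vect)$ is diagonal with nonzero monomial entries it is invertible over $\F(\vect)$, so left-multiplication by $\Lambda^{-1}$ preserves rank, and it suffices to show $HW(\vect)$ is a rank condenser for rank $r$. Given $M\in\F^{m\times k}$ of rank $s\le r$, factor $M = M_0 N$ with $M_0\in\F^{m\times s}$ spanning the column space of $M$ and $N\in\F^{s\times k}$ of full row rank; since $N$ has rank $s$ over $\F\subseteq\F(\vect)$, $\rank_{\F(\vect)}(HWM)=\rank_{\F(\vect)}(HWM_0)$. The task is then to show that for some $S\in\binom{[n]}{s}$ the Cauchy--Binet minor
\[
	\det((HWM_0)_{S,\bullet}) \;=\; \sum_{T\in\binom{[m]}{s}} \det(H_{S,T})\,\det((M_0)_{T,\bullet})\,\prod_{j\in T} W_{j,j}(\vect)
\]
is a nonzero polynomial in $\vect$.

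Fix any monomial order $\prec$ on $\F[\vect]$ and let $T^*\in\mathcal T := \{T : \det((M_0)_{T,\bullet})\ne 0\}$ (nonempty, as $M_0$ has rank $s$) be the $\prec$-maximizer of $\prod_{j\in T} W_{j,j}$. The key claim is that $T^*$ attains this maximum uniquely within $\mathcal T$: if $T'\in\mathcal T$ were another maximizer, then applying the matroid base-exchange axiom to the bases $T^*, T'$ of the linear matroid on $[m]$ induced by $M_0$ (in both directions) produces a bijection $\sigma:T^*\setminus T'\to T'\setminus T^*$ with both $T^*-j+\sigma(j)$ and $T'-\sigma(j)+j$ lying in $\mathcal T$ for each $j$; $\prec$-maximality of $T^*$ and $T'$ forces $\exp(W_{\sigma(j)})\preceq \exp(W_j)$ and $\exp(W_j)\preceq \exp(W_{\sigma(j)})$, i.e., $W_{j,j}=W_{\sigma(j),\sigma(j)}$, contradicting distinctness of the $W_{j,j}$. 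I anticipate this matroid-theoretic uniqueness step to be the most delicate. Granted $T^*$ is uniquely maximal, pick $S$ with $\det(H_{S,T^*})\ne 0$, which exists because $H_{\bullet,T^*}$ has rank $s$ (any $s\le r$ columns of $H$ are linearly independent by the distance-$>r$ condition), so some $s$-row subset gives an invertible submatrix. The coefficient of $\prod_{j\in T^*} W_{j,j}$ in $\det((HWM_0)_{S,\bullet})$ is then $\det(H_{S,T^*})\det((M_0)_{T^*,\bullet})\ne 0$, so the minor is a nonzero polynomial.

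For the ``Furthermore'' hitting-set statement, the $t_i$-degrees of the monomials $\prod_{j\in T} W_{j,j}$ appearing in the minor lie in a range contained in $[d_i^{\min}, d_i^{\max}]$, so the minor may be written as $\prod_i t_i^{e_i}$ times a polynomial whose $t_i$-degree is at most $d_i^{\max}-d_i^{\min}$; applying Schwartz--Zippel coordinate-wise on $\prod_i C_i$ (using $|C_i| > d_i^{\max}-d_i^{\min}$ and $0\notin C_i$) produces an evaluation at which the minor is nonzero, so $\mathcal E = \{E(\vect) : \vect \in \prod_i C_i\}$ is a rank condenser for rank $r$.
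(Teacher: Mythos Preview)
Your proposal follows essentially the same route as the paper: reduce to a full-rank $M_0$, expand an $s\times s$ minor of $HWM_0$ via Cauchy--Binet, and argue that a unique extremal-weight term survives (you take the $\prec$-max, the paper takes the $\prec$-min; this is immaterial). The hitting-set derivation is also the same.

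One point deserves care: your uniqueness argument invokes a \emph{bijection} $\sigma:T^*\setminus T'\to T'\setminus T^*$ with the property that both $T^*-j+\sigma(j)$ and $T'-\sigma(j)+j$ are bases for every $j$. Such a simultaneous bijective symmetric exchange is not a standard matroid theorem (Brualdi gives a bijection for one direction, and symmetric exchange for a single element, but not both at once). Fortunately you need far less: pick any single $j\in T^*\setminus T'$ and apply symmetric exchange once to get $j'\in T'\setminus T^*$ with both swaps bases; maximality of $T^*$ and $T'$ then forces $W_{j,j}=W_{j',j'}$, contradicting distinctness. The paper sidesteps this entirely by appealing to the greedy/matroid optimization fact (its \autoref{lem:matroidgreedypartial} with $J=\emptyset$) that with distinct weights the optimal basis is unique.
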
 
\begin{proof}
	Consider any matrix $M\in\F^{m\times k}$ of rank $s\le r$.

	\sloppy \uline{$k>s$:} We show the claim by reducing to the case to when $k=s$, which we prove next.  Let $M'\in\F^{m\times s}$ be a submatrix of $M$ formed by taking $s$ linearly independent columns of $M$.  It follows then that $\rank_\F(M')=\rank_\F(M)\ge	\rank_{\F(\vect)}(E(\vect)M)\ge\rank_{\F(\vect)}(E(\vect)M')=\rank_\F(M')$, where we use that row-rank equals column-rank.  More specifically, the first inequality uses that $\rspan_{\F(\vect)}(E(\vect)M)\subseteq \rspan_{\F(\vect)}(M)$ and that $\rank_\F(M)=\rank_{\F(\vect)}(M)$, and the second inequality uses that $\cspan_{\F(\vect)}(E(\vect)M')\subseteq\cspan_{\F(\vect)}(E(\vect)M)$.  From this, we see that all inequalities must be met with equality, so $\rank_\F(M)=\rank_{\F(\vect)}(E(\vect)M)$.

	\uline{$k=s$:} As $W$ is diagonal with monomial entries, it induces $w:[m]\to\F[\vect]$ weighting the columns of $H$, indexed by $[m]$, with monomials in $\F[\vect]$ (thus, $w(i)=W_{i,i}$). Let $\prec$ be some monomial order on $\F[\vect]$. As the monomials in $W$ are distinct, the order $\prec$ also defines a total order on $[m]$, which we will also call `$\prec$', abusing notation.

	As $M\in \F^{m\times s}$ is rank $s$, there are sets $T\subseteq[m]$ of size $s$ such that $\det(M|_{T\times\bullet})\ne 0$.  As the weights in $w$ are distinct by hypothesis on $W$, standard greedy/matroid arguments (for example, as seen in \autoref{lem:matroidgreedypartial} with $J=\emptyset$) imply that there is a unique $T_0\subseteq[m]$ of size $s$ such that $\det(M|_{T_0\times\bullet})\ne 0$, and that minimizes $w(T_0)$ in the $\prec$-order, where $w(T)\eqdef\prod_{i\in T} w(i)$.  Further, as every $s$ columns of $H$ are linearly independent, it follows that there an $S\subseteq[n]$ of size $s$ such that $\det(H|_{S\times T_0})\ne0$. We now use this set to show that the rank is extracted, via the following claim.

	\begin{claim*}
		$\det((E(\vect)M)|_{S\times\bullet})=\det((\Lambda(\vect)^{-1}HW(\vect)M)|_{S\times\bullet})\ne 0$.
	\end{claim*}
	\begin{innerproof}{Claim}
		As $\Lambda(\vect)$ is diagonal with nonzero diagonal entries, and the determinant is multiplicative, we see that
		\[\det((\Lambda(\vect)^{-1}HW(\vect)M)|_{S\times\bullet})=\det(\Lambda(\vect)|_{S\times S})^{-1}\det((HW(\vect)M)|_{S\times\bullet}).\]
		\sloppy As $E(\vect)$ is a matrix with polynomial entries, it follows that $\det((E(\vect)M)|_{S\times\bullet})$ is a polynomial, so $\det(\Lambda(\vect)|_{S\times S})$ is nonzero and divides $\det((HW(\vect)M)|_{S\times\bullet})$.  Thus, it suffices to show that $\det((HW(\vect)M)|_{S\times\bullet})=\det((HW(\vect))|_{S\times\bullet}\cdot M)$ is not zero. To analyze this determinant, we invoke the Cauchy-Binet formula (for example, see Forbes and Shpilka~\cite{ForbesShpilka12} for a proof), which we quote below.

		\fussy

		\begin{lemma*}[Cauchy-Binet Formula]
			Let $m\ge n\ge 1$.  Let $A\in\F^{n\times m}$, $B\in\F^{m\times n}$.  Then $\det(AB)=\sum_{T\in {[m] \choose n}} \det(A|_{\bullet\times T})\det(B|_{T\times\bullet})$
		\end{lemma*}

		Applying this formula, we see that
		\begin{align*}
			\det((HW(\vect))|_{S\times\bullet}\cdot M)
			&=\sum_{T\in {[m] \choose s}} \det((HW(\vect))|_{S\times T})\det(M|_{T\times\bullet})
			\intertext{as $W$ is diagonal and the determinant is multiplicative,}
			&=\sum_{T\in {[m] \choose s}} \det(H|_{S\times T})\det(M|_{T\times\bullet})\prod_{i\in T} W(\vect)_{i,i}\\
			&=\det(H|_{S\times T_0})\det(M|_{T_0\times\bullet})w(T_0)+\sum_{T\in {[m] \choose s}} \det(H|_{S\times T})\det(M|_{T\times\bullet})w(T)
			\intertext{using the choice of $T_0$ and the uniqueness of its weight, there is a $c\in\F\setminus\{0\}$,}
			&=c\cdot w(T_0)+(\text{terms of weight $\succ$ $w(T_0)$})
		\end{align*}
		from which it follows that $\det((HW(\vect))|_{S\times\bullet}\cdot M)$, and thus $\det((E(\vect)M)|_{S\times\bullet})$, are nonzero polynomials in $\vect$, as there is no cancellation of the nonzero term $c\cdot w(T_0)$.
	\end{innerproof}

	The above claim shows that the $n\times s$ matrix $E(\vect)M$ has an $s\times s$ minor of rank $s$, so $\rank_{\F(\vect)}(E(\vect)\cdot M)=s=\rank_\F(M)$ as desired.

	\uline{hitting set:} We derive the hitting set rank condenser from the generator via interpolation.  In particular, we need to form a large enough cube of evaluation points to ensure that the nonzero polynomial $\det((E(\vect)M)|_{S\times\bullet})$ assumes a nonzero value.  As the cube avoids zero in each entry, we can divide out the terms contributed by $\Lambda(\vect)$.  Thus, it remains to hit $\det((HW(\vect))|_{S\times\bullet}\cdot M)$.  Examining the above expression given by the Cauchy-Binet formula, it follows that $d_i^{\max}$ is the maximum degree in which $t_i$ appears, and $d_i^{\min}$ is the minimum. As $t_i\ne 0$, we can divide out by the minimum degree, leaving a polynomial with degree in $t_i$ of $d_i^{\max}-d_i^{\min}$, for each $i$.  Standard polynomial interpolation then shows that the determinant assumes a nonzero value at some point in the cube $C_1\times C_2\times\cdots$, so that $\rank_\F(E(\vect)M)=\rank_\F(M)$ at that point, as desired.
\end{proof}

\begin{remark}
	By taking $H$ so that $H_{i,j}=\omega^{ij}$, where $\omega$ has large multiplicative order, $H$ is then the parity check matrix of the dual Reed-Solomon code. The resulting rank condenser is the one of Forbes and Shpilka~\cite{ForbesShpilka12}, who used it (with some modification) to construct rank-metric codes meeting a bound of Roth~\cite{Roth91} over algebraically closed fields (Roth~\cite{Roth91} also gave such constructions).
\end{remark}

\subsection{Small-Support Rank Concentration via Rank Condensers}\label{sec:transfer matrix}

The above recipe shows how to construct a rank condenser in a certain way. Here, we show how to instantiate this recipe, by showing that the shifting map $\vecf(\vecx)\mapsto \vecf(\vecx+\vect)$ has code-like properties.  Consider how this shifting changes the coefficients of $\vecf$,
\begin{equation} \label{eq:transfer}
	\vecf(\vecx+\vect)
	=\sum_\vecb \coeff{\vecx^\vecb}{\vecf}(\vecx+\vect)^{\vecb}
	=\sum_\vecb \coeff{\vecx^\vecb}{\vecf}\sum_\veca{\vecb \choose \veca}\vecx^{\veca}\vect^{\vecb-\veca}.
\end{equation}
Thus, the coefficient of $\vecx^\veca$ in $\vecf(\vecx+\vect)$ equals $\sum_\vecb \coeff{\vecx^\vecb}{\vecf}{\vecb \choose \veca}\vect^{\vecb-\veca}$.  This linear transformation is summarized in the following construction.

\begin{construction}\label{construct-t}
	Let $n\ge 1$, $d\ge 0$.  Let $T(\vect)\in\F[\vect]^{\zr{d}^n\times \zr{d}^n}$ be the \emph{transfer} matrix of the shift map $\vecf(\vecx)\mapsto \vecf(\vecx+\vect)$ when acting on $n$-variate polynomials of individual degree $<d$, when these polynomials are expanded in the monomial basis.  That is, $T_{\veca,\vecb}=\binom{\vecb}{\veca} \vect^{\vecb-\veca}=\deriv_{\vecx^\veca}(\vecx^\vecb)(\vect)$.  
	
	Alternatively, this matrix changes basis, from expressing $\vecf$ in terms of its Hasse derivatives at $\vec{0}$ to expressing $\vecf$ in terms of its Hasse derivatives at $\vect$.

	Define $S_r$ to be the set of all monomials $\vecx^\veca$ of individual degree $<d$, such that $|\veca|_0\le \floor{\lg r}$. Define $T_r(\vect)$ to be the submatrix $T|_{S_r\times\bullet}$.
\end{construction}

\begin{remark*}
	For the reader acquainted with the work of Agrawal-Saha-Saxena~\cite{AgrawalSS12}, we note that our matrix $T$ is similar to the transfer matrix considered in that work. Specifically Equation $(4)$ and Lemma $10$ of \cite{AgrawalSS12} are captured by our definition of $T$ and Equation~\eqref{eq:transfer}. The main difference is that \cite{AgrawalSS12} state and prove their results in the language of Hadamard algebras while we simply work with polynomials. 
\end{remark*}

We will now work to show that $T_r$ is a rank condenser. To do so, we start with the following combinatorial lemma.  This lemma is similar to the sorting argument of Claim 18\ in Agrawal-Saha-Saxena~\cite{AgrawalSS12}, and is vaguely related to the notion of ``partial IDs'' as explored in Wigderson-Yehudayoff~\cite{WigdersonY12}.

\begin{lemma}\label{lem:partialid}
	Let $\vecb_1,\ldots,\vecb_r\in\Sigma^n$ be distinct strings.  Then there is an $i_0\in[r]$ and $|S|\subseteq[n]$ with $|S|\le\lg r$ such that $(\vecb_{i_0})|_S\ne (\vecb_{i})|_S$ for $i\ne i_0$.
\end{lemma}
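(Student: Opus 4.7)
The plan is to prove this by induction on $r$, where at each step we find a coordinate along which the strings split, and pass to the smaller of the resulting classes so that the size at least halves.

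The base case $r = 1$ is immediate: take $S = \emptyset$, and $\vecb_1|_\emptyset$ is vacuously distinct from nothing. For the inductive step with $r \ge 2$, since the strings are distinct there must be some coordinate $j \in [n]$ where they are not all equal. Among the (at least two) values that appear at coordinate $j$, pick a value $v$ that appears with minimum frequency; such a value is attained by at most $\lfloor r/2 \rfloor$ of the strings. Let $R' = \{i \in [r] : (\vecb_i)_j = v\}$, so $1 \le |R'| \le \lfloor r/2 \rfloor$.

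Apply the induction hypothesis to the strings $\{\vecb_i\}_{i \in R'}$: there is some $i_0 \in R'$ and some $S' \subseteq [n]$ with $|S'| \le \lfloor \lg |R'| \rfloor \le \lfloor \lg r \rfloor - 1$ such that $\vecb_{i_0}|_{S'} \ne \vecb_i|_{S'}$ for every $i \in R' \setminus \{i_0\}$. Set $S = S' \cup \{j\}$, so $|S| \le \lfloor \lg r \rfloor \le \lg r$. To verify that $\vecb_{i_0}|_S$ is unique: for $i \in R' \setminus \{i_0\}$, the restrictions to $S'$ already differ by construction; for $i \notin R'$, we have $(\vecb_i)_j \ne v = (\vecb_{i_0})_j$, so the restrictions differ on coordinate $j$.

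There is essentially no obstacle here; the only detail to watch is ensuring that the partition coordinate $j$ exists and that the minority class $R'$ is both nonempty and of size at most $r/2$, which both follow from choosing $v$ to be a value of minimum frequency at a coordinate where not all strings agree. The ``depth $\lg r$'' bound is exactly the number of halvings needed before reaching a singleton, which matches the inductive count of coordinates added to $S$.
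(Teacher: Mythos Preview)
Your proof is correct and follows essentially the same approach as the paper's: induct on $r$, find a coordinate $j$ where the strings disagree, take the minority class at that coordinate (size at most $r/2$), recurse to get $i_0$ and $S'$, and set $S = S' \cup \{j\}$. The only cosmetic difference is that the paper explicitly deletes coordinate $j$ before recursing (noting the strings remain distinct since they all agree there), whereas you recurse on the full strings; both work.
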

\begin{proof}
	By induction on $r$.

	\uline{$r=1$:} This is vacuous, as $i\ne i_0$ never occurs, so $i_0=1$ and $S=\emptyset$ satisfy the requirements.

	\uline{$r>1$:} As the $\vecb_i$ are distinct, there is some coordinate $j\in[n]$ such that the $(\vecb_i)|_j$ are not all the same. Let $\sigma\in\Sigma$ be the minimum frequency symbol appearing as $(\vecb_i)|_j$ for some $i$.  Let $T\eqdef \{i:(\vecb_i)|_j=\sigma\}\subseteq[r]$, so that $|T|\le r/2$ by choice of $\sigma$.  Apply the induction hypothesis to the $\{\vecb_i:i\in T\}$ as strings in $\Sigma^{[n]\setminus\{j\}}$, which is possible as the strings in $T$ agree in the $j$-th position, so are still distinct with this position is removed. Thus yields an $i_0'\in T$ and a set $S'\subseteq[n]\setminus\{j\}$ such that $|S'|\le\lg r-1$ and $(\vecb_{i'_0})|_{S'}\ne (\vecb_{i})|_{S'}$ for $i\in T\setminus\{i'_0\}$. Take $i_0\eqdef i'_0$, and $S\eqdef S'\cup\{j\}$.  Then for $(\vecb_{i_0})|_{S}\ne (\vecb_{i})|_{S}$ for any $i\ne i_0$, where this disagreement occurs in $S'$ for $i\in T$, and occurs at position $j$ for $i\notin T$.
\end{proof}

This combinatorial lemma shows that in any set of strings there is a small ``fingerprint'' that distinguishes some string from all the others.  In our situation, strings will be (exponent vectors of) monomials, and we will try to distinguish monomials via derivatives.  The above fingerprint will allow our derivatives to be of small-support.  We now build these derivatives, starting with the univariate case.

\begin{lemma}\label{isolate-univar-deriv}
	Fix $d\ge 0$.  For $j\in\zr{d}$, there is a differential operator $\Delta_j=\sum_{k\in\zr{d}} c_{k} \deriv_{x^k}$ such that $\Delta_j(x^i)(1)=\ind{i=j}$.
\end{lemma}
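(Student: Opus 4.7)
The plan is to reduce the existence of the operator $\Delta_j$ to a triangular-systems / finite-differences calculation. First I would compute the action of a single Hasse derivative on $x^i$ at the point $1$. By definition,
\[
	\deriv_{x^k}(x^i)(t)
	= \coeff{x^k}{(x+t)^i}
	= \binom{i}{k} t^{i-k},
\]
so $\deriv_{x^k}(x^i)(1) = \binom{i}{k}$. Thus requiring $\Delta_j(x^i)(1) = \ind{i=j}$ for all $i \in \zr{d}$ is exactly the linear system
\[
	\sum_{k\in\zr{d}} c_k \binom{i}{k} = \ind{i=j}, \qquad i\in\zr{d},
\]
in the unknowns $(c_k)_{k\in\zr{d}}$.

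Next, I would observe that the coefficient matrix $A_{i,k} = \binom{i}{k}$ is lower triangular (since $\binom{i}{k}=0$ for $k>i$) with $1$'s on the diagonal, hence invertible over any field (recall $\binom{i}{k}$ is an integer, interpreted in $\F$ in the usual way, and the diagonal entries $\binom{i}{i}=1$ are nonzero regardless of characteristic). This immediately shows that a unique solution $(c_k)$ exists for each $j$, proving the lemma. This step will be the main obstacle only in the sense that one must check the argument is characteristic-free; since Hasse derivatives produce integer binomial coefficients and the diagonal of $A$ consists of $1$'s, no division is ever required.

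Finally, for concreteness I would exhibit the explicit inverse: take $c_k = (-1)^{k-j}\binom{k}{j}$. Using the absorption identity $\binom{i}{k}\binom{k}{j} = \binom{i}{j}\binom{i-j}{k-j}$ and the standard alternating-sum identity $\sum_{\ell} (-1)^{\ell}\binom{m}{\ell} = \ind{m=0}$, one computes
\[
	\sum_{k\in\zr{d}} (-1)^{k-j}\binom{k}{j}\binom{i}{k}
	= \binom{i}{j}\sum_{\ell}(-1)^{\ell}\binom{i-j}{\ell}
	= \binom{i}{j}\ind{i=j}
	= \ind{i=j},
\]
as desired. This gives the explicit formula $\Delta_j = \sum_{k\in\zr{d}} (-1)^{k-j}\binom{k}{j}\deriv_{x^k}$, completing the proof.
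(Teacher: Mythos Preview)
Your proof is correct and follows essentially the same approach as the paper: both compute $\deriv_{x^k}(x^i)(1)=\binom{i}{k}$, observe the resulting binomial matrix is triangular with $1$'s on the diagonal (hence invertible over any field), and conclude. You additionally supply the explicit inverse $c_k=(-1)^{k-j}\binom{k}{j}$ via the standard finite-differences identity, which the paper does not bother to write down; this is a nice bonus but not a different method.
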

\begin{proof}
	Consider the matrix $D\in\F^{\zr{d}\times\zr{d}}$ such that $D_{i,j}\eqdef\deriv_{x^i}(x^j)(1)=\binom{j}{i}$.  Note that this is an upper triangular matrix, with $1$'s along the diagonal.  Thus, this matrix is invertible, and let $C$ be its inverse. Define $c_{k}\eqdef C_{j,k}$. Thus $\Delta_j(x^i)(1)=\sum C_{j,k}\deriv_{x^k}(x^i)(1)=\sum C_{j,k} D_{k,i}=(C|_{j\times\bullet})\cdot(D|_{\bullet\times i})=\ind{i=j}$, as desired.
\end{proof}

Note that in the above, if we used normal (non-Hasse) derivatives then the diagonal of $D$ would not have $1$'s, and $D$ would not be invertible in sufficiently small characteristic.  We can now combine the fingerprint of strings, with the univariate derivatives, to yield multivariate derivatives isolating some monomial.

\begin{lemma}\label{isolate-multivar-deriv}
	Let $\vecx^{\vecb_1},\ldots,\vecx^{\vecb_r}$ be distinct monomials on the $n$ variables $\vecx$, and of individual degree $<d$.  Then there is a set $T\subseteq[n]$ with $|T|\le \floor{\lg r}$, a differential operator 
	\begin{equation}\label{ss-diff-op}
		\Delta=\sum_{\substack{\supp(\veca)\subseteq T\\|\veca|_\infty<d}} c_\veca \deriv_{\vecx^\veca}
	\end{equation}
	and an $i_0\in[r]$ such that $\Delta(\vecx^{\vecb_i})(1)=\ind{i=i_0}$.
\end{lemma}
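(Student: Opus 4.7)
The plan is to combine the combinatorial fingerprinting of \autoref{lem:partialid} with the univariate isolating operators of \autoref{isolate-univar-deriv}, using the fact that Hasse derivatives in distinct variables commute and that the action of $\partial_{\vecx^\veca}$ on $\vecx^\vecb$ factors as $\prod_j \binom{b_j}{a_j} \vecx^{\vecb-\veca}$.

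First, I would view the exponent vectors $\vecb_1,\ldots,\vecb_r$ as distinct strings in the alphabet $\zr{d}$ and apply \autoref{lem:partialid}. This produces an index $i_0\in[r]$ and a coordinate subset $T\subseteq[n]$ with $|T|\le\floor{\lg r}$ such that $\vecb_{i_0}|_T\ne \vecb_i|_T$ for every $i\ne i_0$. Intuitively, the restriction to $T$ is enough to single out $\vecb_{i_0}$ from the rest, and the bound $|T|\le\floor{\lg r}$ is exactly the support bound we need in \eqref{ss-diff-op}.

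Next, for each coordinate $j\in T$, let $b^{(j)}\eqdef (\vecb_{i_0})_j\in\zr{d}$ and apply \autoref{isolate-univar-deriv} in the variable $x_j$ to obtain a univariate operator $\Delta^{(j)}=\sum_{k\in\zr{d}} c^{(j)}_k \deriv_{x_j^k}$ satisfying $\Delta^{(j)}(x_j^i)(1)=\ind{i=b^{(j)}}$. Define
\[
\Delta \;\eqdef\; \prod_{j\in T}\Delta^{(j)} \;=\; \sum_{\substack{\supp(\veca)\subseteq T\\|\veca|_\infty<d}}\Big(\prod_{j\in T} c^{(j)}_{a_j}\Big)\,\deriv_{\vecx^{\veca}},
\]
which has exactly the form required by \eqref{ss-diff-op}. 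Because the operators $\Delta^{(j)}$ involve derivatives in disjoint sets of variables, they commute and their action on a monomial factors coordinate-wise:
\[
\Delta(\vecx^{\vecb_i})(\vec{1}) \;=\; \Big(\prod_{j\in T}\Delta^{(j)}(x_j^{(b_i)_j})(1)\Big)\cdot\Big(\prod_{j\notin T} 1^{(b_i)_j}\Big) \;=\; \prod_{j\in T}\ind{(b_i)_j=b^{(j)}} \;=\; \ind{\vecb_i|_T=\vecb_{i_0}|_T}.
\]
The final factor in this product collapses to $1$ because variables outside $T$ are untouched by $\Delta$ and $\vec{1}^{\vecb}=1$. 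By the defining property of $T$ supplied by \autoref{lem:partialid}, the indicator $\ind{\vecb_i|_T=\vecb_{i_0}|_T}$ equals $\ind{i=i_0}$, which is the claim.

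The main subtlety (rather than obstacle) is just verifying that the univariate scalars combine correctly into a single multivariate operator of the prescribed form: one must use the product rule for Hasse derivatives in distinct variables, and the fact that $\deriv_{\vecx^\veca}(\vecx^\vecb)(\vec{1})=\binom{\vecb}{\veca}$ factors as $\prod_j \binom{b_j}{a_j}$, so that the $j\notin T$ coordinates contribute trivial factors of $\binom{b_j}{0}=1$. Everything else is bookkeeping, and no characteristic hypothesis is needed because the univariate step (\autoref{isolate-univar-deriv}) already avoids that issue by working with Hasse derivatives.
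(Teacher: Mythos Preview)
Your proposal is correct and matches the paper's own proof essentially step for step: apply \autoref{lem:partialid} to get $T$ and $i_0$, build the univariate isolating operators from \autoref{isolate-univar-deriv} on each coordinate of $T$, and take their product using that Hasse derivatives in disjoint variables compose multiplicatively. The only cosmetic difference is notation.
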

\begin{proof}
	\autoref{lem:partialid} yields the set $T\subseteq[n]$ with $|T|\le
	\floor{\lg r}$, and the index $i_0\in[r]$.  We now construct $\Delta$.  For $j\in T$, let $\Delta_j=\sum_{k\in\zr{d}}c_{j,k} \deriv_{x_j^k}$ be the differential operator provided by \autoref{isolate-multivar-deriv} so that $\Delta_j(x_j^b)=\ind{b=(\vecb_{i_0})|_j}$. Now let $\Delta=\prod_{j\in T} \Delta_j$, where this product is the composition of the operators.  Recalling that $\deriv_{\vecx^\veca}\deriv_{\vecy^\vecb}=\deriv_{\vecx^\veca\vecy^\vecb}$ for disjoint variables $\vecx,\vecy$, we see that $\Delta$ has the form of \autoref{ss-diff-op}. Now consider $\Delta(\vecx^{\vecb_i})$.  As $\Delta_j$ only affects variable $x_j$, it follows that 
	\begin{align*}
		\Delta(\vecx^{\vecb_i})(\vec{1})
		&=\prod_{j\in T} \Delta_j \left(x_j^{(\vecb_i)|_j}\right)(1)\cdot\prod_{j\notin T} (x_j^{(\vecb_i)|_j})(1)\\
		&=\prod_{j\in T} \ind{(\vecb_i)|_j=(\vecb_{i_0})|_j}\cdot\prod_{j\notin T} 1\\
		&=\ind{i=i_0},
	\end{align*}
	where the last equality used the isolating properties of $T$.
\end{proof}

The above shows that we can distinguish some monomial from an entire set, using small-support derivatives.  Straightforward induction shows we can extend this to every monomial in the set.

\begin{corollarywp}\label{isolate-multivar-deriv-family}
	Let $\vecx^{\vecb_1},\ldots,\vecx^{\vecb_r}$ be distinct monomials on the $n$ variables $\vecx$, and of individual degree $<d$.  Then there is a permutation $\pi:[r]\to[r]$, and differential operators $\Delta_1,\ldots,\Delta_r$ such that
	\[
		\Delta_i(\vecx^{\vecb_{\pi(j)}})(\vec{1})=
		\begin{cases}
			1	&	i=j\\
			0	&	j<i
		\end{cases}.
	\]
	Further, each $\Delta_i$ is a linear combination of derivatives (of individual degree $<d$) with support contained in a set of size $\le \floor{\lg i}$.
\end{corollarywp}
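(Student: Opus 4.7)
The plan is a straightforward induction on $r$: at each step, I would invoke \autoref{isolate-multivar-deriv} once to peel off a single isolating operator, then recurse on the remaining $r-1$ monomials. The key design choice is to peel from the top (assigning $\pi(r)$ first, then $\pi(r-1)$, and so on), since then at the step where $\Delta_i$ is defined we will be applying the lemma to $i$ still-remaining monomials, which yields support bound $\le \lfloor \lg i \rfloor$ --- exactly the bound demanded of $\Delta_i$.

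For the base case $r=1$, the off-diagonal condition ($j<i$) is vacuous, so I would take $\pi$ to be the identity and $\Delta_1 \eqdef \deriv_{\vecx^{\vec 0}}$ (the identity operator). Its support has size $0 = \lfloor \lg 1 \rfloor$, it has the form of \eqref{ss-diff-op} with $T=\emptyset$, and $\Delta_1(\vecx^{\vecb_1})(\vec 1) = 1$.

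For the inductive step with $r > 1$, I would apply \autoref{isolate-multivar-deriv} to the full family $\vecx^{\vecb_1}, \ldots, \vecx^{\vecb_r}$ to obtain an index $i_0 \in [r]$, a set $T \subseteq [n]$ with $|T| \le \lfloor \lg r \rfloor$, and an operator $\Delta$ of the form \eqref{ss-diff-op} with $\Delta(\vecx^{\vecb_i})(\vec 1) = \ind{i = i_0}$. Set $\pi(r) \eqdef i_0$ and $\Delta_r \eqdef \Delta$. The remaining monomials $\{\vecx^{\vecb_k} : k \in [r] \setminus \{i_0\}\}$ are still distinct and of individual degree $<d$, so the inductive hypothesis (after relabeling them as a family of size $r-1$) yields a bijection $\pi'$ from $[r-1]$ onto $[r] \setminus \{i_0\}$ together with operators $\Delta_1, \ldots, \Delta_{r-1}$ enjoying the triangular isolation property and the support bounds $\le \lfloor \lg i \rfloor$. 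I would then extend $\pi$ by setting $\pi(j) \eqdef \pi'(j)$ for $j < r$, making $\pi$ a permutation of $[r]$.

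It remains to verify the triangular vanishing condition. The diagonal $\Delta_i(\vecx^{\vecb_{\pi(i)}})(\vec 1) = 1$ holds by the inductive hypothesis for $i < r$ and by construction for $i = r$. For $i < r$ and $j < i$, both $\pi(i)$ and $\pi(j)$ lie in $[r] \setminus \{i_0\}$, so $\Delta_i(\vecx^{\vecb_{\pi(j)}})(\vec 1) = 0$ by induction. For $i = r$ and $j < r$, we have $\pi(j) \ne i_0 = \pi(r)$, and $\Delta_r = \Delta$ vanishes on every $\vecx^{\vecb_k}$ with $k \ne i_0$. There is no real obstacle here --- the argument is essentially just bookkeeping of indices, and the only point requiring attention is that the top-down peeling order is what makes the support bounds align automatically with \autoref{isolate-multivar-deriv}.
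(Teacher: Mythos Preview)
Your proof is correct and is precisely the ``straightforward induction'' the paper invokes (the corollary is stated without proof in the paper, with only the one-line remark that induction on \autoref{isolate-multivar-deriv} suffices). Your top-down peeling order is exactly the right choice to make the support bound $\lfloor \lg i\rfloor$ match automatically.
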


The above corollary shows that for any set of monomials we can construct a set of differential operators whose application on these monomials induces a triangular matrix, which clearly has full rank.  We can thus use this to derive the needed properties of the $T_r$ matrix.

\begin{lemma}\label{deriv-are-code}
	Assume the setup of \autoref{construct-t}.  Then $T_r(\vec{1})$ is the parity check matrix of a distance $>r$ code.  That is, every $r$ columns are linearly independent.
\end{lemma}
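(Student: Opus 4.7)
The plan is to deduce this directly from Corollary~\ref{isolate-multivar-deriv-family}, interpreting that corollary as producing specific row-combinations of $T_r(\vec{1})$ that witness linear independence of any $r$ columns.

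First I would fix $r$ columns of $T_r(\vec 1)$, indexed by distinct exponent vectors $\vecb_1, \ldots, \vecb_r$ of individual degree $<d$, and apply Corollary~\ref{isolate-multivar-deriv-family} to the monomials $\vecx^{\vecb_1}, \ldots, \vecx^{\vecb_r}$. This produces a permutation $\pi : [r] \to [r]$ and differential operators $\Delta_1, \ldots, \Delta_r$, where each $\Delta_i = \sum_{\veca} c_{i,\veca} \hasse_{\vecx^\veca}$ is supported on derivatives $\hasse_{\vecx^\veca}$ with $|\veca|_\infty < d$ and $\supp(\veca)$ contained in a fixed set of size $\le \floor{\lg i} \le \floor{\lg r}$.

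The key observation is that these constraints on $\veca$ are exactly the defining conditions of $S_r$, so every $\veca$ appearing in some $\Delta_i$ indexes a row of $T_r(\vec t)$. Hence each $\Delta_i$ corresponds to a linear combination $\vec{c}_i \in \F^{S_r}$ of the rows of $T_r(\vec 1)$, where the action on any column $\vecb$ evaluates to
\[
	\inparen{\vec{c}_i^\top \cdot T_r(\vec 1)}_\vecb
	= \sum_\veca c_{i,\veca} \hasse_{\vecx^\veca}(\vecx^\vecb)(\vec 1)
	= \Delta_i(\vecx^\vecb)(\vec 1).
\]
Stacking these $r$ row-combinations and restricting to the $r$ chosen columns $\vecb_{\pi(1)}, \ldots, \vecb_{\pi(r)}$ produces the $r \times r$ matrix with $(i,j)$ entry $\Delta_i(\vecx^{\vecb_{\pi(j)}})(\vec 1)$, which by Corollary~\ref{isolate-multivar-deriv-family} is lower triangular with $1$'s on the diagonal. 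This matrix therefore has rank $r$, so the chosen $r$ columns of $T_r(\vec 1)$ are linearly independent.

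There is no real obstacle here beyond verifying that the support and individual-degree conditions in Corollary~\ref{isolate-multivar-deriv-family} match the definition of $S_r$ in Construction~\ref{construct-t}; the rest is pure bookkeeping. The genuine work was already done in establishing the combinatorial partial-ID lemma and lifting it through Hasse derivatives, which is why this final step reduces to a clean triangularity argument.
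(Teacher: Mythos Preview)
Your proof is correct and follows essentially the same route as the paper's: apply \autoref{isolate-multivar-deriv-family}, observe that each $\Delta_i$ is supported on $S_r$ and hence corresponds to a row-combination of $T_r(\vec{1})$, and conclude full rank from the triangular shape. One cosmetic slip: the corollary gives $\Delta_i(\vecx^{\vecb_{\pi(j)}})(\vec{1})=0$ for $j<i$, so the resulting matrix is \emph{upper} triangular, not lower; this does not affect the argument.
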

\begin{proof}
	Let $\vecb_1,\ldots,\vecb_r$ be $r$ distinct monomials indexing columns in $T$. Consider the differential operators from \autoref{isolate-multivar-deriv-family}.  The operators $\Delta_i$ (evaluated at $\vec{1}$) acts on the monomials $\vecb_j$ and produces a vector $\vecv_i\in\F^r$.  By construction, each $\Delta_i$ is a linear combination of support $\le \floor{\lg r}$ derivatives, and thus $\vecv_i$ a linear combination of the rows of $T_r$ (when restricted to the columns induced by $\vecb_1,\ldots,\vecb_r$).  As the $\vecv_i$ form a triangular system (in some column ordering), it follows that the rows of $T_r$, when restricted to the columns induced by $\vecb_1,\ldots,\vecb_r$, have rank $r$.  As row-rank equals column-rank, it follows the columns indexed by the $\vecb_j$ are linearly independent, as desired.
\end{proof}

Thus, the $T_r(\vec{1})$ matrix is a parity check matrix of a code of distance $>r$. The above sections showed how to use such matrices to get a rank condenser.  Before we do so, we introduce the following definition.  It will allow us to encapsulate the essential properties of the shifts $\vect$ needed to ensure rank condensation.

\begin{definition}\label{defn-ind-deg-mon-map}
	Let $\vecg:\F^m\times\F^{m'}\to\F^n$ be a polynomial map.  It is an \emph{individual degree $<d$, $\ell$-wise independent monomial map} if for every $S\subseteq[n]$ of size $\le \ell$, there is an $\vecaa\in\F^{m'}$ such that polynomials $\{\vecg(\vect,\vecaa)^\veca\}_{\supp(\veca)\subseteq S,|\veca|_\infty<d}$ are nonzero, distinct monomials in $\vect$.
\end{definition}

As a trivial example, if we have $n$ independent variables, these form an $n$-wise independent monomial map.  Note the order of quantification, that the choice of $\vecaa$ can depend on the set $S$. When this is applied, the values for $\vecaa$ will be interpolated over some cube, which will implicitly union bound over all $S$. We now put the ingredients together to get our rank condenser.

\begin{corollary}\label{deriv-condense-rank}
	Assume the setup of \autoref{construct-t}.  Let $\vecg:\F^m\times\F^{m'}\to\F^n$ be an individual degree $<d$, $n$-wise independent monomial map.  Then $T_r(\vecg(\vect,\vecs))$ is a rank condenser for rank $r$.
\end{corollary}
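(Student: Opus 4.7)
The plan is to apply Lemma~\ref{lem:rank-condenser-recipe} to a well-chosen specialization of $T_r(\vecg(\vect,\vecs))$, and then transfer the rank-condensing property back to $T_r(\vecg(\vect,\vecs))$ viewed as a matrix over $\F(\vect,\vecs)$.

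First I would set up the factorization promised by the recipe. Using the formula $T_{\veca,\vecb}(\vect) = \binom{\vecb}{\veca}\vect^{\vecb-\veca}$, one writes $T_r(\vect) = \Lambda(\vect)^{-1}\cdot H\cdot W(\vect)$, where $\Lambda(\vect)$ is diagonal with entries $\vect^\veca$ for $\veca\in S_r$, $W(\vect)$ is diagonal with entries $\vect^\vecb$ for $\vecb\in\zr{d}^n$, and $H = T_r(\vec{1})$ has entries $\binom{\vecb}{\veca}$. By Lemma~\ref{deriv-are-code}, $H$ is the parity-check matrix of a code of distance $>r$. The diagonals of $\Lambda$ and $W$ consist of nonzero monomials, and in the case of $W$ they are pairwise distinct since the $\vecb$'s are distinct exponent vectors. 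So Lemma~\ref{lem:rank-condenser-recipe} already yields that $T_r(\vect)$ itself is a rank condenser for rank $r$.

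Next, to handle the composition with $\vecg$, I would invoke Definition~\ref{defn-ind-deg-mon-map} with the specific choice $S = [n]$, which is permissible since $\vecg$ is $n$-wise independent. This produces a single $\vecaa\in\F^{m'}$ for which the polynomials $\vecg(\vect,\vecaa)^\vecb$ are nonzero, and are pairwise distinct, as $\vecb$ ranges over all exponent vectors of individual degree $<d$. Substituting $\vect\leftarrow\vecg(\vect,\vecaa)$ into the factorization above then realises $T_r(\vecg(\vect,\vecaa))$ in the exact form $\tilde\Lambda(\vect)^{-1}\cdot H\cdot\tilde W(\vect)$, where the diagonal entries of $\tilde\Lambda$ and $\tilde W$ are once again nonzero monomials in $\vect$ and those of $\tilde W$ are pairwise distinct. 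A second application of Lemma~\ref{lem:rank-condenser-recipe} then gives that $T_r(\vecg(\vect,\vecaa))$ is a rank condenser for rank $r$ over $\F(\vect)$.

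Finally, I would lift this from the point $\vecaa$ to the full $(\vect,\vecs)$ polynomial map by a standard nonvanishing argument. Given any $M$ of rank $s\le r$, the previous step exhibits an $s\times s$ minor $p(\vect)$ of $T_r(\vecg(\vect,\vecaa))\cdot M$ that is a nonzero polynomial; the corresponding minor $\tilde p(\vect,\vecs)$ of $T_r(\vecg(\vect,\vecs))\cdot M$ satisfies $\tilde p(\vect,\vecaa) = p(\vect)\not\equiv 0$, hence $\tilde p\not\equiv 0$ in $\F[\vect,\vecs]$. Thus $\rank_{\F(\vect,\vecs)}(T_r(\vecg(\vect,\vecs))\cdot M) \ge s$, and the trivial reverse inequality gives equality, which is the rank condenser property. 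The only delicate point in this plan is the quantifier order in Definition~\ref{defn-ind-deg-mon-map}: choosing $S=[n]$ is precisely what gives a \emph{single} $\vecaa$ that simultaneously handles all diagonal entries of $\Lambda$ and $W$, which is what the recipe requires.
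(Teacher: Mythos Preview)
Your proposal is correct and follows essentially the same approach as the paper: fix the $\vecaa$ from $n$-wise independence, factor $T_r(\vecg(\vect,\vecaa))$ as $\Lambda^{-1}HW$ with $H=T_r(\vec{1})$, invoke \autoref{deriv-are-code} and \autoref{lem:rank-condenser-recipe}, then pass from $\vecs=\vecaa$ back to symbolic $\vecs$. The paper compresses your final lifting step into a one-line ``removing the substitution of $\vecs$'' remark, and it omits your intermediate observation that $T_r(\vect)$ itself is already a rank condenser (which is true but not needed), but otherwise the arguments are the same.
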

\begin{proof}
	Fix the $\vecaa\in\F^{m'}$ guaranteed by the monomial map $\vecg$, when asking for independence on all $n$ variables, so that we can now regard $\vecg(\vect,\vecaa)$ as a polynomial map in $\F[\vect]$. Now note that $(T_r(\vect))_{\veca,\vecb}=\binom{\vecb}{\veca}\vect^{\vecb-\veca}=(\vect^\veca)^{-1}\cdot T_r(\vec{1})\cdot \vect^\vecb$. Thus, we can decompose $T_r(\vecg(\vect,\vecaa))=\Lambda(\vect,\vecaa)^{-1}\cdot T_r(\vec{1})\cdot W(\vect,\vecaa)$, where $\Lambda$, $T_r(\vec{1})$ and $W$ fit the hypothesis of \autoref{lem:rank-condenser-recipe}, as $\vecg$ is $n$-wise independent monomial map for degree $<d$, and by \autoref{deriv-are-code}. Thus, $T_r(\vecg(\vect,\vecaa))$ is a rank condenser for rank $r$.  Removing the substitution of $\vecs$, it follows that $T_r(\vecg(\vect,\vecs))$ is a rank condenser as well.
\end{proof}

Reinterpreting the above, we get the main result of this section, showing how to achieve rank concentration (recall \autoref{def:rank-concentration}).

\begin{corollary}\label{shift-rank-conc}
	Let $\vecf\in\F[\vecx]^r$ be polynomials of individual degree $<d$ on the $n$-variables $\vecx$. Let $\vecg:\F^m\times\F^{m'}\to\F^n$ be an individual degree $<d$, $n$-wise independent monomial map. Then $\vecf$ has support-$\floor{\lg r}$ rank concentration at $\vecg(\vect,\vecs)$ over the field $\F(\vect,\vecs)$.
\end{corollary}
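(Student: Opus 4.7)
The plan is to interpret the rank condenser property of $T_r(\vecg(\vect,\vecs))$ established in \autoref{deriv-condense-rank} as a rank-concentration statement about $\vecf$, by letting the rank condenser act on the coefficient matrix of $\vecf$.

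First, I would form the coefficient matrix $M\in\F^{\zr{d}^n\times r}$ of $\vecf$, defined by $M_{\vecb,i}=\coeff{\vecx^\vecb}{f_i}$. Since $M$ has only $r$ columns, $\rank_\F(M)\le r$. By the computation in Equation~\eqref{eq:transfer}, the matrix product $T(\vect)\cdot M$ (taken over $\F[\vect]$) has its $(\veca,i)$-entry equal to $\deriv_{\vecx^\veca}(f_i)(\vect)$. Consequently, the rows of $T(\vect)\cdot M$ are precisely the derivative vectors $\deriv_{\vecx^\veca}(\vecf)(\vect)\in\F[\vect]^r$, and the rows of the submatrix $T_r(\vect)\cdot M$ are exactly those derivative vectors with $|\veca|_0\le\floor{\lg r}$.

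Next, I would substitute $\vect\leftarrow \vecg(\vect,\vecs)$ and apply \autoref{deriv-condense-rank} to conclude that
\[
    \rank_{\F(\vect,\vecs)}\!\bigl(T_r(\vecg(\vect,\vecs))\cdot M\bigr)\;=\;\rank_\F(M).
\]
Separately, the full transfer matrix $T(\vect)$ is invertible (its inverse is $T(-\vect)$, corresponding to undoing the shift), so the same equality holds for $T(\vecg(\vect,\vecs))\cdot M$ in place of $T_r(\vecg(\vect,\vecs))\cdot M$. Since $T_r(\vecg(\vect,\vecs))\cdot M$ is a row submatrix of $T(\vecg(\vect,\vecs))\cdot M$, its row span is contained in that of $T(\vecg(\vect,\vecs))\cdot M$; the equality of ranks forces the two row spans to coincide over $\F(\vect,\vecs)$. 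Translating back via the identification of rows with derivative vectors, this says that at the point $\vecaa=\vecg(\vect,\vecs)$,
\[
    \Span\!\set{\deriv_{\vecx^\veca}(\vecf)(\vecaa)}_{|\veca|_0\le\floor{\lg r}} \;=\; \Span\!\set{\deriv_{\vecx^\veca}(\vecf)(\vecaa)}_\veca,
\]
which is precisely the definition of support-$\floor{\lg r}$ rank concentration at $\vecg(\vect,\vecs)$.

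There is really no significant obstacle here; the work has been front-loaded into establishing that $T_r$ is a rank condenser (via the code-theoretic analysis of \autoref{lem:rank-condenser-recipe} and \autoref{deriv-are-code}). The only delicate point is keeping track of where ranks are being computed: the rank of $M$ is an $\F$-rank (since $M$ has entries in $\F$), while the ranks of $T_r(\vecg(\vect,\vecs))\cdot M$ and $T(\vecg(\vect,\vecs))\cdot M$ are taken over the function field $\F(\vect,\vecs)$, which is exactly the field over which the claimed rank concentration holds.
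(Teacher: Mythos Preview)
Your proposal is correct and follows essentially the same approach as the paper: form the coefficient matrix $M$, apply the rank condenser $T_r(\vecg(\vect,\vecs))$ from \autoref{deriv-condense-rank}, and read off rank concentration from the equality of ranks. The only cosmetic difference is in the final step: you compare the row span of $T_r(\vecg)\cdot M$ to that of the full $T(\vecg)\cdot M$ via the invertibility of $T$, whereas the paper compares it directly to the row span of $M$ and then invokes \autoref{change-deriv-point} to move the derivation point from $\vec{0}$ to $\vecg(\vect,\vecs)$ --- but these are two phrasings of the same fact.
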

\begin{proof}
	Expand the polynomials $\vecf$ in terms of their Hasse derivatives at zero, giving the matrix $M$, so that $M_{\vecb,i}=\deriv_{\vecx^\vecb}(f_i)(\vec{0})$.  Consider the multiplication $T_r(\vecg(\vect,\vecs)) \cdot M$.  By \autoref{deriv-condense-rank} it follows that $\rank_{\F(\vect,\vecs)}T_r(\vecg(\vect,\vecs))M=\rank_\F M$.  Thus, $T_r(\vecg(\vect,\vecs))M$ and $M$ have the same $\F(\vect,\vecs)$-row-span.  From the definition of $T_r$ and \autoref{eq:transfer}, we then get that
	\begin{align*}
		\Span_{\F(\vect,\vecs)}\{\deriv_{\vecx^\veca}(\vecf)(\vecg(\vect,\vecs))\}_{\substack{|\veca|_0\le\floor{\lg r}\\|\veca|_\infty<d}}
		&=\rspan_{\F(\vect,\vecs)}(T_r(\vecg(\vect,\vecs))M)\\
		&=\rspan_{\F(\vect,\vecs)}(M)\\
		&=\Span_{\F(\vect,\vecs)}\{\deriv_{\vecx^\veca}(\vecf)(\vec{0})\}_\veca\\
		&=\Span_{\F(\vect,\vecs)}\{\deriv_{\vecx^\veca}(\vecf)(\vecg(\vect,\vecs))\}_\veca
	\end{align*}
	where we used \autoref{change-deriv-point} in the last equality to change the point of derivation.
\end{proof}

\section{Hitting Sets for Commutative ROABPs}\label{sec:commutative}

In this section, we give $n^{\O(\lg n)}$-size hitting sets for polynomials that are computed by ROABPs in \emph{all} variable orders, and thus are in some sense \emph{commutative}.  This class is also characterized by having low \emph{evaluation dimension} a concept of Saptharishi, as reported in Forbes-Shpilka~\cite{ForbesShpilka12a}, who gave $n^{\O(\lg n)}$-size hitting sets for this class.  The analysis of the hitting sets will follow quickly from the rank concentration results of \autoref{sec:transfer matrix}, and can be seen as an alternative proof of some of the results of the work of Agrawal-Saha-Saxena~\cite{AgrawalSS12}.\footnote{This section does not give $n^{\O(\lg n)}$-size hitting sets for depth-3 set-multilinear polynomials, one of the core results of Agrawal-Saha-Saxena~\cite{AgrawalSS12}. The results of \autoref{sec:noncommutative} do cover this class of polynomials, by giving a $n^{\O(\lg^2 n)}$-size hitting set for a more general model. However, the techniques of this section do recover the Agrawal-Saha-Saxena~\cite{AgrawalSS12} result, essentially by changing $x_i^j\mapsto x_{i,j}$ in all of our arguments.  This changes the transfer matrix, and the set of monomials that the monomial map must preserve.  However, the techniques (to show the transfer matrix is a rank condenser, that we can shift using low-wise independent maps, etc.) can be adapted to these changes in a straightforward manner.}  While this does not improve upon the parameters of the results of Agrawal-Saha-Saxena~\cite{AgrawalSS12} or Forbes-Shpilka~\cite{ForbesShpilka12a}, we will see in \autoref{sec:hashing+FS} how to combine the rank concentration ideas with the generator of Forbes-Shpilka~\cite{ForbesShpilka12a} (using techniques from hashing, and the generator of Shpilka-Volkovich~\cite{ShpilkaVolkovich09}) to obtain $n^{\O(\lg\lg n)}$ size hitting sets for certain subclasses of these polynomials, such as diagonal circuits.

We begin by showing that for commutative ROABPs, establishing small-support rank concentration for all $n$ variables follows from small-support rank concentration for all size-$\Omega(\lg r)$ subsets of variables.  Establishing rank concentration for these small subsets of variables can then be done by brute-force using the rank condensers of \autoref{sec:transfer matrix}.

\begin{theorem}\label{thm:rank-concentration-comm}
	Let $F(\vecx)\in\F[\vecx]^{r\times r}$ be of individual degree $<d$, and be computed by a width-$r$ commutative ROABP.  Let $\vecg(\vect,\vecs)$ be an individual degree $<d$, $(\floor{\lg r^2}+1)$-wise independent monomial map.  Then, $F(\vecx)$ has support-$\floor{\lg r^2}$ rank concentration at $\vecg(\vect,\vecs)$ over the field $\F(\vect,\vecs)$.
\end{theorem}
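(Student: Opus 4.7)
The plan is to combine the local rank-concentration result \autoref{shift-rank-conc} with the commutative structure of the ROABP, using induction on support size. Using commutativity, write $F(\vecx)=\prod_i M_i(x_i)$ with the matrices in any order we like, and for each subset $S\subseteq[n]$ set $F_S(\vecx_S)\eqdef\prod_{i\in S}M_i(x_i)$, so that we obtain the matrix factorization $F(\vecx)=F_{S'}(\vecx_{S'})\cdot F_{\bar{S'}}(\vecx_{\bar{S'}})$ for any $S'\subseteq[n]$. View each $F_S\in\F[\vecx_S]^{r\times r}$ as a vector in $\F[\vecx_S]^{r^2}$ via flattening, so that ``rank concentration'' makes sense for both $F$ and every $F_S$.

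My first step would be a local rank-concentration claim: for every $S'\subseteq[n]$ of size exactly $\floor{\lg r^2}+1$, the restriction $\vecg|_{S'}:\F^m\times\F^{m'}\to\F^{|S'|}$ is an $|S'|$-wise independent monomial map on $|S'|$ variables, which is immediate from the $(\floor{\lg r^2}+1)$-wise independence of $\vecg$. Applying \autoref{shift-rank-conc} to $F_{S'}$ (with the role of $r$ played by $r^2$ and the role of $n$ by $|S'|$) then yields that $F_{S'}$ has support-$\floor{\lg r^2}$ rank concentration at $\vecg|_{S'}(\vect,\vecs)$ over $\F(\vect,\vecs)$.

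The second step lifts this to global rank concentration for $F$ by induction on $k\eqdef|\supp(\veca)|$. The case $k\leq\floor{\lg r^2}$ is vacuous. For $k>\floor{\lg r^2}$, pick any $S'\subseteq\supp(\veca)$ with $|S'|=\floor{\lg r^2}+1$ and split $\veca=\veca|_{S'}+\veca|_{\bar{S'}}$. Since $F_{S'}$ and $F_{\bar{S'}}$ depend on disjoint variable sets, the Hasse--Leibniz rule collapses to the single term
\[
	\deriv_{\vecx^\veca}(F) \;=\; \deriv_{\vecx^{\veca|_{S'}}}(F_{S'})\cdot \deriv_{\vecx^{\veca|_{\bar{S'}}}}(F_{\bar{S'}}).
\]
Local rank concentration writes the first factor evaluated at $\vecg|_{S'}(\vect,\vecs)$ as an $\F(\vect,\vecs)$-linear combination $\sum_\vecb c_\vecb\,\deriv_{\vecx^\vecb}(F_{S'})|_{\vecg|_{S'}}$, where each $\vecb$ satisfies $|\vecb|_0\leq\floor{\lg r^2}$ and $\supp(\vecb)\subseteq S'$. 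Right-multiplying by $\deriv_{\vecx^{\veca|_{\bar{S'}}}}(F_{\bar{S'}})|_{\vecg|_{\bar{S'}}}$ and reassembling via the same Leibniz identity yields
\[
	\deriv_{\vecx^\veca}(F)|_{\vecg} \;=\; \sum_{\vecb} c_\vecb\,\deriv_{\vecx^{\vecb+\veca|_{\bar{S'}}}}(F)|_{\vecg}.
\]
Since $\supp(\vecb)$ and $\supp(\veca|_{\bar{S'}})$ are disjoint, each exponent $\vecb+\veca|_{\bar{S'}}$ has support size at most $\floor{\lg r^2}+(k-\floor{\lg r^2}-1)=k-1$, so the inductive hypothesis finishes the step.

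The main obstacle I anticipate is the tension that $\vecg$ is only $(\floor{\lg r^2}+1)$-wise independent, whereas a direct application of \autoref{shift-rank-conc} to $F$ itself would demand full $n$-wise independence. The commutative factorization is exactly what bridges this gap: it confines each use of \autoref{shift-rank-conc} to an ``active'' block of $\floor{\lg r^2}+1$ variables, with the matrix product absorbing the rest. A minor technical point to verify is that the local $\F(\vect,\vecs)$-linear combinations propagate correctly through the matrix product; this works because right multiplication by a fixed matrix is an $\F$-linear operation on the flattened space $\F^{r\times r}\cong\F^{r^2}$, and thus commutes with the $\F(\vect,\vecs)$-linear combinations delivered by the local rank-concentration step.
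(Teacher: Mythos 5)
Your proposal is correct and follows essentially the same route as the paper's proof: both use commutativity to factor $F$ as a product of a block on $\floor{\lg r^2}+1$ active variables (covering part of the support of the target derivative) times the rest, apply \autoref{shift-rank-conc} to the small block via the restricted monomial map, and push the resulting linear combination through the matrix product by bilinearity to strictly decrease the support, iterating until support $\le\floor{\lg r^2}$. Your explicit induction on $|\supp(\veca)|$ just makes precise the paper's ``apply this argument repeatedly'' step.
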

\begin{proof}
	Consider some derivative $\deriv_{\vecx^{\veca_0}}(F)$, with ${\veca_0}\in\zr{d}^n$.  We wish to show that this derivative, evaluated at $\vecg(\vect,\vecs)$, is contained in the $\F(\vect,\vecs)$-span of the derivatives of small-support, evaluated at $\vecg(\vect,\vecs)$.  Thus, if the derivative is already support $\le \floor{\lg r^2}$, then this is trivial.  Now consider when this is not true.  Thus, we can partition the monomial $\vecx^{\veca_0}$ into $\vecy^{\vecb_0}\vecz^{\vecc_0}$ such that $\vecy$ has $\floor{\lg r^2}+1$ variables and ${\vecb_0}$ has full-support. 

	As $F$ is computable by a width-$r$ commutative ROABP, we can express it then as $F(\vecx)=F(\vecy,\vecz)=G(\vecy) \cdot H(\vecz)$, for $G\in\F[\vecy]^{r\times r}$ and $H\in\F[\vecz]^{r\times r}$.\footnote{This is where we use the commutativity of the ROABP.}  By bilinearity of matrix multiplication, we can factor the derivative as $\deriv_{\vecx^{\veca_0}}(F)=\deriv_{\vecy^{\vecb_0}}(G) \cdot \deriv_{\vecz^{\vecc_0}}(H)$.  Now observe that the output of $\vecg(\vect,\vecs)$, when restricted to $\vecy$, is still an individual degree $<d$, $(\floor{\lg r^2}+1)$-wise independent monomial map.  Thus, as $\vecy$ has $\le \floor{\lg r^2}+1$ variables, it follows from \autoref{shift-rank-conc} that $G(\vecy)$ has support-$\floor{\lg r^2}$ rank concentration at $\vecg(\vect,\vecs)|_\vecy$, over $\F(\vect,\vecs)$.  Rephrasing, this means that
	\[
		\deriv_{\vecy^{\vecb_0}}(G)(\vecg(\vect,\vecs)|_\vecy)\in\Span_{\F(\vect,\vecs)}\{\deriv_{\vecy^\vecb}(G)(\vecg(\vect,\vecs)|_\vecy)\}_{|\vecb|_0\le\floor{\lg r^2}}.
	\]
	Invoking again the bilinearity of matrix multiplication, it follows that we can multiply the above equation by $\deriv_{\vecz^{\vecc_0}}(H)$, yielding
	\begin{align*}
		\deriv_{\vecx^{\veca_0}}(F)(\vecg(\vect,\vecs))
		&=\deriv_{\vecy^{\vecb_0}}(G)(\vecg(\vect,\vecs)|_\vecy)\cdot \deriv_{\vecz^{\vecc_0}}(H)(\vecg(\vect,\vecs)|_\vecz)\\
		&\in\Span_{\F(\vect,\vecs)}\{\deriv_{\vecy^\vecb}(G)(\vecg(\vect,\vecs)) \deriv_{\vecz^{\vecc_0}}(H)(\vecg(\vect,\vecs)|_\vecz)\}_{|\vecb|_0\le\floor{\lg r^2}}\\
		&\subseteq\Span_{\F(\vect,\vecs)}\{\deriv_{\vecx^\veca}(F)(\vecg(\vect,\vecs))\}_{|\veca|_0<|\supp(\veca_0)|}.
	\end{align*}

	Thus, this shows that for any derivative $\deriv_{\vecx^{\veca_0}}(F)$ with support $>\floor{\lg r^2}$, its evaluation at $\vecg(\vect,\vecs)$ is contained in the $\F(\vect,\vecs)$-span of derivatives with smaller support.  Applying this argument repeatedly (each time with a new partition $\vecx^{\veca_0}=\vecy^{\vecb_0}\vecz^{\vecc_0}$) thus yields the claim.
\end{proof}

Thus, the above shows that given the appropriate monomial map, we can achieve small-support rank concentration for commutative ROABPs.  We now observe that the generator of Shpilka and Volkovich~\cite{ShpilkaVolkovich09} gives such a monomial map. 

\ignore{Note that the usage of the SV-generator in hitting small support monomials, as given in \autoref{lem:GSV-hit-ss}, we could have replaced the SV-generator with a hitting set consisting of all low-hamming-weight strings.  This replacement is not possible with a monomial map, as we are crucially using generators as opposed to hitting sets, as was also done in Shpilka-Volkovich~\cite{ShpilkaVolkovich09}.}

\begin{lemma}\label{sv-mon-map}
	Let $\F$ be a field with $|\F|>n$. Then the SV-generator $\GSV{n,\ell}:\F^\ell\times\F^\ell\to\F^n$ of \autoref{thm:SV generator} is an individual degree $<d$, $\ell$-wise independent monomial map.
\end{lemma}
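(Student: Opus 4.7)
The plan is to construct, for each $S\subseteq[n]$ of size $\le \ell$, an explicit choice of $\vecaa\in\F^\ell$ by exploiting the ``insertion'' property of $\GSV{n,\ell}$ guaranteed by \autoref{thm:SV generator}. Enumerate $S=\{s_1,\ldots,s_k\}$ with $k=|S|\le\ell$, and set
\[
\vecaa \;\eqdef\; (\xi_{s_1},\,\xi_{s_2},\,\ldots,\xi_{s_k},\,\xi_0,\,\ldots,\xi_0)\in\F^\ell,
\]
using $\xi_0$ to pad out the last $\ell-k$ coordinates.

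By the evaluation identity in \autoref{thm:SV generator}, for each $i\in[n]$,
\[
(\GSV{n,\ell}(\vecy,\vecaa))_i \;=\; \sum_{j:\alpha_j=\xi_i} y_j.
\]
For $i\in S$, the set $\{j:\alpha_j=\xi_i\}$ is exactly the singleton $\{j_i\}$ where $j_i\in[k]$ is the unique index with $s_{j_i}=i$; for $i\notin S$, the set is empty since the $\xi_{s_j}$'s all lie in $\{\xi_{s_1},\ldots,\xi_{s_k}\}$ and $\xi_0$ is distinct from every $\xi_i$ with $i\in[n]$. Hence $(\GSV{n,\ell}(\vecy,\vecaa))_i=y_{j_i}$ for $i\in S$ and $0$ otherwise.

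Now for any exponent vector $\veca$ with $\supp(\veca)\subseteq S$ and $|\veca|_\infty<d$,
\[
\GSV{n,\ell}(\vecy,\vecaa)^\veca \;=\; \prod_{i\in S} y_{j_i}^{a_i},
\]
which is a nonzero monomial in $\vecy$ of individual degree $<d$. Since $i\mapsto j_i$ is a bijection from $S$ to $[k]\subseteq[\ell]$, distinct $\veca$'s supported on $S$ yield distinct tuples of exponents on $(y_{j_1},\ldots,y_{j_k})$, and therefore distinct monomials in $\vecy$. This is precisely the condition of \autoref{defn-ind-deg-mon-map}, with $\vect=\vecy$ playing the role of the variables of the output monomials.

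There is no substantive obstacle; the argument is a direct unpacking of definitions. The only point worth noting is that the individual-degree hypothesis $|\veca|_\infty<d$ plays no active role in the argument — the SV-generator is in fact an $\ell$-wise independent monomial map for every degree bound, because the insertion property places each $y_{j_i}$ into a distinct output coordinate and hence preserves all monomial distinctions simultaneously.
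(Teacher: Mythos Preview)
Your proof is correct and takes essentially the same approach as the paper: both set $z_j=\xi_{s_j}$ for $j\in[k]$ and $z_j=\xi_0$ otherwise, so that distinct $\vecy$-variables are planted into the positions indexed by $S$ and zeros elsewhere, after which the monomial-map property is immediate. The paper's version simply defers the details to the earlier argument in \autoref{lem:GSV-hit-ss}, whereas you spell them out in full (and correctly observe that the bound $d$ plays no role).
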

\begin{proof}
	As argued in \autoref{lem:GSV-hit-ss}, for any subset $S\subseteq[n]$ of size $\le \ell$, we can set the $\vecz$ variables appropriately such that $|S|$ of the $\vecy$ variables are exactly planted into the positions $\vecx|_S$.  It is then clear that this is a monomial map, and all individual degree $<d$ monomials of this map are distinct.
\end{proof}

Thus, we have seen that shifting by a monomial map yields rank concentration, how to construct such a monomial map, and how rank concentration yields hitting sets.  Putting this all together, we obtain hitting sets as presented in the following corollary.

\begin{corollary}\label{hit-set-commut}
	Let $|\F|>nd$. Let $f(\vecx)\in\F[\vecx]$ be of individual degree $<d$, and be computed by a width-$r$ commutative ROABP.  Then $f\not\equiv 0$ iff $f\circ \GSV{n,1+2\floor{2\lg r}}\not\equiv 0$. 

	In particular, there is a $\poly(n,d)^{\O(\lg r)}$ size $\poly(n,d,r)$-explicit hitting set for width-$r$ commutative ROABPs on $n$ variables, of individual degree $<d$.
\end{corollary}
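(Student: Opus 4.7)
The plan is to combine Theorem \ref{thm:rank-concentration-comm}, Lemma \ref{sv-mon-map}, and Corollary \ref{cor:rank-conc-to-hitting-sets}, with one observation that two independent SV generators can be merged into one. By Lemma \ref{lem:roabp-as-imm} we can write $f(\vecx)=\langle\vecbb,F(\vecx)\rangle$ where $F(\vecx)=\prod_i M_i(x_{\pi(i)})\in\F[\vecx]^{r\times r}$, viewed as a vector of $r^2$ entries; moreover, since $f$ is computable by a commutative ROABP, so is $F$ (in the same width). Thus $f\not\equiv 0$ iff, as a linear combination of the entries of $F$, it witnesses some nonzero coordinate.

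First I would invoke Lemma \ref{sv-mon-map}: the map $\vecg_1\eqdef\GSV{n,\floor{2\lg r}+1}(\vect^{(1)},\vecs^{(1)})$ is an individual-degree-$<d$, $(\floor{2\lg r}+1)$-wise independent monomial map (using fresh variables $\vect^{(1)},\vecs^{(1)}$). Then Theorem \ref{thm:rank-concentration-comm} yields that $F$ is support-$\floor{\lg r^2}=\floor{2\lg r}$ rank concentrated at $\vecaa\eqdef\vecg_1$, over the field $\F(\vect^{(1)},\vecs^{(1)})$. Second, I would apply Corollary \ref{cor:rank-conc-to-hitting-sets} with this $\vecaa$ and with $\ell=\floor{2\lg r}$: any nonzero $f=\langle\vecbb,F\rangle$ satisfies $f\not\equiv 0$ iff $f\circ\bigl(\GSV{n,\floor{2\lg r}}(\vect^{(2)},\vecs^{(2)})+\vecaa\bigr)\not\equiv 0$, where $\vect^{(2)},\vecs^{(2)}$ are fresh.

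The last step is to observe that the sum of two SV generators on disjoint seeds is again an SV generator: from the explicit formula $(\GSV{n,\ell})_k=\sum_{j\in[\ell]} \ind{z_j=\xi_k}\, y_j$ in Theorem \ref{thm:SV generator}, one sees coordinate-wise that
\[
\GSV{n,\floor{2\lg r}}(\vect^{(2)},\vecs^{(2)})+\GSV{n,\floor{2\lg r}+1}(\vect^{(1)},\vecs^{(1)})
\;=\;\GSV{n,\,1+2\floor{2\lg r}}(\vect,\vecs)
\]
after renaming seeds. Composing with this single generator therefore hits $f$ whenever it is nonzero, which is the first half of the corollary.

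For the hitting set bound, $\GSV{n,\,1+2\floor{2\lg r}}$ has $2\bigl(1+2\floor{2\lg r}\bigr)=O(\lg r)$ seed variables, each of degree $\le n$, and $f$ has total degree $<nd$, so $f\circ\GSV{n,\,1+2\floor{2\lg r}}$ has degree $\poly(n,d)$ in each seed variable. A standard product grid of size $\poly(n,d)$ in each coordinate (guaranteed to exist because $|\F|>nd$) then forms a $\poly(n,d,r)$-explicit hitting set of size $\poly(n,d)^{O(\lg r)}$, as claimed. The only subtle point — the ``main obstacle'' in writing this out cleanly — is confirming the parameter arithmetic: the $+1$ in $(\floor{\lg r^2}+1)$-wise independence needed by Theorem \ref{thm:rank-concentration-comm} is precisely what inflates the second SV from $\floor{2\lg r}$ to $\floor{2\lg r}+1$, giving the total $1+2\floor{2\lg r}$; everything else is a direct chaining of previously-established lemmas.
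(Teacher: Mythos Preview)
Your proposal is correct and follows essentially the same route as the paper: invoke \autoref{thm:rank-concentration-comm} with the SV-generator monomial map from \autoref{sv-mon-map}, then apply \autoref{cor:rank-conc-to-hitting-sets}, and finally collapse the two SV generators into one via their additivity. The parameter accounting you flagged (the ``$+1$'' from $(\floor{\lg r^2}+1)$-wise independence producing the total $1+2\floor{2\lg r}$) is exactly what the paper tracks as well.
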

\begin{proof}
	\uline{generator:} We have by definition that $f$ is a linear combination of the entries of $F(\vecx)$, where $F(\vecx)$ is computed by a matrix-valued commutative ROABP (recall \autoref{lem:roabp-as-imm}). By \autoref{thm:rank-concentration-comm} it follows that $F\circ \GSV{n,\floor{2\lg r}+1}$ is support-$\floor{\lg r^2}$ rank concentrated.  Thus, \autoref{cor:rank-conc-to-hitting-sets} shows that $f\not\equiv0$ iff $f\circ \left(\GSV{n,\floor{2\lg r}}+\GSV{n,\floor{2\lg r}+1}\right)\not\equiv 0$, where $\GSV{n,\floor{2\lg r}}(\vecy,\vecz)$ and $\GSV{n,\floor{2\lg r}+1}(\vecy',\vecz')$ are on disjoint variables.  However, the SV-generator construction of \autoref{thm:rank-concentration-comm} is additive, in that as polynomial maps we have the equality 	$\GSV{n,\floor{2\lg r}}(\vecy,\vecz)+\GSV{n,\floor{2\lg r}+1}(\vecy',\vecz')=\GSV{n,1+2\floor{2\lg r}}(\vecy'',\vecz'')$, where $\vecy''=(\vecy,\vecy')$ and $\vecz''=(\vecz,\vecz')$.

	\uline{hitting set:} The hitting set is the evaluation of $\GSV{n,1+2\floor{2\lg r}}$ on a sufficiently large product space of inputs, and the correctness follows from interpolation. That is, $f\circ\GSV{}$ only has $1+2\floor{2\lg r}$ variables, and  individual degree $<nd$.
\end{proof}

Note that the SV-generator is in some sense an algebraic analogue of $\ell$-wise independence.  However, it is not the case that \emph{any} $\ell$-wise independent map will fool commutative ROABPs. For example, the polynomial $x_1+\cdots+x_n$ is computable in this model, but is not fooled by the $n-1$ independent polynomial map given by $(x_1,x_2,\ldots,x_{n-1},-(x_1+\cdots+x_{n-1}))$.

To improve the above hitting sets of size $n^{o(\lg n)}$, note that there are two sources of $n^{\Theta(\lg n)}$ factors in the above construction: the monomial map used to shift into rank concentration, and the conversion from rank concentration to hitting sets.  In \autoref{sec:hashing+FS}, we will see an improved way to convert from rank concentration to hitting sets, that will only result in an $n^{\O(\lg\lg n)}$ overheard, but will be specific for commutative ROABPs.  Unfortunately, no such improvement to the monomial map is yet known, but one can give a better monomial map when the individual degree of the polynomials is small.  This alternative monomial map is based on the hashing ideas of Klivans-Spielman~\cite{KlivansSpielman01}.

\begin{lemma}\label{mon-map-KS}
	Let $|\F|\ge\poly(n,d)$.  Then the KS-generator $\GKS{n,d,d^\ell}:(\F^{\O(\ell\log_{nd} d)})^2\to\F^n$ of \autoref{thm:KS generator} is a $\poly(n,d,\ell)$-explicit, individual degree $<d$, $\ell$-wise independent monomial map.
\end{lemma}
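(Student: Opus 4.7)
The plan is to reduce directly to \autoref{thm:KS generator}, using that the number of degree-restricted monomials supported on any fixed small set of variables is itself small. Fix an arbitrary $S \subseteq [n]$ with $|S| \le \ell$, and consider the collection
\[
	T_S \;\eqdef\; \{\vecx^{\veca} : \supp(\veca) \subseteq S,\ |\veca|_\infty < d\}.
\]
Since each of the $|S| \le \ell$ variables in $S$ can appear with exponent in $\{0, 1, \ldots, d-1\}$ and all other variables appear with exponent $0$, we have $|T_S| \le d^{|S|} \le d^\ell$.

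Next, I would invoke \autoref{thm:KS generator} with the sparsity parameter $s = d^\ell$ applied to the set of monomials $T_S$. Since $|T_S| \le s$ and every monomial in $T_S$ has individual degree $< d$, the theorem guarantees the existence of some $\vecaa \in \F^{m'}$ (with $m' = \Theta(\log_{nd}(d^\ell)) = \O(\ell \log_{nd} d)$) such that the substitution $\vecx \leftarrow \GKS{n,d,d^\ell}(\vect,\vecaa)$ sends each $\vecx^{\veca} \in T_S$ to a nonzero, distinct monomial in $\vect$. This is exactly the defining property of an individual degree $<d$, $\ell$-wise independent monomial map from \autoref{defn-ind-deg-mon-map}: note the quantifier structure matches, because $\vecaa$ is allowed to depend on the chosen set $S$.

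Finally, for the explicitness claim, \autoref{thm:KS generator} states that $\GKS{n,d,s}$ is $\poly(n,d,m')$-explicit; with $s = d^\ell$ and $m' = \O(\ell \log_{nd} d)$, this is $\poly(n, d, \ell)$-explicit as required. There is essentially no obstacle here beyond bookkeeping: the lemma is just a reformulation of the Klivans--Spielman hashing guarantee, where the sparsity bound $s$ is tuned to the worst-case number of monomials supported on $\ell$ variables with individual degree $<d$. The only subtlety to verify is that the parameter $m$ of the KS-generator (the length of $\vect$, not $\vecaa$) is indeed independent of $S$, which it is by construction.
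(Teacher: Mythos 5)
Your proposal is correct and is essentially the paper's own argument, just written out in more detail: both count the at most $d^\ell$ monomials of individual degree $<d$ supported on a fixed set $S$ of size $\le\ell$, and then invoke \autoref{thm:KS generator} with sparsity $s=d^\ell$, noting that the choice of $\vecaa$ may depend on $S$ exactly as \autoref{defn-ind-deg-mon-map} permits. No gaps.
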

\begin{proof}
	Note that for any subset $S\subseteq[n]$ of size $\le\ell$, there are at most $d^\ell$ monomials of individual degree $<d$.  Thus, it follows from \autoref{thm:KS generator} that the KS-generator with sparsity $s=d^\ell$ will map these monomials distinctly, as desired.
\end{proof}

Note that the above KS-generator monomial map is better than the SV-generator monomial map when $d$ is small, as we will be taking $\ell=\lg r$.  As the individual degree of the KS-generator is $\poly(n,d)$, and we have $\O(\log_{nd} d^\ell)$ variables, interpolating through this generator yields $\poly(n,d)^{\O(\ell \log_{nd} d)}=\poly(d^\ell)$ many points.  In particular, when $d=\O(\polylog(r,n))$, this will yield $\poly(n,r)^{\O(\lg\lg (nr))}$ shifts such that one shift will yield rank concentration. In contrast, the SV-generator adds $\ell$ new variables, and multiplies their degree from $d$ to $dn$, as the generator has degree $n$.  Thus, even when $d$ is small the number of shifts needed will be $\Omega(n^{\lg r})$.  We will exploit this further in \autoref{sec:hashing+FS}, where we show techniques for reducing the complexity of hitting polynomials with rank concentration.  

Lastly, we note that while the SV-generator is sufficient for the commutative ROABP results, it will not be sufficient for the results on general ROABPs with unknown orders in \autoref{sec:noncommutative}.  Instead, the KS-generator will be used.

\section{Condensing Rank from Matrices with a Known Basis}\label{sec:condense-known-basis}

This section will develop a new recipe for constructing (a variant of) rank condensers from error-correcting codes, and will then instantiate it with the transfer matrix of \autoref{sec:transfer matrix}.  That is, while the rank condenser recipe of \autoref{lem:rank-condenser-recipe} is sufficient for the results in \autoref{sec:commutative} on commutative ROABPs, the results of \autoref{sec:noncommutative} about unknown-order ROABPs need rank condensers that have smaller ``seed length'' than the above result.  In the context of \autoref{lem:rank-condenser-recipe}, this means that the condition on the matrix $W$, that all monomials are distinct, is too stringent.  We will want to apply a rank condenser in the setting with many variables, and keeping all such monomials distinct would prevent the construction of generators inputting few variables. To achieve better parameters, we will weaken the requirements on the rank condenser.  

In particular, observe that the above rank condenser is in some sense ``information theoretic'', as the matrices involved are all completely written down, without any notion of ``simplicity''.  To achieve better parameters, we can impose that the matrices we wish to extract from are simple in some sense.  Here, we will restrict to condensing rank from matrices that have a row-basis in some known ``medium-sized'' subset of the rows --- small enough so the seed length can decrease, but large enough so that the problem is non-trivial. In particular, we will try to construct the following type of rank condenser.

\begin{definition}[Rank Condenser Generator for Known Bases]
	A matrix $E(\vect) \in \F[\vect]^{n\times m}$ is said to be a \emph{(seeded) rank condenser (generator) for rank $r$ and rows $P\subseteq[m]$} if for every matrix $M \in \F^{m\times k}$ of rank $\le r$ such that $\rank_\F(M)=\rank_\F(M|_{P\times \bullet})$, we have that $\rank_{\F(\vect)}(E(\vect)\cdot M)=\rank_\F(M)$. 
\end{definition}

An obvious approach to construct the above would be to ignore the rows outside $P$, by zeroing out parts of the rank condenser, and plugging in a rank condenser for the $P$.  However, this is not compatible with the black-box access of polynomials.  Thus, we must consider rank condensers that naturally arise in the shifting/partial-derivative process.

In particular, we will relax the restriction, on the matrix $W$ of \autoref{lem:rank-condenser-recipe}, that all monomials are distinct. Recall though, that in the above rank condenser, this distinctness property drove the uniqueness of the minimum weight term, which was crucial in showing that some determinant does not vanish in the Cauchy-Binet sum. If non-distinctness of these weights is allowed, then the weight minimization may not be unique and hence the corresponding monomial could be canceled out. To ensure uniqueness, it will be enough that the \emph{small weights} are distinct. In the rank condenser context, this will correspond to the rows in the $P$ getting distinct weights, and that these distinct weights are smaller than any non-distinct weights. As this regime in greedy/matroid optimization seems somewhat less standard, we prove the required uniqueness property.  For concreteness, we only discuss this optimization within our context, but the results apply to any matroid.  We begin with a property of the greedy algorithm.

\begin{lemma}\label{lem:matroidgreedy}
	Let $M\in\F^{n\times m}$ be a matrix of rank $r$.  Using some total order $\prec$ on $[m]$, greedily choose $r$ linearly independent columns (at each step, pick the least-$\prec$-indexed column that increases the rank of the chosen vectors), and denote this set $T=\{i_1,\ldots,i_r\}$ with $i_1\prec\cdots\prec i_r$. Let $T'\subseteq[m]$ be any set of $r$ linearly independent columns, and denote $T'=\{i'_1,\ldots,i'_r\}$ with $i'_1\prec\cdots\prec i'_r$. Then $T\preceq  T'$ point-wise, that is, $i_j\preceq i'_j$ for $j\in[r]$.  Further, if $T\ne T'$, then $i_{j_0}\prec i'_{j_0}$ for some $j_0\in[r]$.
\end{lemma}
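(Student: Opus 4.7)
The plan is to argue by induction on $j$ that $i_j \preceq i'_j$, using the matroid exchange property applied to the columns of $M$ (which form a linear matroid on ground set $[m]$). The base and inductive steps have the same shape, so I will write the argument as a single minimality/exchange argument rather than formally invoking induction.

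Suppose for contradiction that the conclusion fails. Let $j$ be the least index for which $i_j \not\preceq i'_j$, equivalently $i'_j \prec i_j$. Then by minimality, $i_k \preceq i'_k \prec i'_j$ for all $k < j$, so every element of $\{i_1,\ldots,i_{j-1}\}$ is strictly $\prec$-below $i'_j$. Consider the independent sets $A \eqdef \{i_1,\ldots,i_{j-1}\}$ (of size $j-1$) and $B \eqdef \{i'_1,\ldots,i'_j\}$ (of size $j$), both viewed as subsets of the linear matroid on the columns of $M$. By the standard matroid augmentation/exchange axiom, there exists $b \in B \setminus A$ such that $A \cup \{b\}$ is linearly independent. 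Since $b \in B$ we have $b \preceq i'_j \prec i_j$, so $b$ is a candidate column that is $\prec$-strictly less than $i_j$ and that increases the rank of $A$. This contradicts the greedy choice of $i_j$ as the $\prec$-least column augmenting $A$.

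For the ``further'' clause: once $i_j \preceq i'_j$ pointwise is established, if $T \ne T'$ then there must exist some coordinate $j_0$ where $i_{j_0} \ne i'_{j_0}$, and combined with $i_{j_0} \preceq i'_{j_0}$ this forces $i_{j_0} \prec i'_{j_0}$, giving the strict inequality.

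The main (minor) obstacle is just making sure the exchange step is set up correctly: one needs $|B| > |A|$, which is exactly $j > j-1$, and one needs to check that the chosen $b$ lies strictly $\prec$-below $i_j$, which is guaranteed by $b \preceq i'_j \prec i_j$. Both are immediate from the minimality of $j$. No field-theoretic or polynomial computation is needed; this is a pure matroid fact, and indeed the proof works verbatim for any matroid with the total order $\prec$ playing the role of a ``cost''.
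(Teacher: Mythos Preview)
Your proof is correct and follows essentially the same approach as the paper: assume a first violation at index $j$, apply the exchange/Steinitz lemma to the independent sets $\{i_1,\ldots,i_{j-1}\}$ and $\{i'_1,\ldots,i'_j\}$, and obtain an element $b \prec i_j$ that augments $\{i_1,\ldots,i_{j-1}\}$, contradicting greediness. Your version is in fact slightly cleaner than the paper's, which unnecessarily locates the precise earlier step $j_2+1$ at which $b$ would have been chosen, whereas you observe directly that $b$ already contradicts the greedy choice at step $j$.
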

\begin{proof}
	Suppose not, so that $T\not\preceq  T'$, for contradiction. Let $j_0$ be the first $j\in[r]$ such that $i_j\succ i'_j$. Denote $T_k=\{i_1,\ldots,i_k\}$ and $T'_k=\{i'_1,\ldots,i'_k\}$ for $k\in[r]$.  Thus $T_{j_0-1}$ indexes $r-1$ linearly independent columns and $T'_{j_0}$ indexes $r$ linearly independent columns.  By the Steinitz Exchange Lemma of linear algebra, it follows there is some $i'_{j_1}\in T'_{j_0}\setminus T_{j_0-1}$ such that $T_{j_0-1}\cup\{i'_{j_1}\}$ indexes $r$ linearly independent columns.  Using that $i'_{j_1}\preceq i'_{j_0}\prec i_{j_0}$, it follows that there is some $j_2\in[j_0-1]$ such that $i_{j_2}\prec i'_{j_1}\prec i_{j_2+1}$.
	
	When $T_{j_2}$ is being greedily extended to $T_{j_2+1}$, the least rank-increasing column after all elements in $T_{j_2}$ will be chosen.  By hypothesis, $i_{j_2+1}$ was chosen.  However, as adding $i'_{j_1}$ increases the rank of $T_{j_0-1}$, and $T_{j_2}\subseteq T_{j_0-1}$, it follows that adding $i'_{j_1}$ increases the rank of $T_{j_2}$.  As $i_{j_2}\prec i'_{j_1}\prec i_{j_2+1}$, it follows that $i_{j_2+1}$ could not have been chosen, as $i'_{j_1}$ is a better choice, yielding the desired contradiction.  Thus $T\preceq  T'$ point-wise, as desired.

	If $T\ne T'$, then not all indices can be equal, so by the above $i_{j_0}\prec i'_{j_0}$ for some $j_0$.
\end{proof}

We next state the uniqueness lemma when minimizing the monomial weight of bases of the columns of a matrix, when only the small weights are distinct.  As above, this statement can be interpreted for general greedy/matroid optimization, but for concreteness we only discuss our context.  Note that the usual setting for greedy/matroid optimization is maximization with distinct weights.  In the below setting, we have minimization and thus we minimize only over full-rank sets of columns, as else the empty set is trivially the minimum weight set.  Further, in our applications it is crucial that we allow some weights to be repeated.  While the resulting uniqueness result follows from standard arguments, it is somewhat non-standard so we include it for completeness.

\begin{lemma}\label{lem:matroidgreedypartial}
	Let $M\in\F^{n\times m}$ be a matrix of rank $r$.  Let $\prec$ be a monomial ordering on $\F[\vecx]$, and let $w:[m]\to \F[\vecx]$ weight the columns in $M$ with monomials in $\F[\vecx]$.  Suppose the partition $[m]=I\sqcup J$ is so that $w$ is injective on $I$, and that $w(i)\prec w(j)$ for all $i\in I$ and $j\in J$.  Suppose further that the $\rank(M|_{\bullet\times I})=r$.  Then there is a unique set $T\subseteq[m]$ of size $r$, with $\rank(M|_{\bullet\times T})=r$, that minimizes $w(T)\eqdef \prod_{i\in T} w(i)$ with respect to the monomial order.  Further, $T\subseteq I$.
\end{lemma}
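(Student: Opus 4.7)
The plan is to reduce to the case handled by Lemma \ref{lem:matroidgreedy} by first showing that any minimum weight basis $T$ must be contained in $I$, and then invoking the greedy/Steinitz machinery on $I$ where weights are injective.

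First, I would define a total order $\prec'$ on $[m]$ that refines $w$: order elements of $I$ according to the monomial order on their (distinct) $w$-values, place all of $I$ before all of $J$ (this is consistent with $w(i) \prec w(j)$ for $i \in I$, $j \in J$), and break ties within $J$ arbitrarily. Run the greedy algorithm with respect to $\prec'$ to obtain a full-rank set $T = \{i_1 \prec' \cdots \prec' i_r\}$. Because $\rank(M|_{\bullet \times I}) = r$, the greedy algorithm never needs to advance past $I$, so $T \subseteq I$ automatically.

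Next, I would prove that $T$ is the unique minimum weight basis. Let $T' \subseteq [m]$ be any basis (full-rank set of $r$ columns). The key step is to rule out $T' \cap J \ne \emptyset$: if $j \in T' \cap J$, then $T' \setminus \{j\}$ has rank $r-1$, and since $M|_{\bullet\times I}$ already spans the column space of $M$, the Steinitz exchange lemma yields some $i \in I \setminus T'$ with $(T' \setminus \{j\}) \cup \{i\}$ still a basis. As $w(i) \prec w(j)$, multiplicativity of the monomial order gives $w((T' \setminus\{j\}) \cup \{i\}) \prec w(T')$, so $T'$ is not a minimum. Hence any minimum weight basis is contained in $I$.

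Finally, for a minimum weight basis $T' \subseteq I$, apply Lemma \ref{lem:matroidgreedy} with order $\prec'$ to conclude $T \preceq' T'$ pointwise, and strictly so in some coordinate $j_0$ if $T \ne T'$. Since both $i_{j_0}$ and $i'_{j_0}$ lie in $I$, where $w$ is injective and compatible with $\prec'$, we get $w(i_{j_0}) \prec w(i'_{j_0})$, and then iterated application of the multiplicativity axiom $\vecx^\veca \prec \vecx^\vecb \implies \vecx^{\veca+\vecc} \prec \vecx^{\vecb+\vecc}$ gives $w(T) \prec w(T')$, contradicting that $T'$ is a minimum. Thus $T' = T$, establishing uniqueness and the inclusion $T \subseteq I$.

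I do not expect a serious obstacle: the main subtlety is just bookkeeping — choosing $\prec'$ so that it is simultaneously compatible with the greedy framework of Lemma \ref{lem:matroidgreedy} and with the monomial order (so that pointwise $\prec'$-comparisons translate into product-weight comparisons via the multiplicativity axiom). The exchange argument showing $T \subseteq I$ is the only place where the hypothesis $\rank(M|_{\bullet\times I}) = r$ is used, and it is an immediate application of Steinitz.
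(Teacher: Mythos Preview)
Your proposal is correct and follows essentially the same approach as the paper: define a total order on $[m]$ refining $w$, run the greedy algorithm, and use \autoref{lem:matroidgreedy} together with multiplicativity of the monomial order to conclude uniqueness. The only difference is that you insert an explicit Steinitz exchange step to first reduce to $T'\subseteq I$, whereas the paper handles all $T'$ at once by observing that $i_{j_0}\in T\subseteq I$ already forces $w(i_{j_0})\prec w(i'_{j_0})$ regardless of whether $i'_{j_0}$ lies in $I$ or $J$; your detour is harmless but unnecessary.
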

\begin{proof}
	The monomial order, as a total order on $\F[\vecx]$, induces a partial order on $[m]$ via the weight function $w$.  Extend this partial order arbitrarily to a total order on $[m]$, and denote it by `$\prec$' also, abusing notation.  From this it follows that for $i,i'\in[m]$, $i\prec i'$ implies $w(i)\preceq w(i')$.  Thus, it follows that $I$ comes before $J$ in the $\prec$-order on $[m]$.
	
	Choose the set $T$ of $r$ linearly independent columns greedily, using the $\prec$-order on $[m]$.  As there is some set of $r$ linearly independent columns $T'\subseteq I$, it follows from \autoref{lem:matroidgreedy} that $T\preceq T'$ point-wise, so then $T\subseteq I$, as $I$ comes before $J$ in the $\prec$-order on $[m]$. Now consider any set $T'\subseteq[m]$ of $r$ linearly independent columns, with $T\ne T'$.  Again, $T\preceq T'$, so in particular $w(T)\preceq w(T')$.  However, using \autoref{lem:matroidgreedy} further, we see there is some index $j_0\in[r]$ so that $i_{j_0}\prec i'_{j_0}$ when we write $T=\{i_1,\ldots,i_r\}$ and $T'=\{i'_1,\ldots,i'_r\}$.  As $i_{j_0}\in T\subseteq I$, and the weights in $I$ are distinct and all strictly below the weights in $J$, it follow that $w(i_{j_0})\prec w(i'_{j_0})$.  From this it follows that $w(T)\prec w(T')$.  Thus $T\subseteq I$ is the unique minimizer of the weight $w$ over sets of $r$ linearly independent columns.
\end{proof}

To use this uniqueness lemma in our rank condenser, we need a notion of what a ``small'' weight will be.  Here, we will use a new variable $u$, such that the smallness of a monomial will correspond to its degree in $u$.

\begin{lemma}\label{lem:rank-condenser-known-basis}
	Fix $d\ge 0$. Let $E_d(\vect,u)\in\F[\vect,u]^{n\times m}$, with $E_d=\Lambda(\vect,u)^{-1}\cdot H\cdot W(\vect,u)$ such that
	\begin{itemize}
		\item $\Lambda(\vect,u)\in\F[\vect,u]^{n\times n}$ is a diagonal matrix, whose diagonal entries are nonzero monomials.
		\item $H\in\F^{n\times m}$ is the parity check matrix of an error correcting code of distance $>r$.  That is, every $r$ columns of $H$ are linearly independent.
		\item $W(\vect,u)\in\F[\vect,u]^{m\times m}$ is a diagonal matrix, whose diagonal entries are nonzero monomials.  The monomials in $W(\vect,u)|_{P_d\times P_d}$ are distinct, where $P_d=\{j:j\in[m], \deg_u W(\vect,u)_{j,j}< d\}$.
	\end{itemize}
	Then $E(\vect,u)$ is a rank condenser for rank $r$ and rows $P_d$.
\end{lemma}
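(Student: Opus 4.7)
The plan is to adapt the proof of \autoref{lem:rank-condenser-recipe} by replacing its uniqueness-of-minimum-weight argument (which required all monomials in $W$ to be distinct) with the partial-distinctness version provided by \autoref{lem:matroidgreedypartial}. The variable $u$ in the new matrix $W(\vect,u)$ is precisely the mechanism that separates the ``known basis'' rows $P_d$ from the rest: elements of $P_d$ carry weights of $u$-degree $<d$, while elements outside $P_d$ carry weights of $u$-degree $\ge d$. So I would fix a monomial ordering $\prec$ on $\F[\vect,u]$ that first compares by $u$-degree and then breaks ties using, say, the lex order on $\vect$. With this choice, every monomial weight assigned to an index in $P_d$ is strictly $\prec$-smaller than every weight assigned to an index outside $P_d$, and the weights inside $P_d$ are distinct by hypothesis.

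Given $M\in\F^{m\times k}$ of rank $s\le r$ with $\rank_\F(M)=\rank_\F(M|_{P_d\times\bullet})$, I would first reduce to the case $k=s$ exactly as in \autoref{lem:rank-condenser-recipe} (pick $s$ linearly independent columns of $M$ --- they can be chosen inside $P_d$ --- and sandwich $\rank_{\F(\vect,u)}(E_d M)$ between $\rank_\F(M)$ and $\rank_\F(M')$). In the reduced case, the hypothesis $\rank_\F(M|_{P_d\times\bullet})=s$ means the column-set matroid of $M$ has a full-rank basis lying entirely in $P_d$. Applying \autoref{lem:matroidgreedypartial} with $I=P_d$, $J=[m]\setminus P_d$, and weight function $w(j)=W(\vect,u)_{j,j}$, yields a unique $T_0\subseteq P_d$ of size $s$ with $\det(M|_{T_0\times\bullet})\ne 0$ that minimizes $w(T_0)=\prod_{j\in T_0}w(j)$ in the $\prec$-order; moreover every other such $s$-subset $T\ne T_0$ satisfies $w(T_0)\prec w(T)$.

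Next, since every $r$ (hence every $s$) columns of $H$ are linearly independent, there is some $S\subseteq[n]$ of size $s$ with $\det(H|_{S\times T_0})\ne 0$. I would then compute, via Cauchy--Binet,
\[
  \det\bigl((HW(\vect,u))|_{S\times\bullet}\cdot M\bigr)
  \;=\;\sum_{T\in\binom{[m]}{s}}\det(H|_{S\times T})\det(M|_{T\times\bullet})\,w(T),
\]
and observe that the term for $T=T_0$ contributes the monomial $c\cdot w(T_0)$ with nonzero coefficient $c=\det(H|_{S\times T_0})\det(M|_{T_0\times\bullet})$, while every other contributing term carries a monomial $w(T)\succ w(T_0)$. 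Hence no cancellation can occur and the determinant is a nonzero polynomial in $\F[\vect,u]$. Dividing through by the nonzero monomial $\det(\Lambda(\vect,u)|_{S\times S})$ (which, being a product of diagonal monomial entries, divides the above by the same argument used in \autoref{lem:rank-condenser-recipe}) shows $\det((E_d(\vect,u)M)|_{S\times\bullet})\not\equiv 0$, so $\rank_{\F(\vect,u)}(E_d M)\ge s$, proving equality with $\rank_\F(M)$.

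The main obstacle is ensuring that the greedy argument actually picks $T_0$ inside $P_d$ rather than mixing in an index outside, since monomials in $J$ may still be ``small'' in the $\vect$-directions; the trick is entirely in the choice of monomial order that prioritizes the $u$-degree, which converts the structural hypothesis $\deg_u W_{j,j}<d$ on $P_d$ into a clean ``$I$ comes before $J$'' statement to which \autoref{lem:matroidgreedypartial} applies directly. Everything else is a routine reprise of the Cauchy--Binet calculation from \autoref{lem:rank-condenser-recipe}.
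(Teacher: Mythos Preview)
Your proposal is correct and follows essentially the same route as the paper: reduce to $k=s$ (taking care that $M'|_{P_d\times\bullet}$ still has rank $s$), pick a lexicographic monomial order with $u$ dominant, apply \autoref{lem:matroidgreedypartial} with $I=P_d$, and then rerun the Cauchy--Binet argument from \autoref{lem:rank-condenser-recipe}. One small wording slip: in the $k>s$ reduction you write that the $s$ independent columns ``can be chosen inside $P_d$,'' but $P_d$ indexes rows, not columns --- what you mean (and what the paper does) is to choose $s$ columns so that the resulting $M'$ still satisfies $\rank_\F(M'|_{P_d\times\bullet})=s$.
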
 
\begin{proof}
	Consider any matrix $M\in\F^{m\times k}$ of rank $s\le r$ with $\rank_\F(M)=\rank_\F(M|_{P_d\times\bullet})$.

	\uline{$k>s$:} Choose the submatrix $M'\in\F^{m\times s}$ of $M$, such that $M'$ has rank $s$.  As $\rank_\F(M|_{P_d\times\bullet})=s$ we can choose $M'$ so that $\rank_\F(M'|_{P_d\times\bullet})=s$ also.  Now apply the $k=s$ case, just as in the proof of \autoref{lem:rank-condenser-recipe}.

	\uline{$k=s$:} As in the proof of \autoref{lem:rank-condenser-recipe}, $W$ induces a weight function $w:[m]\to\F[\vect,u]$. Let $\prec$ be a lexicographic monomial order on $\F[\vect,u]$, such that $u$ comes before all variables in $\vect$.  It then follows from that \autoref{lem:matroidgreedypartial} applies, using the partition $[m]=P_d\sqcup ([m]\setminus P_d)$, so that there is a unique set $T_0\subset [m]$ such that $\det(M|_{T_0\times\bullet})\ne 0$ minimizing $w(T_0)$ with respect to the $\prec$-order.  The rest of the proof then follows exactly as in the proof of \autoref{lem:rank-condenser-recipe}.
\end{proof}

With this new recipe, we now instantiate it with the transfer matrix of \autoref{sec:transfer matrix}.  To do so, we need now the notion of a monomial map with independence with respect to \emph{total degree}.

\begin{definition}\label{defn-tot-deg-mon-map}
	Let $\vecg:\F^m\times\F^{m'}\to\F^n$ be a polynomial map.  It is a \emph{total-degree-$(<D)$ independent monomial map} if there is an $\vecaa\in\F^{m'}$ such that the polynomials $\{\vecg(\vect,\vecaa)^\veca\}_{|\veca|_1<D}\subseteq\F[\vect]$ are nonzero, distinct monomials in $\vect$.
\end{definition}

Note that the main difference from \autoref{defn-ind-deg-mon-map} is that now we consider monomials in many variables whereas previously we only considered monomials in $\ell$ variables.  We also now restrict the total degree, so in the above $D$ should be considered small. The variable $u$ of \autoref{lem:rank-condenser-known-basis} is used to grade with respect to degree, and thus take advantage the the total degree is not too large. 

We now use such maps to get a rank condenser.

\begin{corollary}\label{deriv-condense-rank-known-basis}
	Assume the setup of \autoref{construct-t}.  Let $\vecg:\F^m\times\F^k\to\F^n$ be an total-degree-$(<D)$ independent monomial map.  Then $T_r(u\cdot \vecg(\vect,\vecs))$ is a rank condenser for rank $r$ and rows corresponding to monomials of total degree $<D$.
\end{corollary}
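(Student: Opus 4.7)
The plan is to mirror the proof of \autoref{deriv-condense-rank}, but now apply the stronger recipe of \autoref{lem:rank-condenser-known-basis} in place of \autoref{lem:rank-condenser-recipe}, using the auxiliary variable $u$ to grade monomials by total degree.

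First I would fix $\vecaa \in \F^k$ as guaranteed by the total-degree-$(<D)$ independent monomial map $\vecg$, so that $\{\vecg(\vect,\vecaa)^\vecb\}_{|\vecb|_1 < D}$ are nonzero, distinct monomials in $\vect$. Substituting $\vecs \leftarrow \vecaa$ lets us view $T_r(u\cdot \vecg(\vect,\vecaa))$ as a matrix over $\F[\vect,u]$. Next I would write down the factorization analogous to the one in \autoref{deriv-condense-rank}: since $(T_r(\vect'))_{\veca,\vecb} = \binom{\vecb}{\veca}(\vect')^{\vecb-\veca} = (\vect')^{-\veca} \cdot T_r(\vec{1})_{\veca,\vecb} \cdot (\vect')^{\vecb}$, substituting $\vect' = u\cdot \vecg(\vect,\vecaa)$ gives
\[
T_r(u\cdot\vecg(\vect,\vecaa)) \;=\; \Lambda(\vect,u)^{-1} \cdot T_r(\vec{1}) \cdot W(\vect,u),
\]
where $\Lambda$ is diagonal with $\Lambda_{\veca,\veca} = u^{|\veca|_1}\vecg(\vect,\vecaa)^{\veca}$ and $W$ is diagonal with $W_{\vecb,\vecb} = u^{|\vecb|_1}\vecg(\vect,\vecaa)^{\vecb}$.

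Now I would verify the three hypotheses of \autoref{lem:rank-condenser-known-basis}. The matrix $\Lambda$ is diagonal with nonzero monomial entries by construction. The matrix $T_r(\vec{1})$ is the parity-check matrix of a code of distance $>r$ by \autoref{deriv-are-code}. For $W$, note that the $u$-degree of $W_{\vecb,\vecb}$ equals $|\vecb|_1$, so the set $P_d$ in the statement of \autoref{lem:rank-condenser-known-basis} (with the threshold $d$ there set to $D$) corresponds exactly to those columns indexed by monomials $\vecx^{\vecb}$ of individual degree $<d$ and total degree $|\vecb|_1 < D$. The restriction $W|_{P_D \times P_D}$ has diagonal entries $\{u^{|\vecb|_1}\vecg(\vect,\vecaa)^\vecb\}_{|\vecb|_1 < D}$, and these are distinct monomials in $(\vect,u)$ because the $\vecg(\vect,\vecaa)^\vecb$ are distinct monomials in $\vect$ by the defining property of $\vecg$. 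Hence \autoref{lem:rank-condenser-known-basis} applies and $T_r(u\cdot\vecg(\vect,\vecaa))$ is a rank condenser for rank $r$ and rows corresponding to total-degree-$(<D)$ monomials. Finally, since putting back the $\vecs$-variables only increases the field of rational functions in which the rank is computed, $T_r(u\cdot\vecg(\vect,\vecs))$ inherits the rank-condenser property.

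The main thing to get right is the bookkeeping around the $u$-grading: we must ensure that the $u$-degrees of diagonal entries of $W$ are exactly the total degrees $|\vecb|_1$, so that $P_d$ in \autoref{lem:rank-condenser-known-basis} precisely captures the ``low total-degree'' rows we want, and that among these small-$u$-degree rows the weight-uniqueness holds while no cancellation with larger-$u$-degree rows can interfere. Everything else is a direct transcription of the proof of \autoref{deriv-condense-rank}.
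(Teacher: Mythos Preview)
Your proposal is correct and follows essentially the same approach as the paper's own proof: fix $\vecaa$, factor $T_r(u\cdot\vecg(\vect,\vecaa))=\Lambda^{-1}\cdot T_r(\vec{1})\cdot W$, and verify the hypotheses of \autoref{lem:rank-condenser-known-basis} using \autoref{deriv-are-code} for $T_r(\vec{1})$ and the monomial-map property for distinctness of $W$ on $P_D$. Your version is in fact more explicit than the paper's about why the $u$-degree of $W_{\vecb,\vecb}$ equals $|\vecb|_1$ and hence why $P_D$ coincides with the total-degree-$(<D)$ columns, which is the only nontrivial bookkeeping point.
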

\begin{proof}
	Fix the $\vecaa\in\F^{m'}$ guaranteed by the monomial map $\vecg$, so that we can now regard $\vecg(\vect,\vecaa)$ as a polynomial map in $\F[\vect]$.  As in \autoref{deriv-condense-rank}, we can express $T_r(u\vecg(\vect,\vecaa))=\Lambda(u,\vect,\vecaa)^{-1}\cdot T_r(\vec{1})\cdot W(u,\vect,\vecaa)$.  As before, $\Lambda$ and $T_r(\vec{1})$ fit the hypothesis of \autoref{lem:rank-condenser-known-basis}, where for the latter we use \autoref{deriv-are-code}.  Further, $W(u,\vect,\vecaa)$ has the desired monomial distinctness properties as $\vecg$ is a monomial map with independence up to total degree $<D$. Thus, $T_r(u\vecg(\vect,\vecaa))$ is a rank condenser for rank $r$.  Removing the substitution of $\vecs$, it follows that $T_r(u\vecg(\vect,\vecs))$ is a rank condenser as well.
\end{proof}

Reinterpreting the above in terms of rank concentration, we get the main result of this section. 

\begin{corollary}\label{shift-rank-conc-known-basis}
	Let $\vecf\in\F[\vecx]^r$ be polynomials of individual degree $<d$ on the $n$-variables $\vecx$. Let $\vecg:\F^m\times\F^{m'}\to\F^n$ be an total degree-$(<d\ell)$ independent monomial map. Then if $\vecf$ has support-$\ell$ rank concentration at $\vecaa\in\F^n$, then $\vecf$ has support-$\floor{\lg r}$ rank concentration at $\vecaa+u\cdot\vecg(\vect,\vecs)$ over the field $\F(u,\vect,\vecs)$.
\end{corollary}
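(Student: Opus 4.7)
The plan is to recast the desired rank concentration as an application of the ``known-basis'' rank condenser of \autoref{deriv-condense-rank-known-basis}, applied to the matrix of Hasse derivatives of $\vecf$ at $\vecaa$. Define $\tilde M \in \F^{\zr{d}^n \times r}$ by $\tilde M_{\veca,i} = \deriv_{\vecx^\veca}(f_i)(\vecaa)$; since each $f_i$ has individual degree $<d$, this captures all nontrivial derivatives. By Taylor expansion for Hasse derivatives,
$$
  \deriv_{\vecx^\veca}(\vecf)(\vecaa + \vecs') \;=\; \sum_{\vecb} T(\vecs')_{\veca,\vecb}\,\deriv_{\vecx^\vecb}(\vecf)(\vecaa),
$$
so $T(\vecs')\tilde M$ (with $T$ as in \autoref{construct-t}) is precisely the matrix of all derivatives of $\vecf$ at $\vecaa + \vecs'$, and the submatrix $T_r(\vecs')\tilde M$ consists of the support-$\floor{\lg r}$ derivatives at this shifted point.

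First I would unpack the support-$\ell$ rank concentration hypothesis at $\vecaa$: it says exactly that the rows of $\tilde M$ indexed by $\veca$ with $|\veca|_0 \le \ell$ span its row space. Because each $f_i$ has individual degree $<d$, any such row satisfies $|\veca|_1 < d\ell = D$, so defining $P_D := \{\veca : |\veca|_1 < D\}$ we have $\rank_\F(\tilde M|_{P_D \times \bullet}) = \rank_\F(\tilde M)$. Moreover $\rank_\F(\tilde M)\le r$ since $\tilde M$ has $r$ columns. This is the step where the hypothesis gets converted into the ``known basis'' condition demanded by the condenser, and it is the one conceptual subtlety: small-support plus bounded individual degree automatically forces small total degree, which is precisely the feature exploited by \autoref{deriv-condense-rank-known-basis}.

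Next I would invoke \autoref{deriv-condense-rank-known-basis} with the given monomial map $\vecg$ (which is independent up to total degree $<d\ell = D$). It guarantees that $T_r(u\vecg(\vect,\vecs))$ is a rank condenser for rank $r$ and rows $P_D$, so
$$
  \rank_{\F(u,\vect,\vecs)}\bigl(T_r(u\vecg(\vect,\vecs))\cdot \tilde M\bigr) \;=\; \rank_\F(\tilde M).
$$
Since the full transfer matrix $T(\vecs')$ is invertible over $\F(\vecs')$ (shifting by $\vecs'$ and then by $-\vecs'$ is the identity), we also get $\rank_{\F(u,\vect,\vecs)}(T(u\vecg(\vect,\vecs))\tilde M) = \rank_\F(\tilde M)$. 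The row span of $T_r(u\vecg(\vect,\vecs))\tilde M$ is contained in that of $T(u\vecg(\vect,\vecs))\tilde M$, and since both have the same dimension they coincide.

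Translating back through the Taylor identity, this equality of row spans says precisely that the span (over $\F(u,\vect,\vecs)$) of the support-$\floor{\lg r}$ derivatives of $\vecf$ at $\vecaa + u\vecg(\vect,\vecs)$ equals the span of all derivatives at that point, i.e.\ support-$\floor{\lg r}$ rank concentration at $\vecaa + u\vecg(\vect,\vecs)$. The only step that requires care, beyond bookkeeping, is the translation between the support/individual-degree view of the hypothesis and the total-degree view required by the condenser recipe---this is where we crucially use the presence of the auxiliary variable $u$ to grade the rank condenser by total degree.
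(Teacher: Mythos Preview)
Your proposal is correct and follows essentially the same approach as the paper: form the matrix of Hasse derivatives at $\vecaa$, observe that support-$\ell$ together with individual degree $<d$ forces the row-basis to lie among the rows of total degree $<d\ell$, apply \autoref{deriv-condense-rank-known-basis}, and translate back. The only cosmetic difference is that the paper compares the row-span of $T_r(u\vecg)\tilde M$ directly to that of $\tilde M$ and then invokes \autoref{change-deriv-point} to move the point of derivation, whereas you use invertibility of the full transfer matrix $T$ to the same end.
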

\begin{proof}
	Expand the polynomials $\vecf$ in terms of their Hasse derivatives at $\vecaa$, giving the matrix $M$, so that $M_{\vecb,i}=\deriv_{\vecx^\vecb}(f_i)(\vecaa)$.  Let $P_{d\ell}\subseteq \zr{d}^n$ index the rows of $M$ given by monomials of total degree $<d\ell$.  Thus $P_{d\ell}$ contains all monomials with support $\le \ell$.  Thus, by hypothesis $\rank_\F(M)=\rank_\F(M|_{P_{d\ell}\times\bullet})$.
	
	\sloppy Now consider the multiplication $T_r(u\vecg(\vect,\vecs)) \cdot M$.  By \autoref{deriv-condense-rank-known-basis} it follows that $\rank_{\F(u,\vect,\vecs)}T_r(u\vecg(\vect,\vecs))M=\rank_\F M$.  Thus, $T_r(u\vecg(\vect,\vecs))M$ and $M$ have the same $\F(u,\vect,\vecs)$-row-span.  Thus, as in \autoref{shift-rank-conc}, we have that the definition of $T_r$ and \autoref{eq:transfer} imply that,
	\begin{align*}
		\Span_{\F(u,\vect,\vecs)}\{\deriv_{\vecx^\veca}(\vecf)(\vecaa+u\vecg(\vect,\vecs))\}_{\substack{|\veca|_0\le\floor{\lg r}\\|\veca|_\infty<d}}
		&=\rspan_{\F(u,\vect,\vecs)}(T_r(u\vecg(\vect,\vecs))M)\\
		&=\rspan_{\F(u,\vect,\vecs)}(M)\\
		&=\Span_{\F(u,\vect,\vecs)}\{\deriv_{\vecx^\veca}(\vecf)(\vecaa)\}_\veca\\
		&=\Span_{\F(u,\vect,\vecs)}\{\deriv_{\vecx^\veca}(\vecf)(\vecaa+u\vecg(\vect,\vecs))\}_\veca
	\end{align*}
	where, again, we used \autoref{change-deriv-point} in the last equality to change the point of derivation.
\end{proof}

Note that $\vecaa$ could be a polynomial map in auxiliary variables $\vect'$ (and this will happen in later sections), in which case the above lemma would use the field $\F(\vect')$.

\section{Hitting Sets for Unknown Order ROABPs}\label{sec:noncommutative}

In this section, we present hitting sets of size $\poly(d,n,r)^{d\lg r\cdot \lg n}$ for ROABPs on $n$ variables, width-$r$ and individual degree $<d$, when the order of the variables is unknown.  While this gives worse parameters than the size-$\poly(d,n,r)^{\lg n}$ hitting set of Forbes-Shpilka~\cite{ForbesShpilka12a} results for ROABPs, that hitting set only worked for a fixed variable order.  Further, while the results of this section have a poor dependence on the individual degree $d$, note that $d=2$ is sufficiently interesting, as it covers all of the set-multilinear models considered by Agrawal-Saha-Saxena~\cite{AgrawalSS12}.

The results of \autoref{sec:commutative} showed that, in commutative ROABPs, rank concentration can be achieved in $n$ variables by reducing it to achieving rank concentration in all size-$\Theta(\lg r)$ subsets of variables.  Then, rank concentration can be brute-force achieved in quasipolynomial time.  However, this connection crucially used commutativity to reorder the variables, and seems to break down otherwise.  

In considering general ROABPs, we will still use shifts to concentrate rank on smaller-support derivatives.  However, instead of relying on a single shift, we will shift $\lg n$ times, each time halving the support of the derivatives needed.  While the rank concentration results of \autoref{sec:transfer matrix} can achieve this, such results are too expensive when applied to all $n$ variables at once.  Instead, we will use the rank concentration results of \autoref{sec:condense-known-basis}, which allow each shift to build on previous shifts.

More specifically, suppose we have two ROABPs $F(\vecx)$ and $G(\vecy)$, each with support-$\ell$ rank concentration.  First, we observe a simple merge, that is, $F(\vecx)G(\vecy)$ has support-$2\ell$ rank concentration.  By itself, this merge step contains no work.  Now we seek to reduce the support. This can be done with a rank condenser, and in particular, since $FG$ has support-$2\ell$ rank concentration, $FG$ has a ``known basis'' that we can condense.  In this case, the result of \autoref{shift-rank-conc-known-basis} will condense to support-$\lg r^2$, using a single shift of $(\vecx,\vecy)$.  Thus, if we take $\ell=\lg r^2$, then this merge-and-reduce process has doubled the number of variables, while maintaining support-$\ell$ rank concentration.  By applying this $\lg n$ times, we can shift $n$-variate ROABP into small-support rank concentration.

Given the proof outline, we begin the formalities by observing that rank concentration of two ROABPs on disjoint variables implies rank concentration of their product.  The proof shares the bilinearity-based ideas of \autoref{thm:rank-concentration-comm}, but we must now preserve the order of the variables.

\begin{lemma}\label{roabp-combine}
	Let $\vecx$ and $\vecy$ be disjoint sets of $n$ variables.
	Consider $F(\vecx)\in\F[\vecx]^{r\times r}$ and $G(\vecy)\in\F[\vecy]^{r\times r}$.  Then if $F(\vecx)$ is support-$\ell$ rank concentrated at $\vecaa\in\F^n$ and $G(\vecy)$ is support-$k$ rank concentrated at $\vecbb\in\F^n$, then $F(\vecx)G(\vecy)$ is support-$(\ell+k)$ rank concentrated at $(\vecaa,\vecbb)$.
\end{lemma}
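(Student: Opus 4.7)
The plan is to use the fact that $\vecx$ and $\vecy$ are disjoint, so that Hasse derivatives with respect to $\vecx$ and $\vecy$ factor: by bilinearity of matrix multiplication,
\[
  \deriv_{\vecx^\veca \vecy^\vecb}(FG)(\vecaa,\vecbb)
  \;=\; \deriv_{\vecx^\veca}(F)(\vecaa) \,\cdot\, \deriv_{\vecy^\vecb}(G)(\vecbb),
\]
viewed as a product of $r \times r$ matrices, which we then identify (as elsewhere in the paper) with a vector in $\F^{r\times r} \equiv \F^{r^2}$. This identity follows because $\vecx$ and $\vecy$ interact only multiplicatively across $F$ and $G$, and Hasse derivatives on disjoint variable sets commute and distribute over products.

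First I would reduce the $\vecx$-support. Fix any $\veca,\vecb$; by hypothesis, $F$ is support-$\ell$ rank concentrated at $\vecaa$, so as vectors in $\F^{r^2}$ there exist scalars $c_{\veca'}$ with
\[
  \deriv_{\vecx^\veca}(F)(\vecaa) \;=\; \sum_{|\veca'|_0 \le \ell} c_{\veca'}\, \deriv_{\vecx^{\veca'}}(F)(\vecaa).
\]
This equality holds as an identity between $r\times r$ matrices, so I can right-multiply both sides by the single matrix $\deriv_{\vecy^\vecb}(G)(\vecbb)$; the right-multiplication is $\F$-linear, hence it commutes with $\F$-linear combinations, giving
\[
  \deriv_{\vecx^\veca \vecy^\vecb}(FG)(\vecaa,\vecbb)
  \;=\; \sum_{|\veca'|_0 \le \ell} c_{\veca'}\, \deriv_{\vecx^{\veca'} \vecy^\vecb}(FG)(\vecaa,\vecbb).
\]
Thus every derivative of $FG$ at $(\vecaa,\vecbb)$ lies in the span of derivatives whose $\vecx$-part has support $\le \ell$.

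Next I would symmetrically reduce the $\vecy$-support of each surviving term. For each $\veca'$ with $|\veca'|_0 \le \ell$, I apply support-$k$ rank concentration of $G$ at $\vecbb$ to rewrite $\deriv_{\vecy^\vecb}(G)(\vecbb)$ as an $\F$-linear combination of $\deriv_{\vecy^{\vecb'}}(G)(\vecbb)$ with $|\vecb'|_0 \le k$, and now left-multiply by the single matrix $\deriv_{\vecx^{\veca'}}(F)(\vecaa)$, using linearity of matrix multiplication on the other side. Substituting back and combining constants yields
\[
  \deriv_{\vecx^\veca \vecy^\vecb}(FG)(\vecaa,\vecbb)
  \;\in\; \Span_\F\!\Big\{\deriv_{\vecx^{\veca'} \vecy^{\vecb'}}(FG)(\vecaa,\vecbb)\ :\ |\veca'|_0 \le \ell,\ |\vecb'|_0 \le k\Big\},
\]
and since $|(\veca',\vecb')|_0 = |\veca'|_0 + |\vecb'|_0 \le \ell+k$ (as $\vecx,\vecy$ are disjoint), this is contained in the span of all $(\ell+k)$-support derivatives at $(\vecaa,\vecbb)$. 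Ranging over all $\veca,\vecb$ gives the desired rank concentration.

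There is no real obstacle here beyond being careful about the matrix-versus-vector identification; the only thing to check is that the linear combination from rank concentration, which is an equality of $r\times r$ matrices, survives multiplication on the opposite side by a single matrix. This is exactly bilinearity of the matrix product, so the argument is essentially just bookkeeping once the factorization of derivatives via disjointness of $\vecx$ and $\vecy$ is noted.
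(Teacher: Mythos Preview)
Your proposal is correct and follows essentially the same approach as the paper: both use the factorization $\deriv_{\vecx^\veca\vecy^\vecb}(FG)(\vecaa,\vecbb)=\deriv_{\vecx^\veca}(F)(\vecaa)\cdot\deriv_{\vecy^\vecb}(G)(\vecbb)$ from disjointness of $\vecx,\vecy$, and then push the rank-concentration linear relations through via bilinearity of matrix multiplication. The only cosmetic difference is that the paper substitutes both spans at once, whereas you do it in two sequential steps (right-multiply, then left-multiply); the content is identical.
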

\begin{proof}
	Consider a derivative $\deriv_{\vecx^{\veca_0}\vecy^{\vecb_0}}(FG)$.  The rank concentration implies that
	\[
		\deriv_{\vecx^{\veca_0}}(F)(\vecaa)
		\in\Span_\F \{\deriv_{\vecx^\veca}(F)(\vecaa)\}_{|\veca|_0\le \ell}
	\]
	and
	\[
		\deriv_{\vecy^{\vecb_0}}(G)(\vecbb)
		\in\Span_\F \{\deriv_{\vecy^\vecb}(G)(\vecbb)\}_{|\vecb|_0\le k}
	\]
	Thus, as matrix multiplication is bilinear, and the variables $\vecx$ and $\vecy$ are disjoint, it follows that
	\begin{align*}
		\deriv_{\vecx^{\veca_0}\vecy^{\vecb_0}}(FG)
		&=\deriv_{\vecx^{\veca_0}}(F)(\vecaa)		\deriv_{\vecy^{\vecb_0}}(G)(\vecbb)\\
		&\in\Span_\F \{\deriv_{\vecx^\veca}(F)(\vecaa)\deriv_{\vecy^\vecb}(G)(\vecbb)\}_{|\veca|_0\le \ell,|\vecb|_0\le k}\\
		&=\Span_\F \{\deriv_{\vecx^\veca\vecy^\vecb}(FG)(\vecaa,\vecbb)\}_{|\veca|_0\le \ell,|\vecb|_0\le k}\\
		&\subseteq\Span_\F \{\deriv_{\vecz^\vecc}(FG)(\vecaa,\vecbb)\}_{|\vecc|_0\le \ell+k}
	\end{align*}
	where $\vecz=(\vecx,\vecy)$.
\end{proof}

Note that if we take $\vecaa$ and $\vecbb$ as polynomial maps in auxiliary variables $\vect$ (as will happen later), then the above lemma still holds simply by changing the field $\F$ under consideration to the field $\F(\vect)$.

This lemma shows that we can combine support-$\ell$ rank concentration on disjoint sets of $n$ variables into support-$2\ell$ rank concentration on $2n$ variables. However, this process does not change the ratio of the support to the number of variables. Nevertheless, by applying the rank condenser of \autoref{shift-rank-conc-known-basis}, we can compress back down to support-$\floor{\lg r^2}$, as we summarize in the following corollary.

\begin{corollarywp}\label{roabp-reduce}
	Let $\vecx$ and $\vecy$ be disjoint sets of $n$ distinct variables. Let $F(\vecx)\in\F[\vecx]^{r\times r}$ and $G(\vecy)\in\F[\vecy]^{r\times r}$ be of individual degree $<d$, and have support-$\ell$ rank concentration at $\vecaa,\vecbb\in\F^n$ respectively.

	Let $\vecg(\vect,\vecs)$ be an independent monomial map for total degree $<2d\ell$ mapping into the $2n$ variables $(\vecx,\vecy)$, where $u$, $\vect$ and $\vecs$ are new variables. Then $F(\vecx)G(\vecy)$ has support-$\floor{\lg r^2}$ rank concentration at $(\vecaa,\vecbb)+(u\cdot\vecg(\vect,\vecs)|_{\vecx},u\cdot\vecg(\vect,\vecs)|_{\vecy})$ over the field $\F(u,\vect,\vecs)$.
\end{corollarywp}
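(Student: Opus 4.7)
The plan is to view this corollary as the immediate composition of Lemma \ref{roabp-combine} (the ``merge'' step) and Corollary \ref{shift-rank-conc-known-basis} (the ``reduce'' step), the only substantive check being that the parameters line up correctly.

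First, I would apply Lemma \ref{roabp-combine} to the two matrices $F(\vecx)$ and $G(\vecy)$. Since $\vecx$ and $\vecy$ are disjoint, and each of $F, G$ is support-$\ell$ rank concentrated at $\vecaa$ and $\vecbb$ respectively, the lemma yields that the $r\times r$ matrix $F(\vecx)G(\vecy)$ --- viewed as a vector of $r^2$ polynomials on the $2n$ variables $\vecz \eqdef (\vecx,\vecy)$ --- is support-$2\ell$ rank concentrated at the combined point $(\vecaa,\vecbb) \in \F^{2n}$.

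Next, I would feed $F(\vecx)G(\vecy)$ into Corollary \ref{shift-rank-conc-known-basis}, with the vector of polynomials being of length $r^2$ and living on $2n$ variables of individual degree $<d$, and with the already-established support parameter $2\ell$ playing the role of ``$\ell$'' in that corollary. The hypothesis of \autoref{shift-rank-conc-known-basis} then requires a total-degree-$(<d \cdot 2\ell)$ independent monomial map into the $2n$ variables $(\vecx,\vecy)$, which is precisely the map $\vecg(\vect,\vecs)$ assumed in the hypothesis. Applying the corollary then upgrades the support-$2\ell$ rank concentration to support-$\floor{\lg r^2}$ rank concentration at the shifted point $(\vecaa,\vecbb) + u\cdot\vecg(\vect,\vecs)$, viewed over $\F(u,\vect,\vecs)$. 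Unpacking the shift coordinate-wise as $u\cdot\vecg(\vect,\vecs) = (u\cdot\vecg(\vect,\vecs)|_\vecx, u\cdot\vecg(\vect,\vecs)|_\vecy)$ gives exactly the claimed evaluation point.

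There is no serious obstacle; the only thing worth verifying is the parameter bookkeeping. Support-$2\ell$ monomials on $2n$ variables of individual degree $<d$ have total degree strictly less than $2d\ell$, so the ``known basis'' indexed by \autoref{shift-rank-conc-known-basis}'s set $P_{d\cdot 2\ell}$ (the monomials of total degree $<2d\ell$) does contain the support-$2\ell$ basis we need, matching the total-degree-$(<2d\ell)$ independence required of $\vecg$. The fact that the number of polynomials is $r^2$ rather than $r$ is exactly what produces the $\floor{\lg r^2}$ in place of $\floor{\lg r}$ in the conclusion.
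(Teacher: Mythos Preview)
Your proposal is correct and is exactly the argument the paper intends: the corollary is stated without proof, and the surrounding text makes clear it is simply the composition of \autoref{roabp-combine} (merge to support-$2\ell$) with \autoref{shift-rank-conc-known-basis} (reduce to support-$\floor{\lg r^2}$), which is precisely what you spell out, including the parameter check.
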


We now apply the above corollary recursively.  We maintain blocks of variables that have support-$\lg r^2$ rank concentration, and then merge pairs of blocks.  This raises the support to $2\lg r^2$, which is then brought back down to $\lg r^2$ with a shift.  In this entire process, the variables (and their blocks) will respect the order of the ROABP.  However, this order is only needed to be known for the analysis, as the construction is oblivious to it.  In particular, we can construct the needed monomial maps in a way oblivious to the order, as seen in \autoref{KS-mon-map-total} below.  We now prove that this scheme works.

\begin{lemma}\label{mon-map-to-gen}
	Let $n\ge 0$, $N=2^n$ and $d,r\ge 1$.  Let $\vecg:\F^m\times\F^{m'}\to\F^N$ be an independent monomial map for total degree $<2d(\floor{\lg r^2}+1)$.  Define $\cG_{n,d,r}:(\F\times\F^m\times\F^{m'})^{n}\to\F^N$ to be the polynomial map defined by
	\[
		\cG_{n,d,r}(u_1,\vect_1,\vecs_1,\ldots,u_n,\vect_n,\vecs_n)
		=\sum_{i\in[n]} u_i\cdot \vecg(\vect_i,\vecs_i).
	\]
	
	Let $\pi$ be a permutation $\pi:[N]\to[N]$.  Let $F(\vecx)=\prod_{i\in[N]} M_i(x_{\pi(i)})$, where for $i\in[N]$,  $M_i(x_{\pi(i)})\in\F[x_{\pi(i)}]^{r\times r}$ is of individual degree $<d$. Then $F(\vecx)$ has support-$(\floor{\lg r^2}+1)$ rank concentration at $\cG((u_i,\vect_i,\vecs_i)_{i\in[n]})$ over the field $\F((u_i,\vect_i,\vecs_i)_{i})$.
\end{lemma}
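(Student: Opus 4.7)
The plan is to prove the lemma by induction on $n$, following the merge-and-reduce scheme outlined in the paper's proof sketch. The induction hypothesis at level $n$ is exactly the statement of the lemma, and at each level we combine rank-concentration on two halves of the ROABP and then apply the shift-based rank condenser to compress the support back to $\lfloor \lg r^2\rfloor + 1$. The base case $n = 0$ has $N = 1$, so $F(\vecx)$ is a matrix of univariates of degree $< d$; every derivative has support at most $1 \le \lfloor \lg r^2\rfloor + 1$, so the conclusion holds vacuously at $\cG_{0,d,r} = \vec 0$.

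For the inductive step, split the iterated product at its midpoint: set $\vecy = \{x_{\pi(i)} : i \in [N/2]\}$ and $\vecz = \{x_{\pi(i)} : i \in (N/2, N]\}$, and write $F(\vecx) = F_1(\vecy)\, F_2(\vecz)$ with $F_i$ a width-$r$ ROABP on $N/2 = 2^{n-1}$ variables of individual degree $<d$, in the orders induced by $\pi$. The coordinate restriction $\vecg|_\vecy \colon \F^m \times \F^{m'} \to \F^{N/2}$ is still an independent monomial map for total degree $< 2d(\lfloor \lg r^2\rfloor + 1)$, because the distinguishing $\vecaa \in \F^{m'}$ for $\vecg$ also works for $\vecg|_\vecy$: the monomials $\{\vecg|_\vecy(\vect,\vecaa)^\veca\}_{|\veca|_1 < D}$ form a subfamily of the distinct-monomial family for $\vecg$. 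The same applies to $\vecg|_\vecz$. Applying the induction hypothesis to $F_1$ (using $\vecg|_\vecy$ and the first $n-1$ copies of auxiliary variables) and to $F_2$ (using $\vecg|_\vecz$ and the same variables), both halves have support-$(\lfloor \lg r^2\rfloor + 1)$ rank concentration at the respective restrictions of $\vecaa^{(n-1)} \defeq \sum_{i \in [n-1]} u_i\, \vecg(\vect_i,\vecs_i)$, over the field $\F((u_i,\vect_i,\vecs_i)_{i \in [n-1]})$. Now invoke \autoref{roabp-reduce} with $\ell \defeq \lfloor \lg r^2\rfloor + 1$ and the fresh auxiliary variables $(u_n,\vect_n,\vecs_n)$: since $\vecg$ is an independent monomial map for total degree $< 2d\ell$, we conclude $F_1 F_2$ has support-$\lfloor \lg r^2\rfloor$ rank concentration at
\[
\bigl(\vecaa^{(n-1)}|_\vecy,\ \vecaa^{(n-1)}|_\vecz\bigr) + \bigl(u_n \vecg(\vect_n,\vecs_n)|_\vecy,\ u_n \vecg(\vect_n,\vecs_n)|_\vecz\bigr) \;=\; \sum_{i \in [n]} u_i\, \vecg(\vect_i,\vecs_i) \;=\; \cG_{n,d,r},
\]
over $\F((u_i,\vect_i,\vecs_i)_{i \in [n]})$. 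Since $\lfloor \lg r^2\rfloor \le \lfloor \lg r^2\rfloor + 1$, this establishes the inductive step.

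The main technical obstacle is a bookkeeping issue rather than a conceptual one: \autoref{roabp-reduce} is stated with shift points $\vecaa,\vecbb \in \F^n$ in the base field, but the inductive application feeds in $\vecaa^{(n-1)}|_\vecy$ and $\vecaa^{(n-1)}|_\vecz$, which are polynomial maps in the earlier auxiliary variables. This generalization is the one explicitly flagged in the remark following \autoref{shift-rank-conc-known-basis} (and in \autoref{roabp-combine}), namely that one simply works over the extended field $\F((u_i,\vect_i,\vecs_i)_{i \in [n-1]})$ in place of $\F$ throughout the reduce step; the rank condenser analysis of \autoref{shift-rank-conc-known-basis} is insensitive to which field one regards as the base. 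Apart from this, one must verify that the decomposition $F = F_1 F_2$ is valid as an ROABP factorization respecting the order $\pi$ (immediate from the iterated-matrix-product form), and that the coordinate projections of $\vecg$ inherit the independent-monomial-map property (also immediate). Everything else is routine.
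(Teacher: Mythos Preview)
Your proposal is correct and follows essentially the same approach as the paper's own proof: induction on $n$, with the base case $n=0$ handled trivially (single variable, so support-$1$ concentration at $\vec{0}$), and the inductive step splitting the product at the midpoint, applying the hypothesis to each half with the restricted monomial maps $\vecg|_\vecy$, $\vecg|_\vecz$ and the first $n-1$ blocks of auxiliary variables, then invoking \autoref{roabp-reduce} with the $n$-th block to recombine. Your explicit treatment of the field-extension bookkeeping and the verification that coordinate restrictions of $\vecg$ remain good monomial maps are spelled out more carefully than in the paper, but the argument is the same.
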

\begin{proof}
	We proceed by induction on $n$.

	\uline{$n=0$:} In this case, we have a single variable $x$ and the matrix $F(x)$.  The generator $\cG$ is then the zero polynomial map, as the summation defining it is empty.  Thus, $F(x)$ clearly has support-1 rank concentration at $\vec{0}$, as there is only 1 variable to consider. As $1\le \floor{\lg r^2}+1$, the claim follows.

	\uline{$n>0$:} Divide the $2^n$ variables in half, into $\vecy$ and $\vecz$, respecting the order $x_{\pi(1)},\ldots,x_{\pi(N)}$.  Thus, we can write $F(\vecx)=G(\vecy)H(\vecz)$. Now note that $\vecg(\cdot,\cdot)|_\vecy$ is a good monomial map for $\vecy$, and similarly $\vecg(\cdot,\cdot)|_\vecz$ is a good monomial map for $\vecz$.  Thus, if we apply the induction hypothesis on these cases, using their respective pieces of the monomial map $\vecg$, it follows that $G$ is support-$(\floor{\lg r^2}+1)$ rank-concentrated at $\sum_{i\in[n-1]} u_i\cdot \vecg(\vect_i,\vecs_i)|_\vecy$, and similarly $H$ is support-$(\floor{\lg r^2}+1)$ rank-concentrated at $\sum_{i\in[n-1]} u_i\cdot \vecg(\vect_i,\vecs_i)|_\vecz$. It follows then from \autoref{roabp-reduce} that $F(\vecx)=G(\vecy)H(\vecz)$ is support-$(\floor{\lg r^2}+1)$ rank concentrated at 
	\begin{align*}
		\left(\sum_{i\in[n-1]} u_i\cdot \vecg(\vect_i,\vecs_i)|_\vecy,\right.
		&\left.\sum_{i\in[n-1]} u_i\cdot \vecg(\vect_i,\vecs_i)|_\vecz\right) 
		+ \left(u_n\cdot \vecg(\vect_n,\vecs_n)|_\vecy,u_n\cdot \vecg(\vect_n,\vecs_n)|_\vecz\right)\\
		=&\left(\sum_{i\in[n]} u_i\cdot \vecg(\vect_i,\vecs_i)|_\vecy,\sum_{i\in[n]} u_i\cdot \vecg(\vect_i,\vecs_i)|_\vecz\right)=\cG_{n,d,r}((u_i,\vect_i,\vecs_i)_{i\in[n]}),
	\end{align*}
	as desired.
\end{proof}

We now instantiate the above corollary with specific monomial maps for individual degree.  We begin by noting that the SV-generator does not yield such a monomial map.  That is, for any monomial $\vecx^\veca$ of low-total-degree (for example degree $d\lg r$, as in the above lemma), the generator $\GSV{n,d\lg r}(\vecy,\vecz)$ will have some setting $\vecz\leftarrow\vecaa_\veca$ so that $\vecx^\veca$ is preserved as a monomial in $\vecy$.  However, the choice of $\vecaa_\veca$ will depend on $\veca$, and thus a different monomial $\vecx^\vecb$ of low degree may be zeroed out in the substitution $\vecz\leftarrow\vecaa_\veca$.  That is, we need that there exists a (\emph{single}) setting of $\vecz$ that preserves \emph{all} relevant monomials, whereas the SV-generator reverses the order of these quantifiers. Thus, we will instead use the KS-generator as alluded to above, as recorded in the next lemma.

\begin{lemma}\label{KS-mon-map-total}
	Let $|\F|\ge\poly(n,d)$.  Then the KS-generator $\GKS{n,d,n^D}:(\F^{\O(D\log_{nd} n)})^2\to\F^n$ of \autoref{thm:KS generator} is a $\poly(n,d,D)$-explicit, independent monomial map for total degree $<D$.
\end{lemma}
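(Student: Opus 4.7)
The plan is to apply \autoref{thm:KS generator} directly, with sparsity parameter $s = n^D$. The only new ingredient needed is a bound on the number of $n$-variate monomials $\vecx^\veca$ with $|\veca|_1 < D$, so that this set fits within the sparsity bound of the KS generator. By a standard stars-and-bars count, this number equals $\binom{n+D-1}{D-1}$, and I claim it is at most $n^D$ for $n \ge 2$. A short induction on $D$ suffices: the base case $D = 1$ gives $1 \le n$, and the inductive step uses the identity $\binom{n+D}{D} = \binom{n+D-1}{D-1} \cdot (n+D)/D$ combined with the elementary inequality $(n+D)/D \le n$, which holds whenever $n \ge 2$ and $D \ge 1$.

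With this count in hand, I would take $S$ to be the set of all $n$-variate monomials of total degree $<D$. Each such monomial has individual degree at most $D-1$, so provided $d \ge D$ (the regime in which the lemma is invoked), the set $S$ meets the hypothesis of \autoref{thm:KS generator} with $s = n^D \ge |S|$. The theorem then produces a single $\vecaa \in \F^m$ such that the polynomials $\{\GKS{n,d,n^D}(\vect,\vecaa)^\veca\}_{\veca \in S}$ are nonzero and pairwise distinct monomials in $\vect$. This is exactly the property required by \autoref{defn-tot-deg-mon-map}, so the map $\vect \mapsto \GKS{n,d,n^D}(\vect,\cdot)$ qualifies as an independent monomial map for total degree $<D$.

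Finally, the seed length and explicitness stated in the lemma fall out immediately from the parameters of the KS generator at sparsity $s = n^D$: the seed length is $m = \Theta(\log_{nd}(n^D)) = \Theta(D \log_{nd} n)$, matching the claim, and the $\poly(n,d,m)$-explicitness of the KS generator translates to $\poly(n,d,D)$-explicitness. I do not anticipate any real obstacle here; the only mildly delicate point is the combinatorial estimate $\binom{n+D-1}{D-1} \le n^D$, which is elementary. Everything else is a direct unpacking of \autoref{thm:KS generator} with the right sparsity.
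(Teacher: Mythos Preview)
Your approach is the same as the paper's: bound the number of $n$-variate monomials of total degree $<D$ by $n^D$, then invoke the KS generator with sparsity $s=n^D$. The paper uses a slightly cruder count (each degree-$k$ monomial is a string in $[n]^k$, giving $\sum_{k<D} n^k \le n^D$), whereas you use the exact stars-and-bars value $\binom{n+D-1}{D-1}$ and then bound it; either is fine.

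There is one small slip in your induction. You claim $(n+D)/D \le n$ whenever $n\ge 2$ and $D\ge 1$, but at $D=1$ this reads $n+1 \le n$, which is false. The overall bound $\binom{n+D-1}{D-1}\le n^D$ is still correct for $n\ge 2$; you just need to start the induction at $D=2$ (where $\binom{n+1}{1}=n+1\le n^2$ holds for $n\ge 2$) and observe that $(n+D)/D\le n$ is equivalent to $D(n-1)\ge n$, which holds for $D\ge 2$ and $n\ge 2$. Alternatively, you could adopt the paper's string-counting argument and sidestep the issue entirely.
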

\begin{proof}
	By writing $n$-variate monomials of total degree $k$ as strings in $[n]^k$, it follows that there are $\le \sum_{k\in\zr{D}}n^k=\frac{n^D-1}{n-1}\le n^D$ monomials in $n$ variables of total degree $<D$. Thus, it follows from \autoref{thm:KS generator} that the KS-generator with sparsity $s=n^D$ will map these monomials distinctly, as desired.
\end{proof}

Combining the above, we get the desired hitting sets for ROABPs with unknown variable order, yielding the main result of this paper.

\begin{theorem}[Main]\label{hit-set-noncommut}
	Let $n\ge 0$, $N=2^n$, $d,r\ge 1$ and $|\F|\ge \poly(N,d)$. Define $\cG_{N,d,r}':(\F\times(\F^{\O(d\lg r\cdot\log_{Nd} N)})^2)^n\to\F^N$ be the polynomial map defined by
	\[
		\cG_{N,d,r}'(u_1,\vect_1,\vecs_1,\ldots,u_n,\vect_n,\vecs_n)
		=\sum_{i\in[n]} u_i\cdot \GKS{N,d,N^{2d(\floor{\lg r^2}+1)}}(\vect_i,\vecs_i).
	\]
	Then, for $i\in[N]$, the polynomial $(\cG_{N,d,r}')_i$ has individual degree $\le \poly(N,d)$. Further, $\cG_{N,d,r}'$ is $\poly(N,d,r)$ explicit.

	Further, for any $N$-variate $f(\vecx)\in\F[\vecx]$ of individual degree $<d$, computed by a ROABP of width $\le r$ in some variable order, $f\not\equiv 0$ iff $f\circ (\cG_{N,d,r}'+\GSV{\floor{\lg r^2}+1})\not\equiv 0$.

	Thus, for any $N$, there is a $\poly(N,d)^{\O(d\cdot\lg r\cdot\lg N)}$ size $\poly(N,d,r)$-explicit hitting set for width-$r$ ROABPs on $N$ variables, of individual degree $<d$, in any variable order.
\end{theorem}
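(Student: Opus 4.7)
The plan is to combine the machinery already built — the recursive merge-and-reduce rank concentration lemma (\autoref{mon-map-to-gen}), the KS-based independent monomial map (\autoref{KS-mon-map-total}), and the rank-concentration-to-hitting-set conversion (\autoref{cor:rank-conc-to-hitting-sets}) — and then simply verify that the parameters line up.

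First, using \autoref{lem:roabp-as-imm}, I would reduce from the polynomial $f$ to the matrix $F(\vecx)=\prod_{i\in[N]}M_i(x_{\pi(i)})\in\F[\vecx]^{r\times r}$ computed by the width-$r$ ROABP in the (unknown) variable order $\pi$. It suffices to fool linear combinations of the entries of $F$.

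Next, I would instantiate \autoref{mon-map-to-gen} with the choice $\vecg=\GKS{N,d,N^{2d(\floor{\lg r^2}+1)}}$. The key point is that by \autoref{KS-mon-map-total}, this is an independent monomial map for total degree $<2d(\floor{\lg r^2}+1)$, which is exactly the hypothesis required by \autoref{mon-map-to-gen}. The generator $\cG'_{N,d,r}$ in the statement is then literally the generator $\cG_{n,d,r}$ produced by \autoref{mon-map-to-gen} with this choice of $\vecg$. The conclusion is that $F(\vecx)$ is support-$(\floor{\lg r^2}+1)$ rank concentrated at $\cG'_{N,d,r}((u_i,\vect_i,\vecs_i)_i)$ over the field of rational functions in these auxiliary variables. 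Crucially, although the proof of \autoref{mon-map-to-gen} uses the order $\pi$ to split variable blocks $\vecy,\vecz$ respecting the order, the generator itself is oblivious to $\pi$ since $\vecg$ treats all coordinates symmetrically.

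Having established rank concentration, I would apply \autoref{cor:rank-conc-to-hitting-sets} with $\ell=\floor{\lg r^2}+1$ and $\vecaa=\cG'_{N,d,r}$ (viewed as a polynomial map in the $u_i,\vect_i,\vecs_i$), yielding the equivalence $f\not\equiv 0$ iff $f\circ(\cG'_{N,d,r}+\GSV{N,\floor{\lg r^2}+1})\not\equiv 0$. Finally, to extract the explicit hitting set, I would count: $\cG'_{N,d,r}$ uses $n=\lg N$ blocks, each contributing $1+\O(d\lg r\cdot\log_{Nd}N)$ variables, so together with the $\O(\lg r)$ variables of the SV-generator, the composed polynomial $f\circ(\cG'_{N,d,r}+\GSV{})$ depends on $\O(\lg N\cdot d\lg r\cdot\log_{Nd}N)=\O(d\lg r\cdot\lg N)$ variables. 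Since each coordinate of the generator has individual degree $\poly(N,d)$ and $f$ itself has individual degree $<d$, the composed polynomial has total degree $\poly(N,d)$ per variable, and a product of interpolation grids of this size yields a hitting set of size $\poly(N,d)^{\O(d\lg r\cdot\lg N)}$ with the claimed explicitness.

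The main obstacle, if any, is bookkeeping rather than new ideas: one must check that the ``total degree $<2d(\floor{\lg r^2}+1)$'' threshold used in the KS-generator is exactly what the inductive shift step in \autoref{roabp-reduce} demands (support $\ell$ and individual degree $<d$ together bound total degree by $d\ell$, and the merged block has support $2(\floor{\lg r^2}+1)$ before condensation), and that the variables $(u_i,\vect_i,\vecs_i)$ from the different recursion levels and the fresh $\GSV{}$ variables really can be instantiated independently — which follows since at each stage the rank-concentration guarantee holds generically over the corresponding rational function field, and interpolation over a large enough product grid converts generic nonvanishing into an explicit hitting set.
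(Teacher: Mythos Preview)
Your proposal is correct and follows essentially the same approach as the paper: instantiate \autoref{mon-map-to-gen} with the KS-based monomial map of \autoref{KS-mon-map-total}, invoke \autoref{cor:rank-conc-to-hitting-sets} to pass from rank concentration to the generator $\cG'_{N,d,r}+\GSV{}$, and then interpolate over the $\O(\lg N\cdot d\lg r\cdot\log_{Nd}N)+\O(\lg r)$ auxiliary variables. The paper's own proof is terser but structurally identical, and your additional remarks on order-obliviousness and the total-degree bookkeeping are accurate elaborations rather than departures.
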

\begin{proof}
	\uline{generator:} That the $\cG'+\GSV{}$ generator above is correct follows from instantiating \autoref{mon-map-to-gen} with the monomial map of \autoref{KS-mon-map-total}. 

	\uline{degree, explicitness:} These follow from \autoref{thm:KS generator}.

	\uline{hitting set:} The hitting set follows by interpolating the polynomial $f\circ(\cG'+\GSV{})$. This polynomial has individual degree $\poly(N,d)$, and has $\lg N\cdot \O(d\lg r\cdot \log_{dN} N)+\O(\lg r)$ variables. Thus, the number of points in the interpolation is $\poly(N,d)^{\O(d\lg r\cdot\lg N\cdot\log_{Nd} N)+\O(\lg r)}=N^{\O(d\lg r\cdot \lg N)}\cdot\poly(N,d)^{\O(\lg r)}=\poly(N,d)^{\O(d\lg r\cdot \lg N)}$.
\end{proof}

We now briefly comment on why our methods have a poor dependence on the individual degree of the polynomial $f$.  Based on our paradigm of merge-and-reduce, as embodied by \autoref{roabp-reduce}, we seek to move from support-$\lg r^2$ rank concentration, merge to support-$2\lg r^2$ concentration, and then reduce back to support-$\lg r^2$ concentration.  To do this reduction step, we use a rank condenser, that condenses rank from a known-basis.  This known-basis is among the monomials of support-$2\lg r^2$.  Ideally, we could construct a rank condenser that condenses just from this basis.  However, as discussed in \autoref{sec:condense-known-basis}, our rank condenser must be compatible with black-box access to polynomials.  The notion of ``support'' of a monomial is not easily compatible with black-box access, and so it unclear how to make small-support monomials both distinct, as well as ``smaller'' than all large-support monomials, such that the uniqueness arguments of \autoref{sec:condense-known-basis} (when we have that all small weights are distinct) can be applied.  

Thus, as we cannot condense from small-support monomials, the above results observe that when the individual degree $d$ is small, small-support and low-degree become nearly equivalent notions.  The notion of being low-degree is then compatible with black-box access to polynomials, via homogenization, and thus the results proceed as detailed above.

\section{Better Hitting Sets for Commutative ROABPs}\label{sec:hashing+FS}

In \autoref{sec:commutative} we gave quasipolynomial-size hitting sets for the class of commutative ROABPs.  In this section, we improve these $n^{\O(\lg n)}$ results to $n^{\O(\lg\lg n)}$, for two subclasses of commutative ROABPs: diagonal circuits, and commutative ROABPs with small individual degree.

To understand this improvement, recall the strategy of \autoref{sec:commutative}, which involved two different parts.  First, we found a small set of shifts, such that we could shift the polynomial into small-support rank concentration, and this yielded a nonzero monomial of small support.  As mentioned at the end of \autoref{sec:commutative}, we do not have good ways to reduce the number of shifts needed, and thus focus on improving the second step, which is where we try to hit a commutative ROABP that is promised to have a small support monomial.  To do this, we used \autoref{lem:GSV-hit-ss}, which composed the polynomial $f$ with the SV-generator $\GSV{n,\ell}$, where $\ell$ is the size of the small-support monomial, and observed that $f\not\equiv 0$ iff $f\circ \GSV{n,\ell}\not\equiv 0$. We then interpolated this generator into a hitting set.

If we wish to improve the hitting sets under the promise of a support-$\ell$ monomial, it seems that properties of $f$ must be used, as otherwise it seems unlikely hitting sets smaller than $\poly(n)^\ell$ can be devised as there are too many such polynomials to hit.  To take advantage of $f$, we thus ask: what is the complexity of $f\circ \GSV{n,\ell}$? Note that it has $2\ell\approx \lg r$ variables, which is quite small.  If $f\circ \GSV{n,\ell}$ itself can be expressed as a small circuit, then we could use PIT for this class.  In particular, if we have hitting sets for size-$s$, $n$-variate circuits in this class of size $\poly(s,n)^{\lg n}$, then this would yield the desired result, as $n\approx\lg r$ in this new circuit class. While this strategy can work\footnote{It can be shown that when $f$ is a (depth-3) diagonal circuit, then $f\circ \GSV{n,\ell}$ is computable by a depth-4 diagonal circuit in $\approx \lg r$ variables.  The hitting sets for depth-4 diagonal circuits of Forbes-Shpilka~\cite{ForbesShpilka12a} are of desired $\poly(s,n)^{\lg n}$ form, thus giving the $n^{\O(\lg\lg n)}$ result for depth-3 diagonal circuits.} for diagonal circuits, this section will give a different composition that work more generally for any commutative ROABP with the promise of a small-support monomial.

To hit such commutative ROABPs with a support-$\ell$ monomial, we will first hash the $n$ original variables to $\approx\ell$ buckets, and then give each bucket its own variable. Some hash function in the family will hash each variable of the support-$\ell$ monomial distinctly, which will allow us to ensure that this monomial is not canceled out, thus preserving non-zeroness.  However, we now would like to also preserve the property of being a ROABP.  Without modification, using the SV-generator will result in a polynomial that is not read-once in the variables $\vecz$.  We resolve this issue by giving each bucket its own SV-generator for support-1.  By commutativity, we can reorder the $n$ variables such that they are grouped by the buckets they hash into. Thus, each \emph{bucket} is then read-once, which implies that we now have an ROABP in $\approx \ell$ variables. We can then apply the hitting set result of Forbes-Shpilka~\cite{ForbesShpilka12a} as it has the desired $\poly(s,n)^{\lg \ell}=\poly(s,n)^{\lg\lg s}$ dependence.

We now turn to the details.  We begin by quoting the tools we will need.  We begin by quoting results showing that diagonal circuits, and the more general notion of ``low rank'' polynomials, have a small-support monomial, without any shifting.

\begin{lemma}\label{lem:rank-conc-low-rank}
	For $f(\vecx)\in\F[\vecx]$, define
	\[
		\dim(\deriv(f)) \eqdef \dim_{\F}\{\deriv_{\vecx^\veca}(f)\}_\veca
	\]
	to be the dimension of its partial derivatives, over the field $\F$.  Then $f$ is computed by a width-$\dim(\deriv(f))$, individual degree $<(\deg f+1)$ commutative ROABP.  Further, $f\not\equiv 0$ iff it contains a nonzero monomial of support $\le \lg r$. In particular, if $f$ is computed by a diagonal circuit of size $s$, then $\dim(\deriv(f))\le\poly(s,\deg(f))$.
\end{lemma}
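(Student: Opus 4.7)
The claim decomposes into three parts; I handle them in turn.

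\emph{Commutative ROABP.} For every $S \subseteq [n]$, the Taylor expansion
\[
	f(\vecx_S, \vecx_{[n]\setminus S})
	= \sum_\veca \vecx_{[n]\setminus S}^\veca \cdot \bigl(\deriv_{\vecx_{[n]\setminus S}^\veca}(f)\bigr)\bigl|_{\vecx_{[n]\setminus S} = \vec{0}}(\vecx_S)
\]
expresses any restriction $f|_{\vecx_{[n]\setminus S} = \vecbb}$ as an $\F$-linear combination of restrictions of Hasse derivatives of $f$, so the evaluation dimension $\dim_{\F}\Span_\vecbb\{f|_{\vecx_{[n]\setminus S} = \vecbb}\}$ is at most $r := \dim(\deriv(f))$. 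Since this holds for every prefix $S$ of every variable order, the standard reverse direction of \autoref{lem:roabp-as-imm} produces a width-$r$ ROABP in each variable order, with every layer-matrix having degree-$(<\deg(f)+1)$ univariate entries. Hence $f$ is computed by a commutative ROABP of the claimed parameters.

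\emph{Diagonal circuits.} For $f = \sum_{i=1}^s L_i^{d_i}$ with linear $L_i$ and $d_i \le \deg(f)$, expanding $(L_i(\vecx) + L_i(\vect))^{d_i}$ via the binomial theorem and reading off the $\vect^\veca$ coefficient shows $\deriv_{\vecx^\veca}(L_i^{d_i}) \in \F\cdot L_i^{d_i - |\veca|_1}$. Hence $\dim(\deriv(L_i^{d_i})) \le d_i + 1$, and subadditivity of $\dim(\deriv(\cdot))$ over sums yields $r \le \sum_i (d_i + 1) \le s(\deg(f)+1) = \poly(s,\deg(f))$.

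\emph{Small-support monomial (main obstacle).} The $(\Leftarrow)$ direction is immediate. For $(\Rightarrow)$ I plan to prove the contrapositive $r \ge 2^k$, where $k := \min_{\vecb \in \supp(f)}|\vecb|_0$. Fix $\vecb^*$ achieving this minimum and, secondarily, minimizing $|\vecb^*|_1$. I will exhibit $2^k$ linearly independent derivatives $\{\deriv_{\vecx^\veca}(f) : \veca \in \{0,1\}^{\supp(\vecb^*)}\}$ (regarding each $\veca$ as an element of $\N^n$ supported on $\supp(\vecb^*)$) by analyzing the $2^k \times 2^k$ matrix
\[
	N_{\vecaa,\veca} \;=\; \coeff{\vecx^{\vecb^*-\veca}}{\deriv_{\vecx^\vecaa}(f)} \;=\; c_{\vecb^*-\veca+\vecaa}\binom{\vecb^*-\veca+\vecaa}{\vecaa}.
\]
Ordering the $\veca$'s by $|\veca|_1$ (with ties broken lexicographically), whenever $|\vecaa|_1 < |\veca|_1$ one has $|\vecb^*-\veca+\vecaa|_1 < |\vecb^*|_1$; combined with $\supp(\vecb^*-\veca+\vecaa) \subseteq \supp(\vecb^*)$ of size at most $k$, the minimality of $\vecb^*$ forces $\vecb^*-\veca+\vecaa \notin \supp(f)$, so the entry vanishes. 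The diagonal entries $c_{\vecb^*}\binom{\vecb^*}{\veca}$ are nonzero, so $N$ is lower-triangular with nonzero diagonal and $\dim(\deriv(f)) \ge 2^k$. The delicate step that I expect to be the main obstacle is controlling the ``same-$|\cdot|_1$'' block of entries ($|\vecaa|_1 = |\veca|_1$, $\vecaa \ne \veca$), as well as the non-vanishing of the diagonal binomials in positive characteristic; I plan to handle these either by a careful refinement of the tie-breaking order or by invoking the rank-condenser / Wronskian framework of \autoref{sec:shifts} (in particular \autoref{deriv-are-code}) to argue directly that the corresponding submatrix of the transfer matrix $T$ has the required rank.
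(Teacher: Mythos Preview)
The paper does not actually prove this lemma; it cites Saptharishi (via Forbes--Shpilka~\cite{ForbesShpilka12a}) for the first and third claims and Theorem~6.9 of Forbes--Shpilka~\cite{ForbesShpilka13} for the small-support-monomial claim. Your arguments for the commutative-ROABP and diagonal-circuit parts are correct and match the cited proofs.

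For the small-support claim your plan has a genuine gap, and neither proposed fallback closes it. The trouble is the choice of derivative family $\{\deriv_{\vecx^\veca}: \veca \in \{0,1\}^{\supp(\vecb^*)}\}$. With this choice the diagonal entry is $c_{\vecb^*}\prod_{i:a_i=1} b^*_i$, which vanishes in characteristic $p$ whenever some $b^*_i\equiv 0\pmod p$; no tie-breaking repairs that. The same-$|\veca|_1$ blocks are also genuinely uncontrolled: for $\vecaa\ne\veca$ with $|\vecaa|_1=|\veca|_1$ and all $b^*_i\ge 2$, the monomial $\vecb^*-\veca+\vecaa$ has support exactly $\supp(\vecb^*)$ and degree exactly $|\vecb^*|_1$, so nothing in your minimality hypotheses forces its coefficient to vanish. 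As for \autoref{deriv-are-code}, that lemma concerns columns of the transfer matrix $T(\vec 1)$, not linear independence of a prescribed set of Hasse derivatives of a fixed $f$ as elements of $\F[\vecx]$; there is no direct bridge.

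The fix is to change the derivative family. Take $\vecb^*$ of minimum support $k$ (no secondary minimization is needed) and, for each $S\subseteq\supp(\vecb^*)$, set $\veca_S := \vecb^*|_S$, so $(\veca_S)_i = b^*_i$ for $i\in S$ and $0$ otherwise. Form the $2^k\times 2^k$ matrix
\[
	N_{T,S}\;=\;\coeff{\vecx^{\veca_{\overline T}}}{\deriv_{\vecx^{\veca_S}}(f)}\;=\;c_{\,\veca_{\overline T}+\veca_S}\binom{\veca_{\overline T}+\veca_S}{\veca_S},
	\qquad \overline T:=\supp(\vecb^*)\setminus T.
\]
Since $\supp(\veca_{\overline T}+\veca_S)=\overline T\cup S$, minimality of $k$ gives $N_{T,S}=0$ unless $T\subseteq S$. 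On the diagonal, $\veca_{\overline S}+\veca_S=\vecb^*$ and $\binom{\vecb^*}{\veca_S}=\prod_{i\in S}\binom{b^*_i}{b^*_i}\prod_{i\in\overline S}\binom{b^*_i}{0}=1$, so $N_{S,S}=c_{\vecb^*}\ne 0$ in every characteristic. Thus $N$ is triangular in any linear extension of $\subseteq$ with nonzero diagonal, the $2^k$ derivatives $\{\deriv_{\vecx^{\veca_S}}(f)\}_S$ are independent, and $r\ge 2^k$.
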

\begin{proof}
	The structural results relating diagonal circuits, dimension of partial derivatives, and commutative ROABPs, are due to Saptharishi, as reported in Section 6 of Forbes-Shpilka~\cite{ForbesShpilka12a}.  The small-support monomial result is due to Theorem 6.9\ in Forbes-Shpilka~\cite{ForbesShpilka13}.
\end{proof}

We note that Agrawal-Saha-Saxena~\cite{AgrawalSS12} gave a set of shifts, of polynomial size, that will shift any diagonal circuit to have small-support, while the above shows that no shift is needed.

We now quote the hitting set of Forbes-Shpilka~\cite{ForbesShpilka12a} for ROABPs with variable order $x_1<\cdots<x_n$. Note that commutative ROABPs are ROABPs in \emph{any} variable order, so fixing the variable order is not a restriction in this setting.  While this hitting set is of quasipolynomial size, and the hitting sets of \autoref{sec:commutative} for commutative ROABPs are also of quasipolynomial size, the exponents of these quasipolynomials are qualitatively different in their dependencies on the parameters. The former is of the form $\poly^{\lg n}$, and the latter is always at least $d^{\lg r}$.  Thus, as we are seeking to reduce the number of variables $n$, only the former is in a position to use this improvement.

\begin{theoremwp}[Forbes-Shpilka hitting set~\cite{ForbesShpilka12a}]\label{thm:FS}
	Let $|\F|\ge \poly(n,d,r)$.  There is a $\poly(n,d,r)$-explicit, polynomial map $\GFS_{n,d,r}:\F^{\O(\lg n)}\to\F^n$ of degree $\poly(n,d,r)$, such that for any $f(\vecx)$ computed by $n$-variable, individual degree $<d$, width-$r$ ROABP, we have that $f\not\equiv 0$ iff $f\circ \GFS_{n,d,r}\not\equiv 0$.
\end{theoremwp}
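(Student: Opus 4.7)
The plan is to construct $\GFS_{n,d,r}$ via a merge-and-reduce paradigm: iterate $\lg n$ rounds, each of which compresses two adjacent ``super-variables'' into a single fresh variable, while preserving the non-zeroness of every $\F$-linear functional of the intermediate matrix product. By \autoref{lem:roabp-as-imm}, $f$ is an $\F$-linear combination of entries of $F(\vecx) = \prod_{i \in [n]} M_i(x_i)$ with $M_i(x_i) \in \F[x_i]^{r \times r}$ of degree $<d$; we may pad with identity matrices so $n$ is a power of $2$. Thus it suffices to find a substitution $\vecx \leftarrow \GFS_{n,d,r}(\vect)$ that preserves the $\F$-linear span of the $r^2$ entries of $F$.

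The core gadget is a univariate rank condenser: a polynomial map $E(t) \in \F[t]^2$ of degree $\poly(D, r)$ such that for any $A(y), B(z) \in \F[y]^{r \times r}, \F[z]^{r \times r}$ of degree $<D$, the substitution $(y,z) \leftarrow E(t)$ preserves the $\F(t)$-rank of the coefficient matrix of $N(y,z) = A(y) B(z)$ expressed in the monomial basis $\{y^i z^j : i,j < D\}$. Since this coefficient matrix has rank at most $r^2$, \autoref{lem:rank-condenser-recipe} produces exactly such an $E$: take $H$ to be the parity check matrix of a distance-$(r^2+1)$ Reed--Solomon-like code on the $D^2$ monomials, choose the monomial weightings in $W$ to make $E$ interpolable as a single-variable curve, and observe that all hypotheses of \autoref{lem:rank-condenser-recipe} are met.

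The main construction iterates this gadget using a fresh seed variable $t_k$ per round. The invariant at the end of round $k$ is: there is a matrix product $\prod_{j \in [n/2^k]} N_j^{(k)}(u_j^{(k)})$ in fresh variables $u_j^{(k)}$ with each $N_j^{(k)}$ of individual degree $d_k = \poly(n,d,r)$, such that every $\F$-linear functional of this product is non-zero iff the same functional of $F(\vecx)$ is non-zero. To advance from round $k$ to round $k+1$, pair adjacent super-variables $(u_{2j-1}^{(k)}, u_{2j}^{(k)})$ and apply $E(t_{k+1})$ in parallel to each pair, using the \emph{same} seed variable $t_{k+1}$ across all $j$; by the rank-preserving property of $E$ together with the bilinearity of matrix multiplication, the invariant survives. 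Renaming, this produces a new product in $n/2^{k+1}$ variables. After $\lg n$ rounds, the result is a matrix in a single variable, and composing all $\lg n$ parallel substitutions gives $\GFS_{n,d,r}$ using $\O(\lg n)$ seed variables $(t_1,\ldots,t_{\lg n})$.

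The main obstacle is the degree control: naively, $d_{k+1} = \poly(d_k, r)$ could blow up exponentially in $k$, violating the $\poly(n,d,r)$ degree bound. Resolving this requires the rank condenser's output degree to depend \emph{additively} on the per-round degree increment rather than multiplicatively --- concretely, by choosing the Reed--Solomon dimension and monomial weighting in \autoref{lem:rank-condenser-recipe} so that degrees accumulate polynomially, not exponentially, in the round number. Once this degree stability is established, correctness follows by induction on the invariant, and the hitting set is obtained by brute-force interpolation through any product of $\poly(n,d,r)$-sized sets in the $\O(\lg n)$ seed variables, giving size $\poly(n,d,r)^{\O(\lg n)}$.
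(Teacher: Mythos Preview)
The paper does not prove this statement; it is quoted from \cite{ForbesShpilka12a} as a black box (note the \texttt{theoremwp} environment and the surrounding text ``We now quote the hitting set of Forbes-Shpilka\ldots''). So there is no proof in the paper to compare against. Your sketch is broadly faithful to the merge-and-reduce strategy of the cited work, and the use of \autoref{lem:rank-condenser-recipe} as the reduction gadget is appropriate.

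That said, your treatment of the degree obstacle is not a resolution but a restatement of the problem. With your gadget $E(t)\in\F[t]^2$ of degree $\poly(D,r)$ and the substitution $(y,z)\leftarrow E(t)$, the degree in the new variable satisfies $d_{k+1}\le d_k\cdot\deg E=\poly(d_k,r)$, which after $\lg n$ rounds is $\poly(d,r)^{\poly(\lg n)}$; no choice of Reed--Solomon dimension or monomial weighting in \autoref{lem:rank-condenser-recipe} changes this, because the blow-up comes from composing a degree-$D$ polynomial with a degree-$\poly(D,r)$ map. The fix in \cite{ForbesShpilka12a} is structurally different from what you describe: one separates the \emph{seed} variable $t_k$ from the \emph{forward} variable passed to the next round. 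The rank condenser is applied as a linear map on coefficient vectors, sending a space of $d_k^2$ monomials to at most $r^2$ new monomials; hence the forward variable has degree $<r^2$ after every round regardless of $d_k$, so $d_k$ stabilizes at $r^2$ after the first step. The seed variables $t_1,\ldots,t_{\lg n}$ each acquire degree $\poly(d,r)$ and are kept disjoint, giving the claimed $\O(\lg n)$ inputs of degree $\poly(n,d,r)$. Without this separation of roles, your recursion does not yield a map of polynomial degree.

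A smaller point: the claim that a single seed $t_{k+1}$ suffices across all $n/2^k$ pairs is correct, but ``bilinearity of matrix multiplication'' is not quite the reason. The argument is that for each pair $j$ the rank-preservation failure locus is a proper subvariety in $t_{k+1}$ over the field $\F(\text{other variables})$, so a generic (symbolic) $t_{k+1}$ avoids all of them simultaneously; equivalently, one processes the pairs one at a time over progressively larger function fields.
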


Finally, we will need the usage of good hash functions, and we now define the particular property we need.

\begin{definition}
	A family $\cH=\{h:[n]\to[m]\}$ is a \emph{$(n,m,\ell)$-perfect hash function family} if for every $S\subseteq[n]$ of size $|S|\le \ell$, there is an $h\in\cH$ such that $h$ is injective on $S$.
\end{definition}

There is an extensive literature on such hash functions.  Mehlhorn~\cite{Mehlhorn82} gave matching lower and (non-constructive) upper bounds on the size of $\cH$ in such a family.  Schmidt and Siegel~\cite{SchmidtSiegel89} gave optimal constructive upper bounds in a certain regime of parameters.  However, for our range of parameters we can more simply use pairwise independent hash functions, as seen in the following standard lemma, which we instantiate with explicit pairwise independent hash functions (for example, see Carter and Wegman~\cite{CW79} or the survey of Vadhan~\cite{Vadhan12}).

\begin{lemmawp}\label{perfect-hash-via-pairwise}
	Let $\cH$ be a family of pairwise independent hash functions from $[n]\to[m]$, where $m\ge \ell^2$.  Then $\cH$ is a $(n,m,\ell)$-perfect hash function family. In particular, there is an $\poly(n,m,\ell)$-explicit $(n,m,\ell)$-perfect hash family with $m=2^{\ceil{\lg \ell^2}}$ and $|\cH|\le\poly(n,m,\ell)$.
\end{lemmawp}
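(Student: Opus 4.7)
The plan is a textbook union-bound argument followed by an explicit instantiation via a standard pairwise-independent family.

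For the first assertion, fix any $S \subseteq [n]$ with $|S| \le \ell$. For each unordered pair $\{x,y\} \subseteq S$ with $x \ne y$, pairwise independence gives $\Pr_{h \in \cH}[h(x) = h(y)] = 1/m$. A union bound over the $\binom{|S|}{2} < \ell^2/2$ such pairs yields
\[
\Pr_{h \in \cH}[\, h|_S \text{ is not injective}\,] \;<\; \frac{\ell^2}{2m} \;\le\; \frac{1}{2} \;<\; 1,
\]
using the hypothesis $m \ge \ell^2$. Hence there exists some $h \in \cH$ that is injective on $S$, which is exactly the defining property of $\cH$ being an $(n,m,\ell)$-perfect hash family.

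For the ``in particular'' part, I would invoke a standard explicit pairwise-independent construction. Set $m = 2^{\ceil{\lg \ell^2}}$ (so $\ell^2 \le m < 2\ell^2$), let $k = \max(\ceil{\lg n}, \lg m)$, and let $q = 2^k$, so that $q \ge \max(n,m)$, the integer $m$ divides $q$, and $q = \poly(n,\ell)$. Identify $[n]$ with a subset of $\F_q$ and define $\cH = \{ h_{a,b}\}_{a,b \in \F_q}$ by $h_{a,b}(x) = \pi(ax+b)$, where $\pi : \F_q \to [m]$ is the truncation to the leading $\lg m$ bits in the natural $\F_2^k$-representation of $\F_q$. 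The family $\{ax+b\}_{a,b \in \F_q}$ is well-known to be pairwise independent on $\F_q$, and since $\pi$ is balanced (every element of $[m]$ has exactly $q/m$ preimages), the composition remains pairwise independent as a family from $[n]$ to $[m]$. The family has size $q^2 = \poly(n,m,\ell)$ and each member is evaluable in time $\poly(k)$, making $\cH$ $\poly(n,m,\ell)$-explicit. Applying the first part to this $\cH$ gives the ``in particular'' conclusion.

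There is no real obstacle here; the only bookkeeping point is ensuring $\pi$ is balanced, which is why we pick $m$ to be a power of $2$ dividing $q$. One could equally well invoke any of the constructions cited in the text (\emph{e.g.}\ Carter--Wegman).
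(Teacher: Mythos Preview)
Your proof is correct. Note, however, that the paper does not actually give a proof of this lemma: it is stated as a \texttt{lemmawp} (lemma without proof), introduced in the preceding text as ``the following standard lemma, which we instantiate with explicit pairwise independent hash functions (for example, see Carter and Wegman~\cite{CW79} or the survey of Vadhan~\cite{Vadhan12}).'' Your union-bound argument and $ax+b$-over-$\F_{2^k}$ instantiation are precisely the standard proof the paper is alluding to, so there is nothing to compare --- you have simply filled in what the authors chose to omit.
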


We now turn to the actual construction.  One part of the construction will simply interpolate through the different hash functions in the hash family.  More interestingly, we will hash the $n$ variables into buckets, giving each bucket its own SV-generator.

\begin{construction}\label{const:hashing}
	Let $\cH$ be a $(n,m,\ell)$-perfect hash family. Let $|\F|>|\cH|,n$. Associate a distinct element $\eta_h\in\F$ for each $h\in\cH$. Define $\cGH:(\F^2)^m\times\F\to\F^n$ by
	\[
		\left(\cGH_{n,m,\ell}(y_1,\ldots,y_m,z_1,\ldots,z_m,u)\right)_i
		\eqdef \sum_{h\in\cH} \left(\GSV{n,1}(y_{h(i)},z_{h(i)})\right)_i \cdot \ind{u=\eta_h}
	\]
	for $i\in[n]$, where $\ind{w=\eta_h}$ denotes the Lagrange interpolation polynomial for the points $\{\eta_h\}_{h\in\cH}$, of degree $\le|\cH|$.
\end{construction}

In words, for every function $h\in \cH$ we assign the SV generator on variables $(y_{k},z_{k})$ to the $x$-variables that were mapped by $h$ to the $k$-th bucket. We now give the desired properties of this construction.

\begin{lemma}\label{lem:hashing}
	Assume the setup of \autoref{const:hashing}, and the setup of the SV-generator of \autoref{thm:SV generator}. Let $f(\vecx)\in\F[\vecx]$ be computed by a width-$r$, individual degree $<d$, commutative ROABP.  Suppose that $f$ has a nonzero support-$(\le \ell)$ monomial.  Then there is a value of $\alpha\in\F$ such that $f\circ \cGH_{n,m,\ell}(\vecy,\vecy^{dn^2},\alpha)$ is a nonzero width-$r$, individual degree $<d^2n^4$, commutative ROABP in $m$ variables.
\end{lemma}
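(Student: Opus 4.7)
The plan is to choose $\alpha = \eta_{h_0}$ where $h_0 \in \cH$ is a hash function injective on the support $S$ of the promised nonzero support-$(\le \ell)$ monomial $\vecx^{\veca}$ of $f$. Since $\{\ind{u = \eta_h}\}_{h \in \cH}$ is a Lagrange interpolation basis, this choice of $\alpha$ kills every summand of $\cGH$ except the $h_0$-term, so each coordinate becomes $(\GSV{n,1}(y_{h_0(i)}, z_{h_0(i)}))_i = y_{h_0(i)} \cdot \ind{z_{h_0(i)} = \xi_i}$. It is convenient to separate the analysis into two stages: first understand $g(\vecy, \vecz) \eqdef f(\cGH(\vecy, \vecz, \eta_{h_0}))$, and then treat the further substitution $z_k \mapsto y_k^{dn^2}$.

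For non-zeroness of $g$, I would mimic \autoref{lem:GSV-hit-ss}: for each bucket $k$ that contains the (necessarily unique) element $i_k \in S$ under $h_0$, set $z_k = \xi_{i_k}$, and for every remaining bucket $k$ set $z_k = \xi_0$. Then for $i \in S$, $x_i \mapsto y_{h_0(i)}$ --- distinct $\vecy$ variables, by injectivity of $h_0$ on $S$ --- while for $i \notin S$ the coordinate becomes $0$ since $\xi_0, \xi_{i_k} \ne \xi_i$. As $\vecx^{\veca}$ has support contained in $S$, it is sent to a nonzero monomial in these distinct $\vecy$-variables, and no other monomial of $f$ with larger support survives. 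So this specialization of $g$ is nonzero, and hence $g \not\equiv 0$.

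Next, I would show that the substitution $z_k \mapsto y_k^{dn^2}$ preserves non-zeroness. Writing $g(\vecy, \vecz) = \sum_{\vec{b}, \vec{e}} c_{\vec{b}, \vec{e}} \vecy^{\vec{b}} \vecz^{\vec{e}}$, the substitution sends each term to $\vecy^{\vec{b} + dn^2 \vec{e}}$, which is injective on exponent-pairs provided that the degree of each $y_k$ in $g$ is strictly less than $dn^2$. In the expression defining $g$, each $x_i$ appears linearly in $y_{h_0(i)}$; since $f$ has individual degree $< d$ and at most $n$ variables of $\vecx$ are hashed to any one bucket, the degree of each $y_k$ in $g$ is at most $n(d-1) < dn^2$. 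Thus the distinct monomials of $g$ remain distinct in $\vecy$ after substitution, so no cancellation occurs and $f \circ \cGH(\vecy, \vecy^{dn^2}, \eta_{h_0}) \not\equiv 0$.

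The width, read-once, and commutativity claims follow from a bucketing argument. Given any ordering $\pi$ of $[m]$, consider the $\vecx$-ordering that lists first all $i$ with $h_0(i) = \pi(1)$, then all $i$ with $h_0(i) = \pi(2)$, and so on. As $f$ is a commutative ROABP, it admits a width-$r$ ROABP in this $\vecx$-ordering; substituting each edge label $x_i \mapsto y_{h_0(i)} \cdot \ind{y_{h_0(i)}^{dn^2} = \xi_i}$ preserves width $r$, and since consecutive layers within a bucket read the same $y$-variable they may be collapsed, yielding a width-$r$ ROABP in the desired $\vecy$-order $\pi$. For the individual degree bound, each $x_i$ becomes a polynomial in $y_{h_0(i)}$ of degree $\le 1 + n \cdot dn^2 = dn^3 + 1$ (Lagrange polynomial of degree $\le n$ in $z$, composed with $y^{dn^2}$, times $y$), and with at most $n$ variables per bucket each of $f$-exponent $<d$, the resulting degree of each $y_k$ is at most $n(d-1)(dn^3+1) < d^2 n^4$. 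The most delicate point is the $z \mapsto y^{dn^2}$ step, which is what forces the use of a separate $\GSV{n,1}$ per bucket (rather than a single SV-generator across all buckets) so that the degree of $g$ in each $y_k$ stays strictly below the decoupling threshold $dn^2$.
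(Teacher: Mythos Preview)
Your proof is correct and follows essentially the same approach as the paper: choose $\alpha=\eta_{h_0}$ for a hash $h_0$ injective on the support of the promised monomial, verify nonzeroness by specializing the $z_k$'s to isolate that monomial (as in \autoref{lem:GSV-hit-ss}), push through the Kronecker substitution $z_k\mapsto y_k^{dn^2}$ via the degree bound, and obtain the commutative ROABP structure by reordering $\vecx$ according to the buckets of $h_0$ under an arbitrary $\pi$. Your degree accounting is slightly more careful than the paper's (you track $\deg_{y_k} g \le n(d-1)$ rather than the looser $<dn^2$), but the argument is otherwise the same.
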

\begin{proof}
	\uline{$f\circ \cGH\not\equiv 0$:} Let the nonzero support-$(\le\ell)$ monomial be $\vecx^\veca$.  Let $S=\supp(\veca)\subseteq[n]$, thus $|S|\le\ell$.  It follows that there is some hash $h\in\cH$ such that $h$ is injective on $S$.  Let $\alpha\leftarrow\eta_h$.  Thus, we have that $(\cGH_{n,m,\ell}(\vecy,\vecz,\alpha))_i=(\GSV{n,1}(y_{h(i)},z_{h(i)}))_i$.  For $j\in[m]$, set $z_j\leftarrow\xi_{h^{-1}(j)\cap S}$, where $h^{-1}(j)\cap S$ is the unique element in $S$ that hashes to $j$, or otherwise we take it to be $0$. Call this assignment $\vecz\leftarrow\vec{\xi}_0$.
	
	Recalling that taking $z_j\leftarrow \xi_0$ zeroes out the corresponding $y_j$, and that taking $z_j\leftarrow \xi_k$ makes $y_j$ only appear in the $k$-th output slot, we see that $\cGH_{n,m,\ell}(\vecy,\vec{\xi}_0,\alpha))$ places distinct variables of $\vecy$ into the set $\supp(\vecx^\veca)$, and otherwise outputs 0.  It follows that $f\circ \cGH_{n,m,\ell}(\vecy,\vec{\xi}_0,\alpha))$ contains exactly those monomials in $f$ whose support is contained in $\supp(\vecx^\veca)$, and there is no cancellation.  Therefore, the nonzero monomial $\vecx^\veca$ survives this assignment, thus $f\circ \cGH_{n,m,\ell}(\vecy,\vec{\xi}_0,\alpha))\not\equiv 0$ and hence $f\circ \cGH_{n,m,\ell}(\vecy,\vecz,\alpha))\not\equiv 0$.

	Now note that $f$ has individual degree $<d$, and thus total degree $<dn$. The generator $\cGH_{n,m,\ell}$ has individual degree $\le n$ in $\vecy$ and $\vecz$.  It follows that $f\circ \cGH_{n,m,\ell}(\vecy,\vecz,\alpha)$ has individual degree $<dn^2$ in each variable in $\vecy$ and $\vecz$.  Thus, applying the Kronecker substitution it follows that we can substitute $\vecz\leftarrow\vecy^{dn^2}$ and preserve nonzero-ness.

	\uline{individual deg $<d^2n^4$:} As argued above, before the substitution ``$\vecz\leftarrow\vecy^{dn^2}$'' we had individual degree $<dn^2$, and the substitution multiplies this by another $dn^2$ factor.

	\uline{$f\circ \cGH$ has a small commutative ROABP:} Let $\pi:[m]\to[m]$ be any permutation, so that we wish to show that $f\circ \cGH$ has a small ROABP in the variable order $y_{\pi(1)},\ldots, y_{\pi(m)}$. Note that $\pi$ induces an partial order $\prec$ on $[n]$ via the hash function $h$ chosen above.  That is, say that for $i,j\in[n]$ that $i\prec j$ if $\pi^{-1}(h(i))<\pi^{-1}(h(j))$. Now extend $\prec$ to a total order on $[n]$ arbitrarily, and also call the induced permutation $\sigma:[n]\to[n]$. Thus, $\sigma$ is the permutation that orders the variables according to $\prec$.

	As $f$ has a commutative ROABP, it follows that $f(\vecx)$ is a linear function of the entries of a matrix-valued ROABP $F(\vecx)=\prod_{i\in[n]} M_i(x_{\sigma(i)})$ for $M_i(x_{\sigma(i)})\in \F[x_{\sigma(i)}]^{r\times r}$ of individual degree $<d$. Now we plug in the above generator, and this yields that $f\circ \cGH_{n,m,\ell}(\vecy,\vecy^{dn^2},\alpha)$ is a linear combination of the entries of
	\begin{align*}
		\prod_{i\in[n]} M_i\left( \left(\cGH_{n,m,\ell}(\vecy,\vecy^{dn^2},\alpha)\right)_{\sigma(i)}\right)
		&=\prod_{i\in[n]} M_i\left( \GSV{n,1}\left(y_{h(\sigma(i))},y^{dn^2}_{h(\sigma(i))}\right)\right)\\
		&=\prod_{j\in[m]} \prod_{h(\sigma(i))=\pi(j)} M_i\left( \GSV{n,1}\left(y_{\pi(j)},y_{\pi(j)}^{dn^2}\right)\right)\\
		&=\prod_{j\in[m]} M_j'(y_{\pi(j)})
	\end{align*}
	where the above products are always increasing in $i\in[n]$ and $j\in[m]$ (so that this respects the non-commutativity of this matrix product, via the choice of $\pi$ and $\sigma$), and 
	\[
		M_j'(y_{\pi(j)})\eqdef \prod_{h(\sigma(i))=\pi(j)} M_i\left( \GSV{n,1}\left(y_{\pi(j)},y_{\pi(j)}^{dn^2}\right)\right)
	\]
	so that $M_j'(y_{\pi(j)})\in\F[y_{\pi(j)}]^{r\times r}$.  It follows that we have expressed $f\circ \cGH$ as a width-$r$ ROABP in the order defined by $\pi$, and as $\pi$ was arbitrary, it follows that $f\circ \cGH$ has a width-$r$ commutative ROABP.
\end{proof}

Thus, we have shown that commutative ROABPs with small-support monomials can have their number of variables essentially reduced to the size of that small-support monomial.  We now combine this with the results of Forbes-Shpilka~\cite{ForbesShpilka12a} to derive our improved hitting sets.

\begin{theorem}\label{nlglgn}
	Let $|\F|\ge\poly(n,d,r,\ell)$. There is a $\poly(n,d,r,\ell)$-explicit, polynomial map $\cGHFS_{n,d,r,\ell}:\F^{\O(\lg\ell)}\to\F^n$ of degree $\poly(n,d,r,\ell)$, such that for any $f(\vecx)\in\F[\vecx]$ computed by a width-$r$, individual degree $<d$ commutative ROABP, where $f\not\equiv 0$ iff $f$ has a nonzero support-$(\le\ell)$ monomial, we have that $f\not\equiv 0$ iff $f\circ \cGHFS_{n,d,r,\ell}\not\equiv 0$.

	In particular, when $|\F|\ge\poly(n,d,s)$, the class of size-$s$, $n$-variable, total degree $d$ diagonal circuits has $\poly(n,d,s)$-explicit hitting sets of size $\poly(n,d,s)^{\O(\lg\lg(ds))}$.  More generally, when $|\F|\ge\poly(n,r)$, the class of $n$-variate polynomials $f$ with $\dim(\deriv f)\le r$ has $\poly(n,r)$-explicit hitting sets of size $\poly(n,r)^{\O(\lg\lg r)}$.

	Similarly, if $|\F|\ge \poly(n,d,r)$, the class of width-$r$, $n$-variable, individual degree $<d$ commutative ROABPs has a $\poly(n,d,r)$-explicit hitting set of size $\poly(n,d^{\lg r})\cdot\poly(n,d,r)^{\O(\lg\lg r)}$.
\end{theorem}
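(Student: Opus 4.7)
The plan is to build $\cGHFS_{n,d,r,\ell}$ by composing the hashing generator $\cGH_{n,m,\ell}$ from \autoref{const:hashing} (instantiated via the pairwise-independent perfect hash family of \autoref{perfect-hash-via-pairwise} with $m = 2^{\lceil \lg \ell^2 \rceil} = \O(\ell^2)$) with the Forbes--Shpilka ROABP generator $\GFS_{m,d',r}$ from \autoref{thm:FS}. By \autoref{lem:hashing}, for any commutative ROABP $f$ of width $r$ and individual degree $<d$ that has a nonzero support-$\le\ell$ monomial, there is a scalar $\alpha$ such that $f\circ \cGH_{n,m,\ell}(\vecy,\vecy^{dn^2},\alpha)$ is a nonzero commutative ROABP in only $m = \O(\ell^2)$ variables, of width $r$ and individual degree $<d^2n^4$. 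Since $\GFS_{m,d',r}$ fools arbitrary ROABPs in $m$ variables using only $\O(\lg m) = \O(\lg \ell)$ seed variables (with the seed variables and $\alpha$ being of $\poly(n,d,r,\ell)$ degree), interpolating both this seed and the variable $u$ that ranges over the values $\{\eta_h\}_{h\in\cH}$ yields $\cGHFS_{n,d,r,\ell}$ with seed length $\O(\lg \ell)$, as required. The corresponding hitting set has size $\poly(n,d,r,\ell)^{\O(\lg \ell)}$.

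For the diagonal circuit bound, \autoref{lem:rank-conc-low-rank} says that any diagonal circuit of size $s$ and degree $d$ has $\dim(\deriv(f)) \le \poly(s,d)$, which means it is computable by a commutative ROABP of width $r = \poly(s,d)$ and individual degree $<d+1$, and further that $f \not\equiv 0$ iff $f$ has a nonzero monomial of support $\le \lg r = \O(\lg(sd))$. No shift is needed. Plugging $\ell = \O(\lg(sd))$ and $r = \poly(s,d)$ into $\cGHFS$ gives a hitting set of size $\poly(n,d,s)^{\O(\lg\ell)} = \poly(n,d,s)^{\O(\lg\lg(ds))}$, and the same argument works whenever $\dim(\deriv(f)) \le r$.

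For the general commutative ROABP bound, the promise of a small-support monomial is not free, so we first shift to establish it. By \autoref{thm:rank-concentration-comm}, shifting by a $(\floor{\lg r^2}+1)$-wise independent, individual degree $<d$ monomial map yields support-$\floor{\lg r^2}$ rank concentration. Using the KS-based monomial map of \autoref{mon-map-KS} (with $\ell = \floor{\lg r^2}+1$) requires interpolating through $\poly(n,d)^{\O(\lg r\cdot \log_{nd} d)} = \poly(n, d^{\lg r})$ shifts. After the shift, the resulting polynomial satisfies the hypothesis of our new generator with $\ell = \O(\lg r)$, so by the first part of the theorem it can be hit with an additional $\poly(n,d,r)^{\O(\lg\lg r)}$ cost. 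Multiplying the shift cost by the post-shift hitting cost yields the claimed bound $\poly(n,d^{\lg r})\cdot\poly(n,d,r)^{\O(\lg\lg r)}$.

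The main technical obstacle is ensuring that the composition $f\circ\cGH$ preserves the commutative-ROABP structure in the new variables, so that the Forbes--Shpilka generator can subsequently be applied; this is precisely what \autoref{lem:hashing} establishes by exploiting commutativity to reorder the original variables according to the buckets of the chosen hash function, making each bucket a read-once block in the new variable $y_{h(i)}$. The blow-up in individual degree from $d$ to $d^2n^4$ under the Kronecker substitution $\vecz \leftarrow \vecy^{dn^2}$ is harmless because $\GFS_{m,d',r}$ depends only polynomially on $d'$. Everything else is assembling parameter bounds.
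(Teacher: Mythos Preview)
Your proposal is correct and follows essentially the same approach as the paper: compose $\cGH_{n,\O(\ell^2),\ell}$ (instantiated via \autoref{perfect-hash-via-pairwise}) with $\GFS$ using \autoref{lem:hashing} to reduce to $\O(\ell^2)$ variables, then invoke \autoref{lem:rank-conc-low-rank} for the diagonal/low-rank case and the KS-based shift of \autoref{thm:rank-concentration-comm} with \autoref{mon-map-KS} for general commutative ROABPs. The only cosmetic difference is that the paper leaves $u$ symbolic (noting nonzeroness for some $\alpha$ implies nonzeroness in $\F(u)$) rather than explicitly interpolating it, which amounts to the same thing.
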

\begin{proof}
	\uline{$\impliedby:$} Clear.

	\uline{$\implies:$} By assumption, $f\not\equiv0$ and has a support-$(\le\ell)$ monomial.  We can instantiate the hashing generator $\cGH$ of \autoref{lem:hashing} with the perfect hashing family of \autoref{perfect-hash-via-pairwise}, so that $f\circ \cGH_{n,\O(\ell^2),\ell}$ is nonzero and computable by a width-$r$, individual degree $\poly(d,n)$, commutative ROABP on $\O(\ell^2)$ variables, for some value of the variable $u$. It thus follows from \autoref{thm:FS} that $f\circ \cGH_{n,\O(\ell^2),\ell}\circ\GFS_{\O(\ell^2),\poly(n,d),r}\not\equiv 0$, as this was true for some value of $u$, so is also true when $u$ is symbolic. Taking $\cGHFS\eqdef \cGH\circ \GFS$ with the relevant parameters, we see that this generator has the desired explicitness, degree bound, and number of variables.

	\uline{diagonal/low-rank circuits:} \autoref{lem:rank-conc-low-rank} recalled that nonzero low-rank polynomials have, without shifting, a nonzero monomial of support $\O(\lg(sd))$ (or $\O(\lg r)$ for low-rank circuits), and are computed by commutative ROABPs of width-$\poly(s,d)$ (or width-$\poly(r)$ for low-rank circuits), so applying the above generator and interpolating gives the result.

	\uline{low individual degree commutative ROABPs:} Given such an $f$, we can shift it to have support-$\O(\lg r)$ rank concentration, by instantiating \autoref{thm:rank-concentration-comm} with the $\O(\lg r)$-wise independent monomial map from the KS-generator given in \autoref{mon-map-KS}.  This $\poly(n,d,r)$-explicit monomial map has individual degree $\poly(n,d)$, and $\O(\lg r\cdot\log_{nd} d)$ variables.  As this gives a small-support monomial, we can then hit $f$ by adding in the $\cGHFS$ generator above, and then interpolate the entire polynomial.  The KS-generator will use $\poly(n,d^{\lg r})$ possible values, and this will multiply on the number of values from the $\cGHFS$ generator.
\end{proof}

Note that the last result on commutative ROABPs becomes $(nr)^{\O(\lg\lg r)}$ when $d=\O(1)$.

\section*{Acknowledgements}

This work was done while the first two authors were visiting the third.  The first two authors would like to thank Chandan Saha for explaining the work of Agrawal-Saha-Saxena~\cite{AgrawalSS12} to them.

\bibliographystyle{alphaurl}
\bibliography{bibliography}

\end{document}